\documentclass[12pt]{article}
\usepackage[utf8]{inputenc}
\usepackage{latexsym}
\usepackage{soul}
\usepackage{amssymb,amsmath, amsthm}
\usepackage{pifont}
\usepackage{listings}
\usepackage{mathrsfs}
\usepackage[pdftex]{graphicx}
\usepackage{tikz}
\usetikzlibrary{intersections, calc}
\usetikzlibrary{patterns}
\usetikzlibrary{3d}
\usepackage{changepage}
\providecommand{\keywords}[1]{\par\vspace{0.5em}\noindent\textbf{#1}\par\vspace{0.5em}}
\usepackage{setspace}
\usepackage{geometry}  
\usepackage{hyperref}  
\usepackage[utf8]{inputenc}
\usepackage[affil-it]{authblk} 
\usepackage{algorithm}
\usepackage{algorithmicx}
\usepackage{algcompatible}
\usepackage{threeparttable}
\usepackage{pdflscape}

\usepackage{etoolbox}
\usepackage{lmodern}
\usepackage[most]{tcolorbox}
\usepackage{pst-node}
\usepackage{tikz-cd} 
\usepackage{empheq}
\usepackage{cancel}
\usepackage{ dsfont }
\usepackage{enumitem}
\usepackage{chngcntr}
\usepackage{ mathrsfs }
\usepackage{thmtools}
\usepackage{bbm}
\usepackage{booktabs}
\usepackage{subcaption}

\allowdisplaybreaks

\usepackage{titling}

\usepackage{etoolbox}
\AtBeginEnvironment{proof}{\setlength{\parindent}{10pt}}

\usepackage{natbib}
\theoremstyle{definition}

\newtheorem{remark}{Remark}

\theoremstyle{plain}
\newtheorem{assumption}{Assumption}
\newtheorem{definition}{Definition}

\newtheorem{theorem}{Theorem}

\newtheorem{lemma}{Lemma}

\DeclareMathOperator{\sgn}{{sgn}}

\DeclareMathOperator{\1}{\mathbbm{1}}

\usepackage{array} 
\newcolumntype{C}[1]{>{\centering\arraybackslash}p{#1}}

\usepackage[most]{tcolorbox}

\tcbset{
    breakable,
    enhanced,
    arc=0mm,
    outer arc=0mm,
    boxrule=0.3mm 
}

\makeatletter
\DeclareRobustCommand{\varamalg}{%
  \mathbin{\mathpalette\var@malg\perp}%
}
\newcommand{\succprec}{\mathrel{\mathpalette\succ@prec{\succ\prec}}}
\newcommand{\precsucc}{\mathrel{\mathpalette\succ@prec{\prec\succ}}}

\newcommand{\succ@prec}[2]{\succ@@prec#1#2}
\newcommand{\succ@@prec}[3]{%
  \vcenter{\m@th\offinterlineskip
    \sbox\z@{$#1#3$}%
    \hbox{$#1#2$}\kern-0.4\ht\z@\box\z@
  }%
}
\newcommand\var@malg[2]{%
  \rlap{$\m@th#1#2$}\mkern6mu{#1#2}%
}
\makeatother

\usepackage{float}

\hypersetup{hidelinks}

\title{Estimating Social Norm Complementarities\thanks{We thank seminar participants at Harvard and Princeton QSSC for helpful comments. Sebastian Garcia-Torres and Alessandro Fusari provided excellent research assistance. }}

\date{\today}

\author{Eliana La Ferrara\thanks{Harvard Kennedy School, NBER, CEPR and LEAP.} \and Cheaheon Lim\thanks{Department of Economics, Harvard University.} \and Davide Viviano\thanks{Department of Economics, Harvard University}}

\begin{document}

\maketitle

\begin{abstract}

We develop  a model of choice over social norms that allows for complementarities along two dimensions: \textit{technological}, analogous to complementarities between consumption goods, and \textit{social}, capturing  returns from conformity. Together, these determine whether two norms are complements, substitutes, or independent, as defined by how the equilibrium prevalence of one norm responds to a marginal shift in the utility of another. We estimate the model using repeated cross-sections from Sierra Leone and Nigeria, focusing on female genital cutting, polygyny, and child marriage. Social returns are significant across all specifications. For female genital cutting and child marriage, we find evidence of complementarities, especially strong in Sierra Leone. For polygyny and child marriage,  we find evidence of social substitutability, particularly in Nigeria. We interpret these differences using insights from anthropology. Finally, we iterate the model forward to study policy counterfactuals, assessing the potential effects of legal reforms and social interventions.
\end{abstract}

\keywords{\textit{Keywords:} Social norms, Social interactions, norm complementarities, social spillovers}

\newpage 

\onehalfspacing


\section{Introduction} 

A large literature in economics documents the influence of social norms on individual behavior. This spans a wide array of norms, including norms of cooperation and reciprocity \citep[e.g.,][]{fehr1999theory, fehr_2000, kandori_1992}, trust and social capital \citep[e.g.,][]{alesina2002trusts, guiso2004role}, fertility and cultural transmission \citep[e.g.,][]{bisin2001economics, fernandez2009culture}, risk-sharing and solidarity \citep[e.g.,][]{townsend1994risk, jakiela2016does}, and a range of harmful gender practices, such as child marriage \citep[e.g.,][]{field2008early, corno2020age, buchmann2023signal}, female genital cutting \citep[e.g.,][]{efferson2015fgc, Gulesci_etal_AER_2025}, and restricting female labor force participation \citep[e.g.,][]{jayachandran2021microentrepreneurship}. Despite the breadth of these contributions, the standard approach studies norms one at a time: the equilibrium, persistence, and policy responsiveness of each norm are analyzed as if compliance with a given norm is unaffected by compliance with other norms.\footnote{A recent exception is \citet{Bargain_2026}, who study how mixtures of ancestral practices relate to women's empowerment across countries.}

This paper argues that norms may be \textit{interdependent}: the returns from adhering to one norm can depend on the prevalence of another, similar to how the utility of consuming one good can depend on the level of consumption of another good.  In the case of \textit{complementarities}, the returns from complying with two norms simultaneously exceed the sum of the returns from complying with each norm in isolation. This is likely the case, for example, with norms of cooperation and enforcement, which intuitively reinforce each other \citep[e.g.,][]{kandori_1992, greif1993contract, henrich2001search, tabellini_2008scope}. In the case of \textit{substitutabilities}, complying with one norm reduces the returns from complying with another.  Norms about rewarding excellence and ``don't stand out" norms is one natural setting where substitutabilities can emerge  \citep[e.g.,][]{jakiela2016does, kandel1992peer, ichino2000work, almaas2020cutthroat, cappelen2020fair}. 

Modeling the interdependence among different norms can be quite consequential, from both a theoretical and a policy perspective. If two norms are complements, a shock that lowers adherence to one (e.g., a policy reform) will propagate to the other through both technological and social channels, amplifying the overall policy response. If they are substitutes, targeting one norm with the aim of decreasing its prevalence may induce reallocation toward the other, thus offsetting---at least in part---the intended policy goal. The sign and magnitude of these cross-norm spillovers are central to the design of interventions aimed at changing norm compliance.

We propose a model of choice over \textit{bundles} of norms where utility depends on an individual's intrinsic valuation of the norm (or bundle of norms), as well as a social payoff that varies with the fraction of others who comply with the norm (or bundle of norms). This model extends the framework of \citet{Brock_Durlauf_Restud_2001} from binary to bundled choice and formalizes two channels through which complementarities and substitutabilities can emerge among norms. The first channel is \textit{technological}, which captures the degree to which the intrinsic utility of adhering to both norms exceeds the sum of the intrinsic utilities of adhering to each separately. This mirrors the standard notion of complementarity between consumption goods \citep[e.g.,][]{Gentzkow_AER_2007}. The second channel is \textit{social}, capturing whether the returns to social conformity are super-additive or sub-additive in joint adoption. The former, for instance, may correspond to a setting where adherents of both norms impose sanctions on non-adopters that exceed the sum of the sanctions that single-norm adherents impose independently on non-adopters. That is, we assume that individuals can choose among four options: norm $A$ only,  norm $B$ only, both norms $AB$, or  neither $\emptyset$. Each choice generates two kinds of utility: an intrinsic payoff from the choice itself and a social payoff from conforming to what others do. As described earlier, the former reflects the technological channel, while the latter reflects the social channel. 

\medskip
Our main theoretical result shows that, at a stable equilibrium, whether two norms are complements, substitutes, or independent depends on the joint direction of these two forces. Two norms are complements under two conditions. First, joint adoption is intrinsically more valuable, which is captured by the  utility of joint adoption exceeding the sum of individual utilities. Second, because of the equilibrium effects, either the social return to joint conformity is at least as large as the sum of the separate conformity returns, or intrinsic payoffs are exactly additive but social returns are strictly super-additive. Conversely, two norms are substitutes when joint adoption is intrinsically less valuable and the social return to joint conformity is weakly sub-additive,  or when intrinsic payoffs are exactly additive but social returns are strictly sub-additive. The norms are independent if both intrinsic and social payoffs from joint adoption are exactly additive. This result nests the characterization in Gentzkow (2007) as the special case in which social spillovers are absent.

We take a dynamic version of the model to the data by estimating it using repeated cross-sections from various rounds of the Demographic and Health Surveys (DHS) in Sierra Leone (2008-2019) and Nigeria (2003--2024). For identification, we exploit the imposition of child marriage bans in the two countries, which we use as an exogenous shifter in the utility of child marriage conditional on covariates, fixed effects, and lagged adoption shares.\footnote{For Nigeria, we additionally use exposure to droughts during teenage years, which has been shown to increase incentives for child marriage in societies practicing bride price \citep{corno2020age}. We do not use droughts in Sierra Leone because they turn out not to be predictive of child marriage in that context.} Our key identifying assumption is that the ban shifts the baseline utility of child marriage, but not the utility of the other norms, conditional on included covariates, group fixed effects, and lagged adoption shares.

Our main empirical findings can be summarized as follows. We start by focusing on the bundle ``female genital cutting, child marriage" (FGC, CMA) in Sierra Leone. We find positive social multipliers  for both norms, with the size of multiplier being much larger for FGC---consistent with the fact that FGC is a highly coordinated practice, performed as part of a secret society initiation (``Bondo''). Our estimates suggest that FGC and CMA are strong \textit{complements} in Sierra Leone, which aligns with the fact that Bondo initiation happens during adolescence and is considered a segway into (early) marriage.  When we perform the analogous exercise in Nigeria, we find that the degree of complementarity, albeit positive, is significantly smaller than in Sierra Leone.\footnote{This is consistent with the fact that the decision to comply with FGC and CMA is temporally disjoint in Nigeria, with FGC being performed at much younger ages.}


We also consider two policy counterfactuals in Sierra Leone that reduce, respectively, (i) the intrinsic utility, or (ii) the social returns, of a given norm. We find that policies of type-(i) are very effective at reducing the prevalence of \textit{both} norms when they target the intrinsic utility of FGC, and much less so when they target CMA. Type-(ii) policies are also effective when they take the form of reducing the FGC social multiplier: they reduce both FGC-only and joint FGC-CMA adoption, consistent with the two norms being complements. Indeed, our simulations suggest that policy interventions that target social spillovers can be as effective as intervening on intrinsic utilities.

Turning to the bundle ``polygyny, child marriage" (PGY, CMA), we start by focusing on Nigeria. As in the previous cases, we estimate positive and significant social multipliers  for both norms. However, we find statistically insignificant technological complementarities and sub-linear social returns, in which case our characterization result implies that the two norms are \textit{substitutes}.\footnote{We perform the analogous exercise for Sierra Leone in the Appendix, finding similar results (i.e., the two norms are substitutes).}  Because we focus on the decisions that families make for women near the time of marriage, we classify individuals as being in a polygynous union only if the wife is the second or later wife. In this context, the relatively lower status enjoyed by non-first wives likely conflicts with the incentives parents have to marry their daughters as children. As a consequence, the impact of counterfactual policy interventions is smaller than in the case of FGC and CMA in Sierra Leone. With insignificant technological complementarities and decreasing social returns from joint adoption, the equilibrium spillovers that arise from intervening on any one norm are limited.  


\paragraph{Related Literature} Our paper contributes to several strands of literature. The first is the literature on discrete choice models with social interactions \citep[e.g.,][]{Brock_Durlauf_Restud_2001, Gulesci_etal_AER_2025, bhattacharya2024demand, jackson2007diffusion, guiteras2019demand, kline2020econometric}. Relative to this literature, we allow individuals to choose \textit{bundles}, modeling social norms as actions that are not mutually exclusive. This has the key implication that we can model returns from joint adoption and estimate complementarities and substitutabilities among norms.\footnote{Because we investigate the implications of complementarities in equilibrium, our results can be viewed as characterizing what happens to the equilibrium of an evolutionary game where agents act according to the best response dynamic \citep[see, e.g.,][]{sandholm_2010}.}

A second strand is the microeconomics literature that estimates complementarity or substitutability among consumption goods  \citep[e.g.,][]{Gentzkow_AER_2007, goolsbee2004consumer, berry2014structural, athey1998empirical}. We contribute to this literature by extending the notion of complementarity/substitutability to settings with social spillovers, allowing us to decompose complementarity into technological and social terms. We consider this extension to be of value well beyond the analysis of social norms: economic models of individual behavior increasingly recognize the importance of ``social payoffs" when studying individual consumption decisions, for example in the context of status goods and social media consumption \citep[e.g.,][]{imas2024superiority, bursztyn2018status, bursztyn_socialmedia2025}.

Finally, our work speaks to the literature on the persistence and change of gender norms \citep[for recent reviews, see, e.g.,][]{LaFerrara&Yanagizawa_2026, Field_etal2026}. By focusing on bundles, we shed light on sources of persistence or change related to norms \textit{other} than the one being studied. This opens the way for policy solutions that may target adjacent complementary norms, either to amplify the effects or to speed up the transition.

The remainder of the paper is organized as follows. Section 2 presents the model and derives conditions to characterize complements and substitutes in our setting. Section 3 develops the dynamic model specification, discusses identification, and illustrates the policy counterfactual methodology. Section 4 provides institutional background on the gender norms studied and their cultural significance in Sierra Leone and Nigeria. Section 5 describes the data and presents some descriptive analysis. Section 6 contains our main empirical results and the policy counterfactuals, and Section 7 concludes.


\section{Model} \label{sec:model_main}

Our model focuses on two non-exclusive norms $A$ and $B$ in a setting of static choice. The decision maker selects a vector $v \in \mathcal{V}\equiv\{\emptyset, A, B, AB\}$, where $A$ corresponds to adopting only norm $A$ (resp.\ $B$), $AB$ corresponds to adopting both norms, and $\emptyset$ is the outside option. This vector representation explicitly models the multidimensional nature of norm adherence: the decision to adhere to a norm is not a binary decision between the norm and the outside option, and it allows for $A$ and $B$ to be non-exclusive. 

Let $p \equiv (p_{A},p_{B},p_{AB},p_{\emptyset})$ denote the proportion of the population adhering to each norm vector such that  $\sum_{v \in \mathcal{V}} p_v = 1$ and $p_v \geq 0$ for all $v \in \mathcal{V}$. Define the prevalence of each norm as
\[
Q_{A}(p)\equiv p_{A}+p_{AB},\qquad Q_{B}(p)\equiv p_{B}+p_{AB}.
\]

\subsection{Utilities} \label{sec:baseline}

We model the (latent) utility of choosing each norm $j\in\{A,B\}$ using an intrinsic term $\delta_j$, a social term, and an idiosyncratic shock $z_j$. The corresponding utilities for each norm vector are given by $(u_{\emptyset},u_A,u_B,u_{AB})$. As discussed below, the social term represents sanctions between different-type agents, or, equivalently, returns from conformity.

\paragraph{Sanctions/conformity representations.}
Following \cite{Brock_Durlauf_Restud_2001} and  \cite{Gulesci_etal_AER_2025}, we assume that disagreement over a single norm dimension $j \in \{A,B\}$ induces a non-conformity penalty $s_j \in \mathbb{R}_+$. Specifically, adherents of $v \in \{A, \emptyset\}$ (resp. $v \in \{B, \emptyset\}$)  incur the penalty $s_B$ (resp. $s_A$)  when interacting with adherents of $v' = B$ (resp. $v' = A$).\footnote{Such costs may arise, for instance, from shunning or ostracism \citep[e.g.,][]{shell2011dynamics}}  When social interactions along each dimension are separable, adherents of $v=  AB$ penalize $A$-adherents by $s_B$, $B$-adherents by $s_A$, and $\emptyset$-adherents by $s_{AB} = s_A + s_B$. We refer to this case as \emph{linear sanctions:} the total penalty imposed by $AB$-adherents equals the sum of the dimension-specific sanctions. 

However, in many settings, the sanctions may not be linear. Consider, for example, a context in which both $A$ and $B$ are restrictive gender norms associated with purity and chastity. Violating both norms simultaneously may provoke a disproportionately severe response among the most ``traditional" members of society (the $AB$-adherents) and sanctions on $\emptyset$-adherents will be \emph{super-linear}, such that $s_{AB} > s_A + s_B$. Conversely, we say that sanctions are \emph{sub-linear} if $s_{AB} < s_A + s_B$. We accommodate both forms of non-linearities in the most general version of the model.

Table ~\ref{tab:non_linear_sanctions} (left panels) describes the social penalty of choosing $v \in \mathcal{V}$ as a function of the decision maker’s vector (row) and the peer’s vector (column). The top  matrix corresponds to the case of linear sanctions, while the bottom matrix allows for non-linear sanctions. The right panels reparameterize the model, interpreting the social utility term as benefits from conformity instead. These parameterizations are without loss of generality (they only differ by a normalization $u_{\emptyset}=0$).

\begin{table}[ht]
 \caption{Social interactions matrices with linear  and non-linear sanctions.} 
  \centering
  \begin{subtable}[t]{0.48\textwidth}
    \centering
    \begin{tabular}{lcccc}
      \toprule
       & $\emptyset$ & A & B & AB \\
      \midrule
      $\emptyset$ & $0$  & $-s_A$    & $-s_B$  & $-s_A - s_B$   \\
      A           & $0$  & $0$     & $-s_B$  & $-s_B$    \\
      B           & $0$  & $-s_A$    & $0$   & $-s_A$    \\
      AB          & $0$  & $0$     & $0$   & $0$     \\
      \bottomrule
    \end{tabular}
    \subcaption{Sanctions representation: linear}
  \end{subtable}\hfill
  \begin{subtable}[t]{0.48\textwidth}
    \centering
    \begin{tabular}{lcccc}
      \toprule
       & $\emptyset$ & A & B & AB \\
      \midrule
      $\emptyset$ & $0$ & $0$ & $0$ & $0$ \\
      A           & $0$ & $s_A$ & $0$ & $s_A$ \\
      B           & $0$ & $0$ & $s_B$ & $s_B$ \\
      AB          & $0$ & $s_A$ & $s_B$ & $s_A + s_B$ \\
      \bottomrule
    \end{tabular}
    \subcaption{Conformity representation: linear}
  \end{subtable} 
  
  \vspace{10 mm} 
  \begin{subtable}[t]{0.48\textwidth}
    \centering
    \begin{tabular}{lcccc}
      \toprule
       & $\emptyset$ & A & B & AB \\
      \midrule
      $\emptyset$ & $0$  & $-s_A$    & $-s_B$  & $-s_{AB} $   \\
      A           & $0$  & $0$     & $-s_B$  & $s_A - s_{AB}$    \\
      B           & $0$  & $-s_A$    & $0$   & $s_B - s_{AB}$    \\
      AB          & $0$  & $0$     & $0$   & $0$     \\
      \bottomrule
    \end{tabular}
    \subcaption{Sanctions representation: non-linear}
  \end{subtable}\hfill
  \begin{subtable}[t]{0.48\textwidth}
    \centering
    \begin{tabular}{lcccc}
      \toprule
       & $\emptyset$ & A & B & AB \\
      \midrule
      $\emptyset$ & 0  & $0$    & $0$  & $0$   \\
      A           & $0$  & $s_A$    & $0$& $s_A$  \\
      B           & $0$  & $0$   & $s_B$  & $s_B$  \\
      AB          & $0$  & $s_A$    & $s_B$  & $s_{AB}$   \\
      \bottomrule
    \end{tabular}
    \subcaption{Conformity representation: non-linear }
  \end{subtable}
\label{tab:non_linear_sanctions}

 \begin{tablenotes}[flushleft]
   \footnotesize 
    \item \emph{Note:} Rows represent the decision maker's choice and columns the peer's choice.
  \end{tablenotes}
  
\end{table}

\medskip

\paragraph{Utility of $v \in \{A,B,\emptyset\}$} We proceed with the conformity representation and normalize $u_\emptyset = 0$. With this parameterization, the overall utility of  adopting each $v \in \{A,B\}$ are, respectively,
\begin{equation} \label{eqn:utility_basics}
    u_{A}(p) = \delta_A + s_A \,\underbrace{(p_A + p_{AB})}_{= Q_A} + z_A, \qquad 
    u_{B}(p) = \delta_B + s_B\,\underbrace{(p_B + p_{AB})}_{= Q_B} + z_B. 
\end{equation} 
Assuming that each agent interacts with a single peer randomly drawn from the population, the social payoff term reflects the agent's expected social payoff.  
We impose the following regularity assumption on the idiosyncratic shocks $z_A, z_B$.

\begin{assumption}[Shock distribution]\label{ass:z_density}
Let $z \equiv (z_{A},z_{B})^\top \sim R$ with density $r(z)$ that is continuously differentiable and strictly positive for all $z \in \mathbb{R}^2$. The density $r(z)$ is not a function of $(\delta_A, \delta_B, s_A, s_B, p_A, p_B, p_{AB})$, and $\mathbb{E}[z_A] = \mathbb{E}[z_B] = 0$.  Assume, in addition, that there exist constants \(C<\infty\) and \(\eta>0\) such that
$ 
r(z)\le C(1+\|z\|)^{-2-\eta}$ for all $z\in\mathbb R^2.$ 
\end{assumption}

\noindent Assumption \ref{ass:z_density} imposes smoothness and a mild regularity condition on the tails of the density. It holds, for example, if we assume $(z_A,z_B)\sim N(\mathbf{0},\Sigma)$, with  $\Sigma_{AA}=1$, $\Sigma_{BB} \in (0,\infty)$, and $\rho \equiv \frac{\Sigma_{AB}}{\sqrt{\Sigma_{BB}}} \in (-1,1)$, which coincides with the distribution of shocks in \cite{Gentzkow_AER_2007}. 

\paragraph{Utility of $AB$.} Next, we specify the utility of adhering to the norm vector $AB$. Following \cite{Gentzkow_AER_2007}, we allow for the non-social (technological) complementarities in addition to social complementarities defined above: 
\begin{equation} \label{eqn:AB_1}  
\begin{aligned} 
    u_{AB}(p) & \;=\;  \underbrace{\delta_A + \delta_B + \Gamma}_{\text{intrinsic utility of AB}} + \underbrace{s_A p_A+ s_B p_B + s_{AB}\,p_{AB}}_{\text{social spillovers for AB}} + \underbrace{(z_A+z_B)}_{\text{shocks}} \\
             &= \underbrace{u_A(p) + u_B(p) + \Gamma}_{\text{utility $AB$ with linear spillovers}} + \underbrace{p_{AB}(s_{AB} - s_A - s_B)}_{\text{correction for non-linear spillovers}}.
             \end{aligned} 
\end{equation}

Since the non-social and non-idiosyncratic component of utility is given by $\delta_A+ \delta_B + \Gamma$, the intrinsic utility of adhering to both norms is super-linear if $\Gamma > 0$ and sub-linear if $\Gamma < 0$.
This specification differs from standard social-interaction models such as that of  \cite{Gulesci_etal_AER_2025}, where harmful norms with low baseline value can persist due to coordination based on conformity. In our setting, high $AB$ adoption can arise even when $s_A, s_B, s_{AB} = 0$ provided $\Gamma\gg 0$. 

Our framework also generalizes that of \cite{Gentzkow_AER_2007} by introducing the social component of utility (both in utilities with a single norm $A$ or $B$ and in utilities for $AB$) to model sanctions and/or returns from conformity. Together, our model allows both technological and social complementarities to shape the choices of agents.

To gain further insight in the interpretation of $\Gamma$, suppose that we are in the linear sanctions regime with $s_{AB} = s_A + s_B$, as in the top panel of Table \ref{tab:non_linear_sanctions}(a) and \ref{tab:non_linear_sanctions}(b).  In this case, 
$u_{AB}(p) = u_A(p) + u_B(p) + \Gamma$, and  $\Gamma$ can be defined as the difference between the utility of adopting $AB$ and the utilities of adopting $A$ and $B$, i.e., $\Gamma \equiv u_{AB}(p) - u_A(p) - u_B(p)$. \cite{Gentzkow_AER_2007} thus interprets $\Gamma$ as capturing ``technological" complementarities. However, even when social effects are linear, they play an important role in determining the equilibrium dynamics (and consequently the interpretation of the parameters) of our model that we study below.  




\subsection{Equilibrium analysis}
\label{sec:equil}

We now provide an equilibrium analysis of the system induced by our model. For each $v \in \mathcal{V}$, let the proportion of norm vector adherents be defined as
\begin{equation} \label{eqn:p_star1}
    \tilde{p}_{v}(p) \;=\; \mathbb{E}_{z}\left[ \mathbbm{1}\!\left\{ u_{v}(p) \,\ge\, \max_{v'\in\mathcal{V}} u_{v'}(p) \right\} \, \right],
\end{equation} 
where  $\mathbb{E}_{z}[\cdot]$ indicates the expectation operator with respect to idiosyncratic shocks $z\sim R$. In other words, $\tilde{p}_v(p)$ is the probability that an individual randomly drawn from the population chooses the vector $v \in \mathcal{V}$ when the population proportion of norm adherents is given by $p$.

Any $p^* \equiv (p_A^*, p_B^*,p_{AB}^*)$ satisfying the following  fixed point equations constitutes an equilibrium: 
\begin{equation} \label{eqn:q_star}
p_j^* = \tilde{p}_j(p^*), \qquad \forall j \in \{A,B, AB\}. 
\end{equation} 
Under Assumption \ref{ass:z_density}, existence of $p^*$ follows directly from the Brouwer fixed-point theorem since $\tilde{p}$ is a continuous map from the simplex onto itself. However, uniqueness of the equilibrium is not guaranteed.

\begin{definition}[Stability] \label{ass:stability2} Denote in vector form $\tilde{p}(p) = \Big(\tilde p_A(p), \tilde p_B(p), \tilde p_{AB} (p)\Big)^\top$ and let 
\begin{equation} \label{eqn:F_tilde}
\tilde{\Lambda}(p^*)  = \frac{\partial \tilde{p}(p)}{\partial p}\Big|_{p = p^*}.
\end{equation} 
Given equilibrium quantities $p^*$ such that $p^* =  \tilde{p}(p^*)$, an equilibrium is stable if the absolute value of all the eigenvalues of $\tilde{\Lambda}(p^*)$ is strictly below one.  
\end{definition}

Intuitively, a stable equilibrium is a locally attracting fixed point of the dynamic system
\(p_{t+1}=\tilde p(p_t)\): if the initial condition is sufficiently close to that equilibrium,
then the sequence \(p_t\) converges to it. This local condition does not rule out the
existence of other equilibria. The existence of stable equilibria typically requires restrictions on the magnitude of social spillovers (see Remark \ref{rem:Lambda-interpretation} at the end of this section for a formal discussion). We will focus on stable equilibria in the remaining discussion, as their robustness to small perturbations in the baseline utilities of each norm forms the basis of our definition of complements and substitutes in equilibrium.

\begin{definition}[Complements and substitutes in stable equilibria] \label{defn:Q_star}
In any stable equilibrium $p^*$ satisfying Equation \eqref{eqn:q_star}, 
the norms $A$ and $B$ are complements if $\frac{\partial Q_A(p^*)}{\partial \delta_B} > 0$, independent if $\frac{\partial Q_A(p^*)}{\partial \delta_B} =0$, and substitutes if $\frac{\partial Q_A(p^*)}{\partial \delta_B} < 0$. 
\end{definition}

Definition \ref{defn:Q_star} states that two norms are complements (resp. substitutes) if the overall change in the adoption of one norm  is positive (resp. negative) when the baseline utility of the other norm is perturbed. This reflects the usual notion of complements and substitutes between goods defined at the level of cross-price derivatives.  The following theorem provides a formal characterization of complementary effects through the model's parameters. 

\begin{theorem} \label{prop_gamma_fixed_point_extended}
Suppose Assumption \ref{ass:z_density} holds, $s_A, s_B, s_{AB} \in (0,\infty)$, and $|\Gamma| < \infty$.
    \begin{align*} 
    \small 
     & \text{If }    \begin{cases}
             (\Gamma > 0 \text{ and } s_{AB} \ge s_A + s_B) \text{ or } (\Gamma = 0 \text{ and } s_{AB} > s_A + s_B) \\
             (\Gamma  = 0 \text{ and } s_{AB} = s_A + s_B)  \\
             (\Gamma < 0 \text{ and }  s_{AB} \le s_A + s_B) \text{ or }  (\Gamma = 0 \text{ and }  s_{AB} < s_A + s_B)
        \end{cases},   \nonumber \\ 
      & \hspace{180pt}  \text{ then norms A and B are } \quad  \begin{cases}
            \text{complements} \\
            \text{independent} \\
            \text{substitutes}
        \end{cases}  \nonumber
    \end{align*}
    at every stable equilibrium $p^*$.
\end{theorem}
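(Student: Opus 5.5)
The plan is to apply the implicit function theorem to the fixed point \eqref{eqn:q_star}, after reducing the model to an instance of \cite{Gentzkow_AER_2007} with ``effective'' parameters. Since only utility differences matter, and by \eqref{eqn:utility_basics}--\eqref{eqn:AB_1}, the choice probabilities depend on $p$ only through three quantities:
\[
a_A = \delta_A + s_A Q_A,\qquad a_B=\delta_B+s_B Q_B,\qquad g=\Gamma+\kappa\, p_{AB},\qquad \kappa\equiv s_{AB}-s_A-s_B .
\]
Indeed, the agent compares $0$, $a_A+z_A$, $a_B+z_B$ and $a_A+a_B+g+z_A+z_B$. I would therefore change coordinates from $(p_A,p_B,p_{AB})$ to $x=(Q_A,Q_B,p_{AB})$. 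This is a fixed invertible linear map, so the eigenvalues of $\tilde\Lambda(p^*)$ are unchanged. The fixed point then becomes $x=G(\delta_A+s_Ax_1,\ \delta_B+s_Bx_2,\ \Gamma+\kappa x_3)$, where $G$ is the static Gentzkow map. Writing $J=DG\cdot S$ with $S=\mathrm{diag}(s_A,s_B,\kappa)$, stability makes $I-J$ invertible, and the implicit function theorem gives
\[
\frac{\partial x^*}{\partial \delta_B}=(I-J)^{-1}DG\,e_2,\qquad \frac{\partial Q_A(p^*)}{\partial\delta_B}=e_1^\top (I-J)^{-1}DG\,e_2 .
\]
Differentiability of $G$, and the sign facts below, come from Assumption \ref{ass:z_density}: the density is $C^1$ and positive, and the tail bound justifies differentiating under the integral.

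The next step is to sign the entries of $DG$, as in Gentzkow's characterization. Writing $Q_A=\Pr\big(a_A+z_A+\max(0,a_B+z_B+g)>\max(0,a_B+z_B)\big)$, one gets three facts. First, $\partial Q_A/\partial a_B$ and $\partial Q_B/\partial a_A$ have the sign of $g$; they are zero when $g=0$. Second, $\partial Q_A/\partial a_A$, $\partial Q_B/\partial a_B$, $\partial p_{AB}/\partial a_A$, $\partial p_{AB}/\partial a_B$, $\partial Q_A/\partial g$, $\partial Q_B/\partial g$ and $\partial p_{AB}/\partial g$ are all strictly positive. Third, the sign of $g$ at equilibrium is pinned down in each case: $g>0$ in the complement case, $g<0$ in the substitute case, and $g=0$ in the independence case, where $\kappa=0$ and $\Gamma=0$. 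In that last case $J$ is block diagonal in $(Q_A)$ versus $(Q_B,p_{AB})$ and $DG\,e_2$ has first entry zero, so the derivative is $0$ immediately.

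For the two signed cases I would use Cramer's rule rather than a Neumann series. Stability gives $\det(I-J)=\prod_i(1-\lambda_i)>0$, because real eigenvalues satisfy $\lambda_i<1$ and complex ones come in conjugate pairs contributing $|1-\lambda|^2>0$. The sign of $\partial Q_A/\partial\delta_B$ is then the sign of the numerator $\det[\,DG\,e_2,\ (I-J)e_2,\ (I-J)e_3\,]$, which is an explicit $3\times3$ determinant in the signed entries of $DG$ and in $s_A,s_B,\kappa$. In the complement case with $\Gamma>0,\kappa\ge0$, every entry of $DG\,S$ is nonnegative. An alternative route here is that $(I-J)^{-1}=\sum_k J^k$ is entrywise nonnegative; this series converges since the spectral radius is below one. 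Together with the strictly positive leading term $\partial Q_A/\partial a_B>0$, this gives the result directly. The boundary sub-case $\Gamma=0,\kappa>0$ needs only that $g=\kappa p_{AB}^*>0$ at equilibrium, which holds because $p_{AB}^*>0$ by full support.

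The main obstacle is the substitute case $g<0$, $\kappa\le0$. There the matrix $J$ is not sign-symmetric under any diagonal $\pm1$ conjugation: the $(Q_B,p_{AB})$ entries have opposite signs in the two directions. So the Neumann-series monotonicity argument fails, and the determinant must be expanded by hand. I would first try to simplify by eliminating $p_{AB}$: solve the third equation for $p_{AB}$ as a function of $(Q_A,Q_B)$, which is possible since $1-\kappa\,\partial p_{AB}/\partial g>0$ when $\kappa\le0$. This reduces the problem to a $2\times2$ system whose off-diagonal effective entry combines the negative direct cross effect with the indirect effect through $\kappa\,\partial p_{AB}/\partial a_B\cdot\partial Q_A/\partial g\le0$. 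Both pieces are nonpositive, which should give a negative numerator. The remaining task is to check that the reduced $2\times2$ diagonal entries still satisfy $1-\tilde J_{ii}>0$, deducing this from $\det(I-J)>0$ and the positive Schur complement.
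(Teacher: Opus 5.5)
Your proof is correct, and it takes a genuinely different route from the paper.

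\textbf{The paper's route.} The paper stays in the coordinates $(p_A,p_B,p_{AB})$ and writes every entry of $\tilde\Lambda$ as a combination of boundary masses $T^v_i$ obtained from the coarea formula. It then proves the identities $T^A_2=T^B_2$, $T^A_3=T^{AB}_2$, $T^B_3=T^{AB}_3$, together with the vanishing of $T^A_2$ or $T^{AB}_1$ according to the sign of $\tilde\Gamma=\Gamma+\kappa p_{AB}$. Finally it computes the adjugate by hand, obtaining a closed form for $\det(I_3-\tilde\Lambda)\,\partial Q_A/\partial\delta_B$ in each regime (equation \eqref{eqn:dd}), whose sign is read off from $T^A_3,T^B_3>0$.

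\textbf{Your route.} Passing to $(Q_A,Q_B,p_{AB})$ exhibits the model as Gentzkow's static map at effective parameters $(a_A,a_B,g)$, with $J=DG\,\mathrm{diag}(s_A,s_B,\kappa)$. This lets you work with the qualitative sign pattern of $DG$ alone:
\begin{itemize}
\item Complements follow from entrywise nonnegativity of $\sum_k J^k$.
\item Independence follows because the first row of $J$ is $(J_{11},0,0)$. Strictly speaking $J$ is block lower-triangular there, not block diagonal, since $J_{31}=s_AG_{31}\neq0$; but $e_1^\top(I-J)=(1-J_{11})e_1^\top$ is all you use.
\item Substitutes follow by eliminating $p_{AB}$.
\end{itemize}

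\textbf{What each buys.} Your route avoids the line-integral bookkeeping and makes the nesting of Gentzkow transparent. The paper's explicit formula, in exchange, shows how $\Gamma$ and $\kappa$ compete when their signs conflict, which is why the theorem is restricted to same-sign cases.

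\textbf{How they reconcile.} Write $G_{ij}$ for the entries of $DG$. Since $(I-J)e_2=e_2-s_B\,DG\,e_2$, your Cramer numerator collapses to $G_{12}(1-\kappa G_{33})+\kappa G_{13}G_{32}$, which does not depend on $s_A,s_B$. Inserting
$G_{12}=(T^{AB}_1-T^A_2)/\sqrt2$, $G_{13}=T^B_3+T^{AB}_1/\sqrt2$, $G_{32}=T^A_3+T^{AB}_1/\sqrt2$ and $G_{33}=T^{AB}_1/\sqrt2+T^A_3+T^B_3$
reproduces \eqref{eqn:dd} exactly.

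\textbf{One fix in the substitute case.} You do not need the ``remaining task'' $1-\tilde J_{ii}>0$. It also would not follow from what you cite: a positive $2\times2$ determinant with $\tilde J_{12}\tilde J_{21}>0$ only forces $1-\tilde J_{11}$ and $1-\tilde J_{22}$ to have the same sign. The argument closes without it:
\begin{enumerate}
\item Write the reduced system as $(I-\tilde J)\,dx_{12}=\tilde G e_2\,d\delta_B$, with $\tilde J=\tilde G\,\mathrm{diag}(s_A,s_B)$ and $\tilde G_{ij}=G_{ij}+\kappa G_{i3}G_{3j}/(1-\kappa G_{33})$.
\item Since $(I-\tilde J)e_2=e_2-s_B\tilde G e_2$, Cramer's rule gives exactly $\partial Q_A/\partial\delta_B=\tilde G_{12}/\det(I-\tilde J)$.
\item The Schur-complement identity $\det(I-J)=(1-\kappa G_{33})\det(I-\tilde J)$, with $1-\kappa G_{33}\ge 1$, gives $\det(I-\tilde J)>0$.
\item $\tilde G_{12}=G_{12}+\kappa G_{13}G_{32}/(1-\kappa G_{33})$ is strictly negative. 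This holds even when $\kappa=0$, because $G_{12}=-T^A_2/\sqrt2<0$ when $g<0$: the segment $\{u_A=u_B\in[0,-g]\}$ is nondegenerate.
\end{enumerate}
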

\begin{proof}
    See Supplemental Appendix \ref{proof_gamma_fixed_point_extended}.
\end{proof}

Theorem~\ref{prop_gamma_fixed_point_extended} shows that with linear social interactions ($s_{AB}=s_A+s_B$), whether norms are complements or substitutes is determined solely by the sign of $\Gamma$. This result nests the characterization of $\Gamma$ in Proposition 1 of  \cite{Gentzkow_AER_2007} as a special case where $s_A= s_B =s_{AB} =0$, i.e., the case without social spillovers. With social interactions, perturbations in the baseline utilities of each norm also operate through equilibrium adoption shares; the derivation differs due to the presence of equilibrium dynamics in our context.

With non-linear social returns, both the technology $\Gamma$ and social terms $s_{AB} - s_A - s_B$ matter; clear comparative statics require that they share the same sign. 
The intuition mirrors the income–substitution decomposition in consumer theory (although the formalisms and derivations differ). When the technological and social terms reinforce each other, the effect of shifting one norm’s utility on the other is unambiguous. When they conflict, the net effect can flip. For example, even if $A$ and $B$ are technological substitutes ($\Gamma<0$), a reduction in $\delta_B$ can lower adherence to $A$ if social complementarity is strong enough ($s_{AB}\gg s_A + s_B$): the policy reduces $p_B$ and $p_{AB}$, weakening the social component of $u_{AB}$ and thereby reducing the appeal of co-adoption.

Finally, we remark that in this model, complementarities may arise even in the absence of technological ones. That is, it is possible that $\Gamma = 0$, but norms are not independent due to the non-linear sanction component $s_{AB}$. We  can thus interpret non-linear $s_{AB}$ as ``social complementarities'' in the absence of technological effects.

\begin{remark}[Properties of stable equilibria]\label{rem:Lambda-interpretation} To gain intuition on the properties of a stable equilibrium, consider 
\[
\begin{pmatrix}
\bar u_A(p)\\ \bar u_B(p)\\ \bar u_{AB}(p)
\end{pmatrix}
\equiv 
\begin{pmatrix}
\delta_A+s_A(p_A+p_{AB})\\
\delta_B+s_B(p_B+p_{AB})\\
\delta_A+\delta_B+\Gamma+s_Ap_A+s_Bp_B+s_{AB}p_{AB}
\end{pmatrix}.
\]
These define the utilities without idiosyncratic shocks. 
Under Assumption~\ref{ass:z_density}, it follows that 
$\tilde p(p)=F(\bar u(p))$ for some continuously differentiable $F:\mathbb R^3\to[0,1]^3$. Hence, at any $p^*$,
\[
\tilde\Lambda(p^*)
=\frac{\partial \tilde p(p)}{\partial p}\Big|_{p=p^*}
=\underbrace{\frac{\partial F(\bar u)}{\partial  \bar u}\Big|_{\bar u=\bar u(p^*)}}_{G(p^*)}
\underbrace{\frac{\partial \bar u(p)}{\partial p}\Big|_{p=p^*}}_{\displaystyle B}
\quad\text{where}\quad
B=
\begin{pmatrix}
s_A & 0   & s_A\\
0   & s_B & s_B\\
s_A & s_B & s_{AB}
\end{pmatrix}.
\]

The matrix $B$ depends only on the social penalties $(s_A,s_B,s_{AB})$.
Therefore, a locally stable equilibrium implicitly imposes restrictions on the magnitude of social spillovers through restrictions on the matrix $B$ and $G(p^*)$. Intuitively, it assumes that social spillovers are not ``explosive''. In particular, 
the stability requirement on $\tilde{\Lambda}(p^*)$ ensures that the best-response dynamics
$p_{t+1}=\tilde p(p_t)$ is locally contracting around $p^*$ at $t \rightarrow \infty$. \qed
\end{remark}

\section{Bringing the model to the data}

We discuss identification and estimation of the model using repeated cross-sections, while deferring specifics on our empirical implementation to the next section.

\subsection{Dynamic model specification}

In each period $t$, we sample repeated cross sections of non-overlapping $n_t$ individuals, each period indexed by $i \in \{1, \dots, n_t\}$. All individuals in the population are organized into groups that are time-invariant and non-overlapping, with $g(i)$ denoting the group of individual $i$. In our application, these groups capture the ethnic and geographic proximity between individuals. During the observed time window running from $t = 1$ to $t = T$, we consider dynamic equilibrium paths where each individual $i$ at time $t$, with observable characteristics $X_{it} \in \mathbb{R}^p$, has utilities 
\begin{equation} \label{eqn:dynamic1}
\begin{aligned} 
u_{Ait} & = \tilde{\delta}_{Ait} + s_A (p_{A(t-1)}^{g(i)} + p_{AB(t-1)}^{g(i)}) + \tilde{z}_{Ait}, \\ 
u_{Bit} & = \tilde{\delta}_{Bit} + s_B (p_{B(t-1)}^{g(i)} + p_{AB(t-1)}^{g(i)}) + \tilde{z}_{Bit}, \\  
u_{ABit} & = u_{Ait} + u_{Bit} + \Gamma + (s_{AB} - s_A - s_B)\, p_{AB(t-1)}^{g(i)},  
\end{aligned} \quad \begin{bmatrix} 
\tilde{z}_{Ait} \\ 
\tilde{z}_{Bit}
\end{bmatrix} \Big| X_{it} \sim \mathcal{N}\left(0, \begin{bmatrix}
1 &  \rho \\ 
\rho & 1  
\end{bmatrix} \right),
\end{equation} 
where $\tilde{\delta}_{jit} = X_{it}^\top \beta_j$ and  $p_{jt}^g$ is the probability that an individual adopts the norm $j$ at time $t$ within the group $g$. The model absorbs the time-varying $X_{it}$ through mean changes in the baseline utilities, with $\tilde{\delta}_{jit} = X_{it}^\top \beta_j$. It assumes that the unobserved shocks $\tilde{z}_{it}$ are independent of the covariates. 
We standardize $\tilde{z}_{jit}$'s variance to be equal to one assuming homoskedasticity and set $\rho \in (-1,1)$ to denote their correlation coefficient.\footnote{In practice, some of these assumptions can be relaxed by e.g., accounting for additional dependence with clustered standard errors, or allowing for forms of heteroskedastic noise.}

We let utilities be functions of proportions of adherents in the previous period. This assumption simplifies estimation, while maintaining the same interpretation of the model where an equilibrium is defined as the point of attraction of the dynamic system.  The model assumes that social interactions occur at the level of groups $g(i)$, while assuming common parameters between the different groups.\footnote{Extensions to heterogeneous parameters are possible but omitted, with heterogeneous parameters the same exercise can be carried over separately for each group at the expense of a smaller sample size.}

\paragraph{Independence under repeated cross-sections} In our application, we observe repeated and non-overlapping cross-sections of surveyed individuals. By taking a sampling based perspective where individuals are independently drawn from a large (infinite) population over different periods, we treat the vector 
$$
(z_{Ai,1}, \cdots, z_{Ai,T}, z_{Bi,1}, \cdots, z_{Bi,T}, X_{i1}, \cdots, X_{iT})
$$
as independent across different individuals, though not necessarily across time for the same individual, with $X_{it} \sim F_X(t)$ potentially non-stationary over time. We therefore implicitly condition on aggregate shocks (which may affect the non-stationary distribution of covariates $F_X(t)$), such as a change in state-level legislation. On the other hand, we assume that the distribution of idiosyncratic shocks $\tilde{z}$ is stationary over time, therefore implicitly requiring the exogeneity of $\tilde{z}$.

\paragraph{Stable equilibria} We interpret stable equilibria as the point of attraction of a dynamic system, and study the set of stable equilibria corresponding to the long-run outcome of a dynamic system as $t \rightarrow \infty$ where, for simplicity, we fix the distribution of covariates in the last period, so that $F_X(t) = F_X(T)$ for all $t \ge T$.  
The observed adoption shares and covariates' distribution determine equilibrium selection by providing the initial conditions for the long-run dynamic system. Formally, this implies that, taking $t \rightarrow \infty$ in the dynamic system, we can map our model with covariates to a set of stable equilibria under our model in Section \ref{sec:model_main}, where, for each group $g$,  the corresponding $z$ in Assumption \ref{ass:z_density} coincides with the vector $\tilde{z}_{iT} + X_{iT}^\top \beta - \mathbb{E}[X_{iT}^\top \beta]$, $\delta_j = \mathbb{E}[X_{iT}^\top\beta_j]$ with $g(i) = g$, and assuming $\Gamma$ and $s$ are homogeneous between different groups $g(i)$. 

\subsection{Identifying restrictions}

Without exclusion restrictions, the model is not necessarily identified. We consider exclusion restrictions that induce shifts in the utility of one norm, but not the other. Let $\mathcal{I}_A$ and $\mathcal{I}_B$ denote index sets for columns of $X_{it} \in \mathbb{R}^p$ (of fixed dimension over time-periods to simplify notation), and let $\beta_j^{(\mathcal{I})}$ denote the subvector of $\beta_j$ corresponding to indices in $\mathcal{I}$. Define $\mathcal{I}_A, \mathcal{I}_B$ the set of indeces such that 
\begin{align} 
\beta_A^{(\mathcal{I}_A)} = 0, \quad \beta_B^{(\mathcal{I}_B)} = 0; \label{eqn:validity}  \\  \label{eqn:relevance}
\beta_B^{(\mathcal{I}_A)} \neq 0, \quad \beta_A^{(\mathcal{I}_B)} \neq 0. 
\end{align}

As in \citet{Gentzkow_AER_2007}, the  first condition imposes that certain covariates do not shift the utility of one of the two norms, similar to standard validity restrictions in instrumental variables. The second requires these covariates to shift the utility for the other norm, similar to standard relevance conditions.  It is not necessary that both sets are non-empty: one of them may be empty (e.g., $\mathcal{I}_A = \emptyset$), as long as the other is not. Identification requires that the Fisher information criterion of the profiled likelihood for the relevant parameters of interest ($\Gamma, s_A, s_B, s_{AB}$) is invertible under Equation \eqref{eqn:validity}. In our specification, we will use country-level regulations that affect the utility of CMA but do not change the regulations and utility of the other norm as our main exclusion restriction.\footnote{In the different context of panel data, other forms of identification are also possible.}  Supplemental Appendix \ref{sec:stacked_se} provides additional intuition behind identification.

\subsection{Estimation and inference}

Define $\theta = (\beta_A,\beta_B,s_A,s_B,s_{AB},\Gamma,\rho)$  as the vector of parameters to be estimated, and let
\[
\hat{p}_{it}^{g(i)}(\theta) \equiv \big( \hat{p}_{it}^{g(i)}(v;\theta) \big)_{v \in \{\emptyset, A, B, AB\}}
\]
denote the vector of model-implied choice probabilities for individual $i$ in group $g(i)$ at time $t$. Let $v_{it}$ denote the norm chosen by individual $i$ at time $t$, which is observed in the data. We estimate $\theta$ by maximizing the log-likelihood
\[
\ell(\theta) \equiv \sum_{i,t} \sum_{v \in \{\emptyset, A, B, AB\}} 
\mathbf{1}\{ v_{it} = v \} \log\!\Big(\hat{p}_{it }^{g(i)}(v;\theta) \Big),
\]
subject to the validity restrictions in Equation~\eqref{eqn:validity}. To facilitate numerical optimization, we smooth the indicator functions that define the choice regions in the integral for $p_{it}^g$ using a sigmoid approximation,
$
\mathbf{1}\{x \geq y\} \approx \sigma_\varepsilon(x-y) \equiv \big(1 + e^{-(x-y)/\varepsilon}\big)^{-1},
$ 
and approximate the resulting Gaussian integrals by quadrature.\footnote{\cite{Gentzkow_AER_2007} rationalizes the smoothing as adding a small additional iid idiosyncratic noise to the utilities component. Here, we interpret the smoothing as a  valid approximation (as we take $\varepsilon \rightarrow 0$) that facilitates optimization. Because we use a regular $200 \times 200$ grid over $[-5,5]^2$ for $(z_{A},z_{B})$ for computing the likelihood and its gradient, we take $\epsilon = 0.05$. The likelihood gradient is calculated via JAX and maximum likelihood estimation is performed using multiple random-seeded runs of the IPOPT optimizer.} 

We conduct inference via standard inference for the methods of moments, with moments generated by the scores of the likelihood. In addition to the uncertainty generated by idiosyncratic shocks and covariates, inference must account for the additional dependence induced by using survey-based share estimates $\hat p$ in our regression (a generated-regressor problem). We present the formal derivation in Supplemental Appendix~\ref{sec:stacked_se}. 



\subsection{Policy counterfactuals} \label{subsec:policy_counterfactuals}

Given the estimated parameter and standard errors, we study policy counterfactuals by simulating shocks to the utilities (which may, for example, occur with the passage of legislation banning a certain norm) and propagating the model forward. With an abuse of notation, let $\tilde{p}_t(p; \theta)$ denote the transition function in Definition \ref{ass:stability2}, where we now explicitly index $\tilde{p}$ by the model parameters $\theta$ and time $t$. The time index corresponds to the distribution of covariates $X_{it} \sim F_X(t)$. We will fix the distribution of covariates $X_{it} \sim F_X(T)$ for all unobserved periods $t \ge T$. 

Denote by $p_t^{base}$ the model-implied shares of adopters of $(A,B,AB)$ at time $t\ge T$ under the estimated parameters $\hat{\theta}$. Denote by $p_t^{pol}$ the corresponding predicted shares under a counterfactual parameter vector $\theta^{pol}$, obtained from $\hat{\theta}$ by, for example, lowering the baseline utility of a given norm or altering spillover coefficients. Intuitively, the difference $p_t^{pol}-p_t^{base}$ summarizes the policy effect of changing the environment from $\hat{\theta}$ to $\theta^{pol}$. We formalize these definitions below.

\begin{definition}[Policy counterfactuals with utilities shocks] \label{defn:policy1} For a given observed repeated cross-sections of observations observed over $T$ periods, define a $h$-periods ahead policy counteractual as 
    \begin{equation} \label{eqn:delta}
\Delta^h(\theta^{\text{pol}}) \equiv p_{T+h}^{\text{pol}} -  p_{T+h}^{\text{base}}, \qquad h=1,2,\dots.
\end{equation} 
where 
$$  
p_{t}^{\text{pol}} = \begin{cases} 
\tilde{p}_T\big(p_{t-1}^{\text{pol}}; \theta^{\text{pol}}\big),  &  t > T \\ 
p_{t}^{\text{base}}, & t \le T 
\end{cases}, \qquad p_{t}^{\text{base}} = \begin{cases} 
 \tilde{p}_{\min\{t,T\}}(p_{t-1}^{\text{base}}; \hat{\theta}), &  t > 1 \\ 
 p_{t=1} & t = 1
 \end{cases} 
$$
with $p_{t=1}$ denoting the observed share of adopters in the first period of observations and $\hat{\theta}$ corresponding to the estimated parameter.
\end{definition}

Definition \ref{defn:policy1} defines counterfactuals where we replace $\theta$ with a counterfactual parameter vector $\theta^{\text{pol}}$ that incorporates the policy induced changes while keeping the remaining parameters at their estimated values.
We construct the policy counterfactual by iterating the model forward. 
 In our application, we consider the following two classes of policy counterfactuals:

 \paragraph{Shifts to baseline utilities} In this case $\theta^{pol}$ coincides with $\hat{\theta}$ except for intercepts $\delta_A$ and/or $\delta_B$, to which we add an additive shift capturing a shock to the baseline utilities of adopting norm $A$ and / or $B$. In practice, we calibrate the magnitude of this shift using evidence on the impact of bans or legal reforms observed in other contexts.

\paragraph{Shifts to social complementarities} To separate the role of social interactions from purely technological or preference effects, we also consider counterfactuals in which the terms of social interaction are smaller (for instance, set $s_A' < s_A$ as a social multiplier for norm $A$ under the counterfactual intervention $\theta^{pol}$). These interventions may capture information campaigns that may change the perception of social complementarities of a given norm, or weakening of social institutions traditionally responsible for norm enforcement.

\section{Application: Gender norms in West Africa}
\label{sec:norms_background}

Our framework is general and can be applied to a variety of norms in multiple domains. In our application, we will focus on a subset of gender norms that are relatively prevalent in the African context: female genital cutting, child marriage, and polygyny.

According to the World Health Organization, \textit{female genital cutting} (FGC) is ``the partial or total removal of the external female genitalia, or other injury to the female genital organs for non-medical reasons.'' The WHO estimates that over 230 million girls and women are currently cut, and over 4 million girls are at risk of cutting every year.\footnote{Source: https://www.who.int/news-room/fact-sheets/detail/female-genital-mutilation}  While the health consequences of FGC vary depending on the ``type" of cut (from type 1, or clitoridectomy, to type 3, or infibulation), severe risks have been documented both at the time of cutting (e.g., bleeding, infection) and later in life (e.g., urinary and gynecological problems, heightened risks at childbirth, etc.). 
\textit{Child marriage} is defined by the United Nations as ``any formal marriage or informal union between a child under the age of 18 and an adult or another child." About one in five girls globally marry before age 18.\footnote{Source: https://www.unicef.org/protection} 
\textit{Polygyny} is a marital arrangement where a man is permitted to have multiple wives.

\begin{figure}[!h]
\caption{Prevalence of gender norms across African countries} 
\centering

  \begin{subfigure}[t]{0.4\textwidth}
    \centering
    \includegraphics[width=\linewidth]{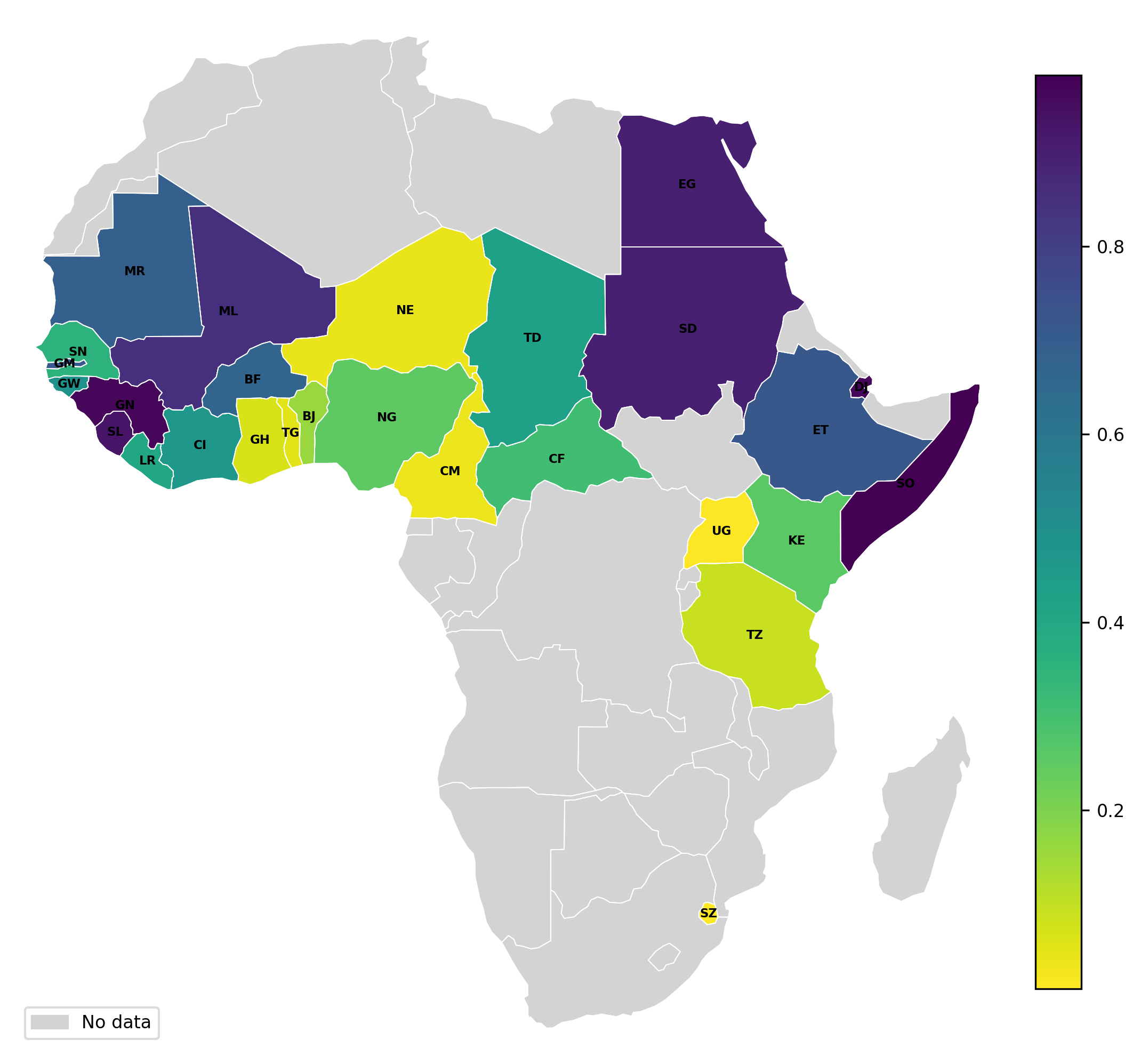}
    \captionsetup{justification=centering,singlelinecheck=false}
    \caption{Female Genital Cutting (FGC)}
  \end{subfigure}
  \hfill
  \begin{subfigure}[t]{0.4\textwidth}
    \centering
    \includegraphics[width=\linewidth]{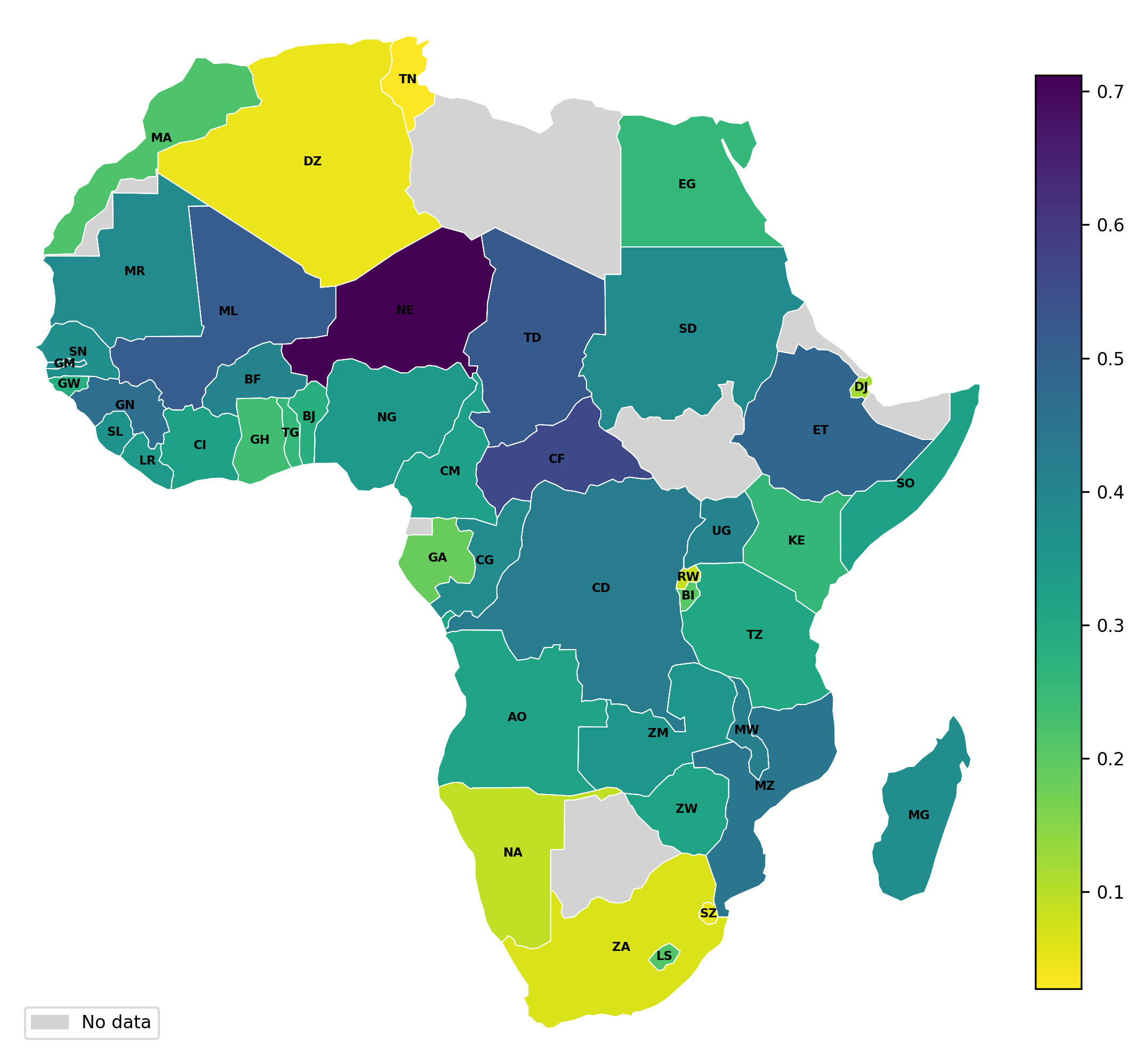}
    \captionsetup{justification=centering,singlelinecheck=false}
    \caption{Child Marriage (CMA)}
  \end{subfigure}

  \vspace{1em}

  \begin{subfigure}[t]{0.4\textwidth}
    \centering
    \includegraphics[width=\linewidth]{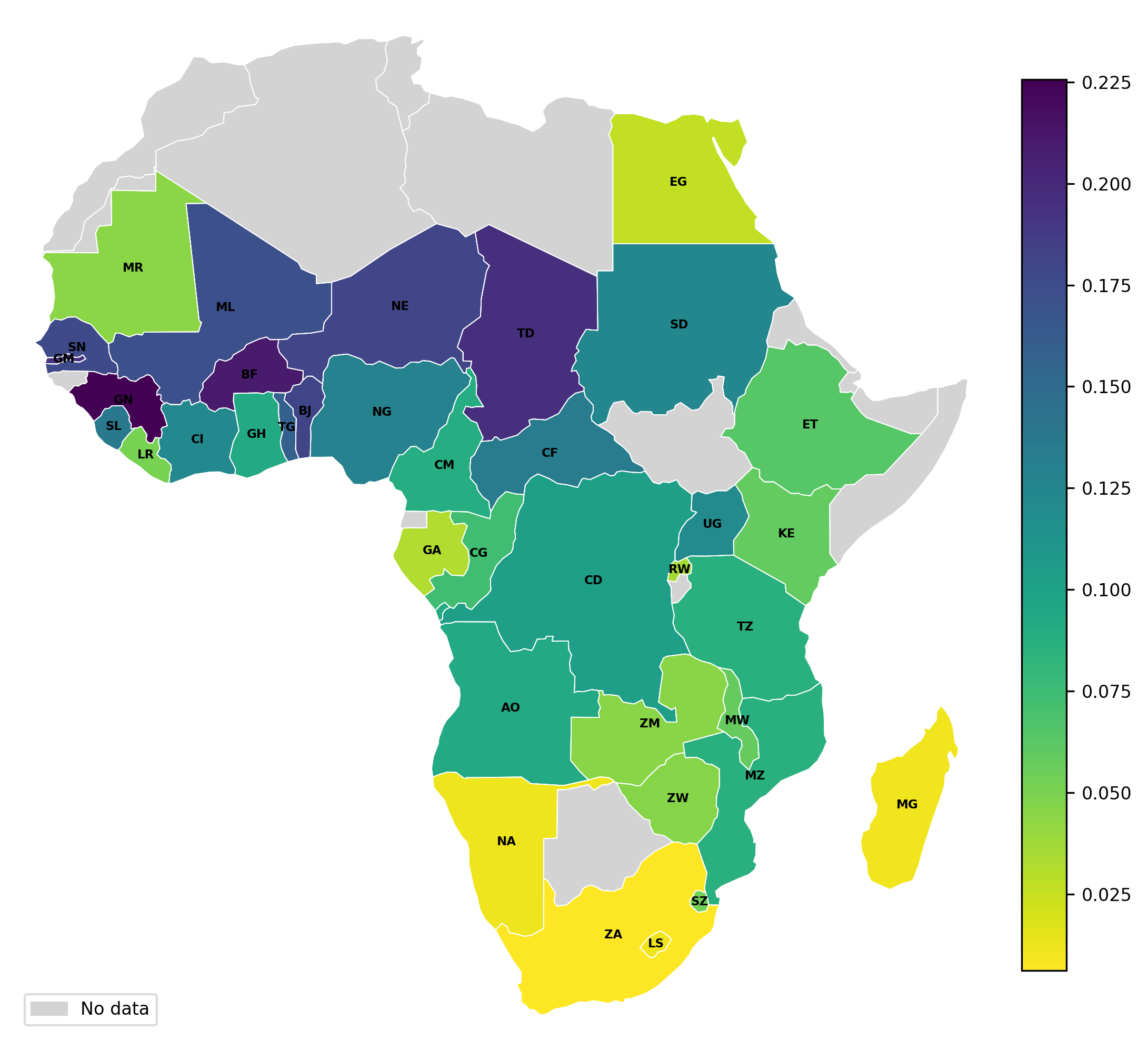}
    \captionsetup{justification=centering,singlelinecheck=false}
    \caption{Polygyny (PGY)}
  \end{subfigure}

  \begin{flushleft}
  \footnotesize \emph{Notes:} Each subfigure reports the country-level prevalence of the respective norm: (a) for FGC, (b) for CMA, and (c) for PGY. Estimates are computed by the authors using the most recent available DHS or MICS survey for each country.
  \end{flushleft}

\label{fig:heatmaps}

\end{figure}

Figure 1 displays the cross-country variation in the prevalence of these norms across the African continent, estimated using data from the Demographic and Health Surveys (DHS) and---for panel (a)---also from the Multiple Indicator Cluster Surveys (MICS).\footnote{DHS data are available at \url{https://dhsprogram.com/data/} and MICS data are available at \url{https://mics.unicef.org}.} Panel (a) shows the prevalence of FGC, panel (b) of child marriage, and panel (c) of polygyny. 
FGC rates are particularly high in West Africa (notably in Mali, Guinea and Sierra Leone), as well as in Egypt, Sudan and in the horn of Africa. Child marriage is broadly prevalent but highest in the Sahel; polygyny is concentrated in West and Central Africa.

Figure~\ref{fig:timetrend} documents the evolution over time of all three norms, plotting the prevalence of the norms among birth cohorts in 20 African countries. The series display significant co-movement in many of them: this could be suggestive of complementarity, or it could simply reflect secular trends or correlation with common determinants of the various norms. Our methodology will attempt to isolate the underlying parameters by capturing interdependence.

\begin{figure}[!h]
\caption{Prevalence of gender norms over time} 
\centering
    \includegraphics[width=0.95\linewidth]{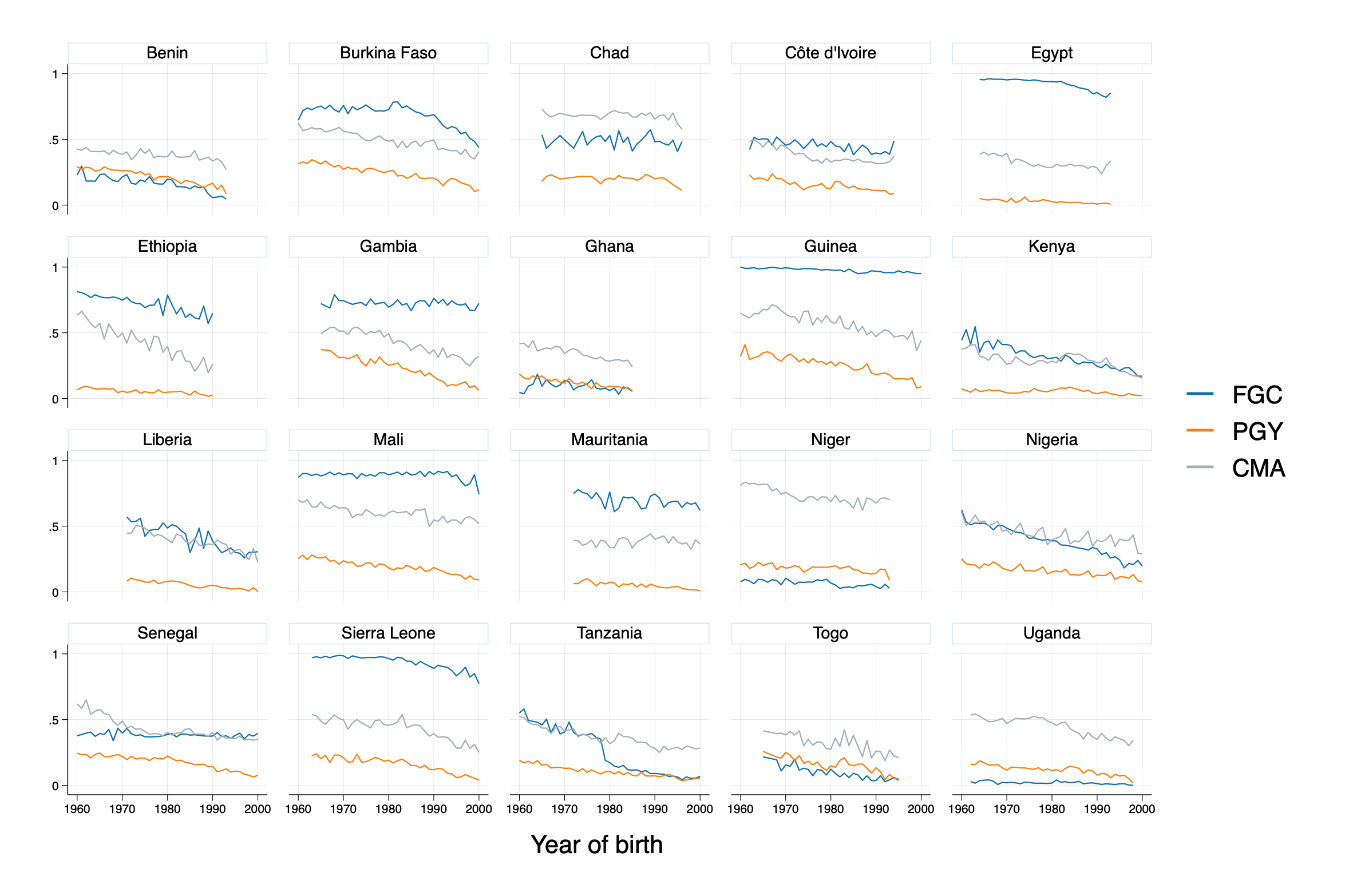}
 \begin{flushleft}
  \footnotesize \emph{Notes:} Each panel reports the share of norm adopters by birth cohort for a given country. The sample includes all available DHS waves and cohorts born between 1960 and 2000. Country–cohort cells with fewer than 100 observations are excluded.
   \end{flushleft}
\label{fig:timetrend}
\end{figure}

Since we will estimate the complementarity/substitutability of these norms in two  specific countries, Nigeria and Sierra Leone, we now briefly describe the configurations that these norms take in these two countries.

\paragraph{Female genital cutting} The timing and cultural significance of FGC differs between the two countries in our analysis. In Nigeria, FGC is typically performed in infancy and in an ``individualized" setting, in the sense that the family of the young girl agrees with the circumciser on when the practice will be performed. Drivers of the decision here include the belief that cutting will help control women's sexual desire and increase fidelity in marriage, and in some cases the belief that uncut female genitalia are aesthetically less `beautiful'. It should be noted that FGC was prohibited in Nigeria through the Violence Against Persons Prohibition Act in 2015, though enforcement remains weak.

In Sierra Leone, cutting is part of an initiation ritual known as ``Bondo''. It is performed at puberty, at the end of a month-long period that the girl spends in the bush together with other girls and with traditional circumcisers (``soweis''). During this period, girls are initiated into adulthood and taught how to be good wives and good mothers. Membership in a Bondo society (i.e., a secret society) is extremely valuable from a societal standpoint: the girls who are initiated at the same time remain close to each other for life, and constitute a mutual support network. Furthermore, since Bondo initiation marks the transition to adulthood, it is typically seen as a sequential pre-requisite for marriage. Unlike Nigeria, FGC is legal in Sierra Leone.

\paragraph{Child Marriage} In Nigeria, the prevalence of child marriage has a marked geographic gradient. Although most states in the South have domesticated the Childs Rights Act of 2003, which established 18 as the minimum age for marriage, many Northern states operate Sharia courts where marriage is permitted after puberty. Drivers of child marriage include poverty: in particular, the bride price (or ``mahr" among Muslims) and the incentive to reduce the number of dependents may induce parents to marry off young daughters in times of financial stress. Additionally, low returns to girls' education (actual or perceived) reduce the opportunity cost of early marriages.

Sierra Leone is also a country of Muslim-majority, but in this case there is no sharp regional divide when it comes to marriage patterns or legislation. The Child Rights Act of 2007 established age 18 as the minimum age for marriage country-wide. 
Although the role of bride price as a potential incentive for early marriage is similar to that in Nigeria, the initiation through Bondo societies plays a symbolic and almost ``institutional" role in shaping the timing of the marriage. In fact, once a girl returns from the Bondo bush, she is understood by the community to be an adult woman and ready for marriage. Furthermore, the fact that other girls initiated in the same cohort may get married can trigger peer effects (e.g., girls who remain unmarried may feel marginalized).

\paragraph{Polygyny.} Within the African continent, polygyny is most common in West and Central Africa, and, in fact, Nigeria has one of the  highest polygyny rates in the world. The drivers of this practice include signaling social status (wealthier men can afford to support more wives), a desire to have more children either in order to have more farm labor or for support in old age, and---in some circumstances---strategic alliances among clans or lineages. The relative roles of senior and junior wives are generally governed by custom in both Nigeria and Sierra Leone, with senior wives typically having greater authority over household decisions.


\section{Data and descriptive analysis}

\subsection{Data}
\label{sec:data}

We use repeated cross-sections from the Demographic and Health Surveys (DHS), which provide nationally representative microdata aimed at monitoring population and health worldwide.\footnote{The DHS employs a stratified two-stage cluster sampling design based on the most recent census frame available for each country. The microdata are publicly available from \url{https://dhsprogram.com/data/}.} 
We focus our analysis on Sierra Leone and Nigeria. For Sierra Leone, three survey rounds are available, namely 2008, 2013, and 2019. For Nigeria, the available rounds are 2003, 2008, 2013, 2018, 2024.

We use individual women's questionnaires administered to women aged 15-49, but restrict the sample for our main analysis to those aged 18-49 at the time of the interview. The reason is that we are interested in studying, among other norms, child marriage, and respondents younger than 18 would still be at risk of child marriage. 
The sample contains 34,437 women in Sierra Leone and 134,083 women in Nigeria.\footnote{The estimation samples differ between norm pairs due to
the outcome-specific missingness for FGC, PGY, and CMA.} 

Among other questions, the DHS women’s questionnaire contains harmonized modules on female genital cutting, marriage  history (which we use to construct measures of child marriage), and co-wives indicating polygyny of the husband. These are the three sets of norms on which we focus our main analysis. We next describe how the relevant variables are constructed.

\paragraph{Child Marriage (CMA).} The DHS collects retrospective information on a woman’s age at first marriage, asking respondents to report the age, month, and year at which they first got married. Our individual-level measure of child marriage is a dummy equal to one if a woman reports entering her first marriage before age 18.
We then aggregate these dummies at the birth-year $\times$ region $\times$ ethnic group level to construct relevant cohort-level proportions for our analysis.

\paragraph{Female Genital Cutting (FGC).} In the DHS, respondents are first asked whether they know what FGC means. We code women who report not knowing what FGC means as missing (that is, we restrict the sample to those individuals who likely had a choice between FGC and other norms; Supplemental Appendix \ref{sec:fgc_robustness} reports results where these observations are not excluded from the sample). Among women who respond affirmatively, the survey then asks whether they have been circumcised and, if so, the type of circumcision. We code the individual-level variable ``FGC'' as equal to one if a woman reports having undergone any form of circumcision, and as zero if a woman reports never having been circumcised. 
As for CMA, individual-level dummies are then aggregated into cohort-level proportions (at the birth-year $\times$ region $\times$ ethnic group level). 

\paragraph{Polygyny (PGY).} The DHS collects information on co-wives among currently married women, asking respondents to report the number of co-wives they have. Our individual-level measure of polygyny is a dummy equal to one if a woman reports having at least one co-wife and is not the first-ranked wife in the union, and equal to zero otherwise (i.e., zero includes women who are not currently married, who are in a monogamous marriage, and who are first-ranked wives). Restricting to non-first-ranked wives ensures that the variable captures women who entered a marriage that was already polygynous at union formation, meaning that both the woman and her family knew that she would be joining a polygynous household, unlike first-ranked wives whose families may have not anticipated future co-wives.

\paragraph{Covariates.} In our analysis, we control for a number of covariates taken from the DHS data. 
    \begin{enumerate}[label=(\roman*)]
        \item \textit{Education}. We use the highest level attained by respondents and classify it into three categories: less than primary education, primary education, and secondary education or higher. 
        \item \textit{Urban residence}. We include an indicator for urban residence, based on the DHS urban–rural classification drawn from official national records.
        \item \textit{Religion}. We construct a variable denoting whether the individual is Muslim, Christian or something else. 
        \item \textit{Household wealth}. We use the DHS wealth index, which is constructed through a principal component analysis of ownership of household assets and housing characteristics, including durable goods, household materials, and access to basic services such as water and sanitation. 
        We construct five dummies indicating whether individuals are in different quintiles of the wealth distribution (separately for each country).
        \item \textit{Ethnicity}. We use a self-reported categorical variable in which respondents identify their ethnic group from a country-specific list. For Sierra Leone, we recode this variable to capture the largest ethnic groups, resulting into three categories: Mende, Temne, or Other. In the case of Nigeria, we aggregate ethnic groups into Hausa-Fulani, Igbo, Yoruba, and Other.
        \item \textit{Region}. We use the subnational administrative divisions provided by the DHS, which typically correspond to the first administrative level below the national level (ADM1). For Sierra Leone, these regions correspond to four areas: the Northern province, the Western and North Western provinces, the Eastern province, and Southern province.\footnote{Regions correspond to Sierra Leone’s provinces, which we refer to as North, South, East, and West. Because the present-day Western Area is split into two provinces, as Northwest and West, but older observations cannot always be uniquely assigned to one of the two, we merge Northwest and West into a single “West” category, hence obtaining four instead of the current five provinces.} For Nigeria, the regions are North West, North East, North Central, South West, South East and South South.
    \end{enumerate}

We report the covariate composition in Supplemental Appendix Table \ref{tab:sl_summary1} for Sierra Leone and Supplemental Appendix Table \ref{tab:ng_summary1} for Nigeria. 


\paragraph{Child marriange bans} We exploit the Child Rights Act (2007) in Sierra Leone, which prohibits child marriage, to construct a cohort-based exposure indicator for cohorts born in and after 1990. This cutoff year corresponds to the cohort that was seventeen or younger at the time of the ban.
In Nigeria, the Child Rights Act was passed at the federal level in 2003 and sets the minimum legal age of marriage at 18. In our main specification, we use the country-level ban; in our robustness checks, we also show that results are robust as we control for state-level adoption of the ban.\footnote{Under Nigeria's federal system, the Child Rights Act becomes legally
binding upon state-level adoption and gubernatorial assent. However, variation in adoption years across states may be endogenous, and for this reason we prefer to rely on the national level ban (i.e., the first date in which the ban was enacted) in our main analysis. For the robustness checks that rely on state-level variation, we use state-specific assent years from the Partners West Africa Nigeria Child Rights Act Tracker and construct a state-by-cohort exposure indicator. Assent dates range from 2003 to 2023 across states, and at least one state has not formally assented to the Act (see Supplemental Appendix Table \ref{cra:list}).}

\paragraph*{Droughts} We construct an annual $0.5^{\circ} \times 0.5^{\circ}$ grid-cell--level drought indicator from monthly precipitation data from the Climatic Research Unit (CRU). We aggregate monthly precipitation totals to the annual level. To define droughts, we follow Corno et al. (2020) and define a drought shock as annual rainfall at a given location falling below the 15th percentile of its own historical distribution.\footnote{Under i.i.d.\ rainfall realizations, this definition implies that each grid cell has the same ex ante probability of experiencing a shock in any given year. As a result, the shock measure is orthogonal to time-invariant local characteristics, and identification relies on within-location variation driven by the timing of rainfall shortfalls over time.} We use DHS information on the geo-location of clusters to assign respondents to the corresponding weather grid. Clusters typically comprise 20--30 households, and the number of clusters varies by country and survey wave.\footnote{Cluster coordinates are displaced by the DHS for confidentiality —up to 2 km in urban areas and up to 5 km (with a small fraction up to 10 km) in rural areas—but this displacement is small relative to the $0.5^{\circ} \times 0.5^{\circ}$ grid size, limiting measurement error in the cluster-to-grid assignment.} We classify a woman as having been ``treated'' by a drought if her assigned grid cell experiences at least one drought year between ages 12 and 15. We choose this time window since this is the most common timing for child marriage.

\subsection{Descriptive analysis}
 
 Table \ref{tab:desc_stats} in the Supplemental Appendix \ref{S_Appendix_descriptives} reports aggregate adoption rates for FGC, CMA, PGY, and their joint prevalence across major socio-demographic sub-groups in Nigeria and Sierra Leone.



\begin{figure}[!ht]
 \caption{Observed and residualized shares of norm adopters in Sierra Leone}
    \centering
   \includegraphics[scale = 0.5]{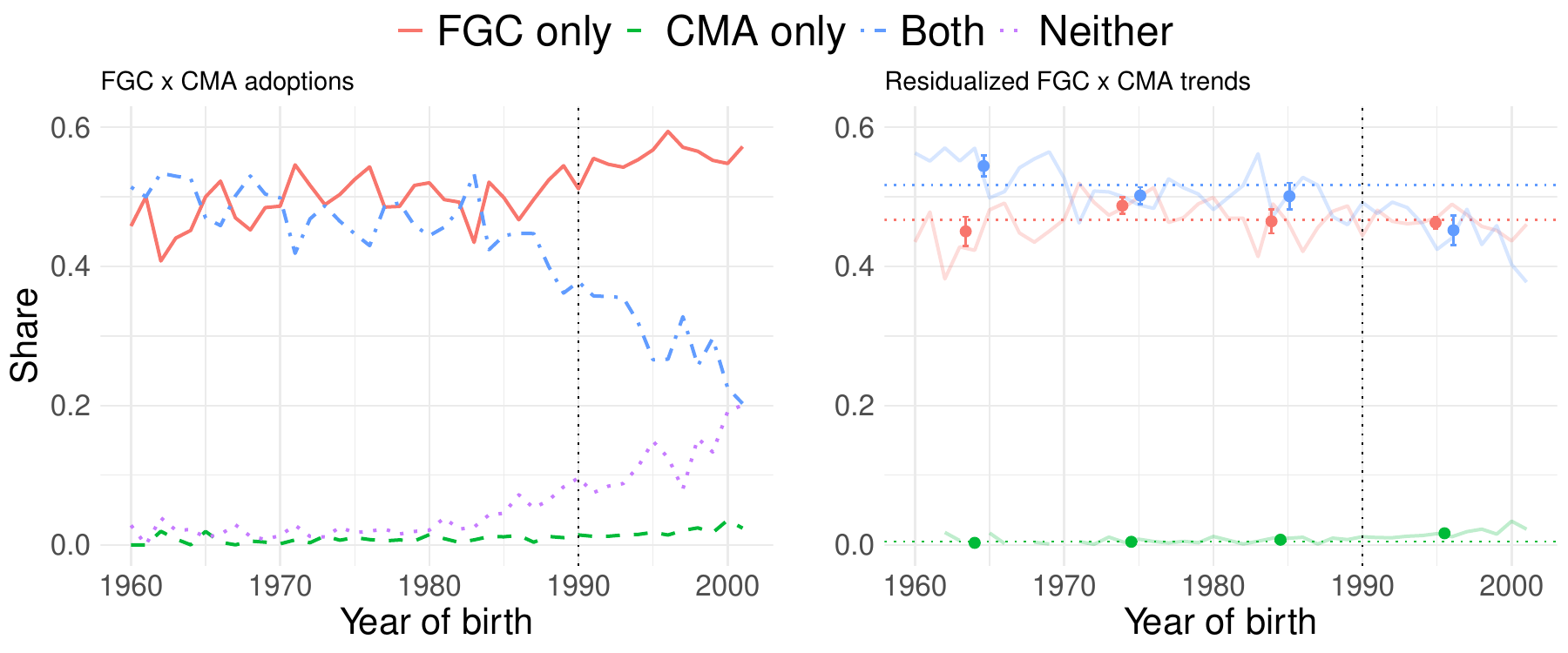}
   \includegraphics[scale = 0.5]{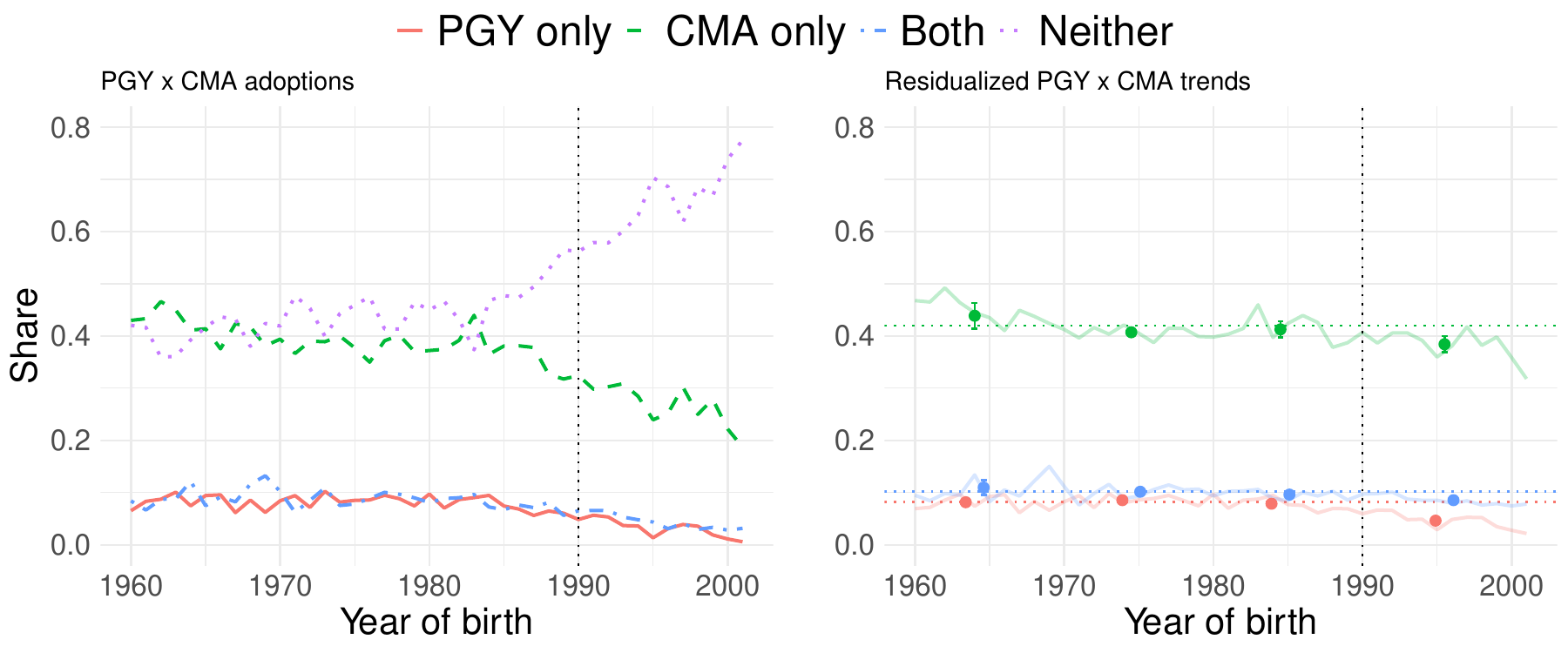} 
 
  \begin{flushleft}
  \footnotesize \emph{Notes:}  Left-top (resp. left-bottom) panel reports the share of surveyed individuals in Sierra Leone who had CMA only, FGC only or both, by birth cohorts (resp. CMA only, PGY only or both). Right-top (resp. right-bottom) panel reports the residualized share of of surveyed individuals in Sierra Leone who had CMA only, FGC only or both, by birth cohorts (resp. CMA only, PGY only or both), after residualizing by region and ethnicity fixed effects, education, religion, wealth and urban/rural indicators. The dashed lines report the residualized share and the bins report the average within four time windows (1958  - 1969, 1970 - 1979, 1980 - 1989, 1990 - 2001) with correspondig 95\% confidence intervals. 
   \end{flushleft}
       
    \label{fig:illustrative1}
\end{figure}

Figure~\ref{fig:illustrative1} (left panels) plots cohort profiles of norm adoption by respondents' year of birth: FGC only, CMA only, and joint adoption in the top row (panel a), and PGY only, CMA only, and joint adoption in the bottom row (panel c). The vertical line corresponds to the first cohort affected by the child marriage ban, i.e., the oldest girls who were not yet 18 when the Child Rights Act was passed in 2007. 
These cohort patterns may reflect multiple forces that may drive trends before the CMA ban, including (i) changes in covariate composition over time, in particular education, (ii) dynamic propagation through social spillovers, and (iii) complementarities across norms.

As a descriptive exercise, Figure~\ref{fig:illustrative1} (right panels) reports adoption shares after residualizing for region-by-ethnicity fixed effects and individual covariates. Specifically, we run a regression
\[
Y_{jit} \;=\; \alpha_{g(i)} \;+\; X_{i,t} \phi_j \;+\; \mathbbm{1}\{t \ge 1990\}\tau_j \;+\; \varepsilon_{jit}, \qquad \mathbb{E}[\varepsilon_{jit} | X_{i,t}] = 0
\]
where $Y_{jit}$ is a binary indicator for whether individual $i$ in cohort $t$ adopts bundle $j$ (e.g., FGC only, CMA only, or both; analogously for DMV--CMA). Here, $\alpha_{g(i)}$ denotes region-by-ethnicity fixed effects and $\tau_j$ captures a post-1990 shift in adoption; $X_{it}$ is a vector including current education level, religion, urban area and wealth. We do not assign a causal interpretation to this regression; rather, it serves as a descriptive device to summarize cohort trends net of observed composition. The residualized cohort shares plotted in Figure~\ref{fig:illustrative1} are
\[
\bar{Y}_{jt}^{\mathrm{res}} \;\equiv\; \frac{1}{n_t}\sum_{i:\,t(i)=t}\hat{\varepsilon}_{jit} \;+\; \mathbbm{1}\{t \ge 1990\}\hat{\tau}_j,
\]
where $\hat{\varepsilon}_{jit}$ and $\hat{\tau}_j$ are the OLS residuals and post-1990 coefficient estimate from the regression above, and $n_t$ is the number of respondents in cohort $t$ and $\{i: t(i) = t\}$ is the set of respondents of cohort $t$.

The thin lines show cohort-by-cohort residualized shares. The binned markers report averages over 1958--1969, 1970--1979, 1980--1989, and 1990--2001, with Gaussian 95\% confidence intervals. The dotted horizontal lines indicate pre-ban historical means (cohorts born before 1990). The binned residualized means are close to these historical means in the pre-1990 bins; in contrast, cohorts born after 1990 exhibit a sharp downward shift in joint adoption ($AB$), while FGC adoption remains essentially flat. 


 \begin{figure}[!ht]
    \centering
    \caption{Observed and residualized shares of norm adopters in Nigeria.}
   \includegraphics[width = 15cm]{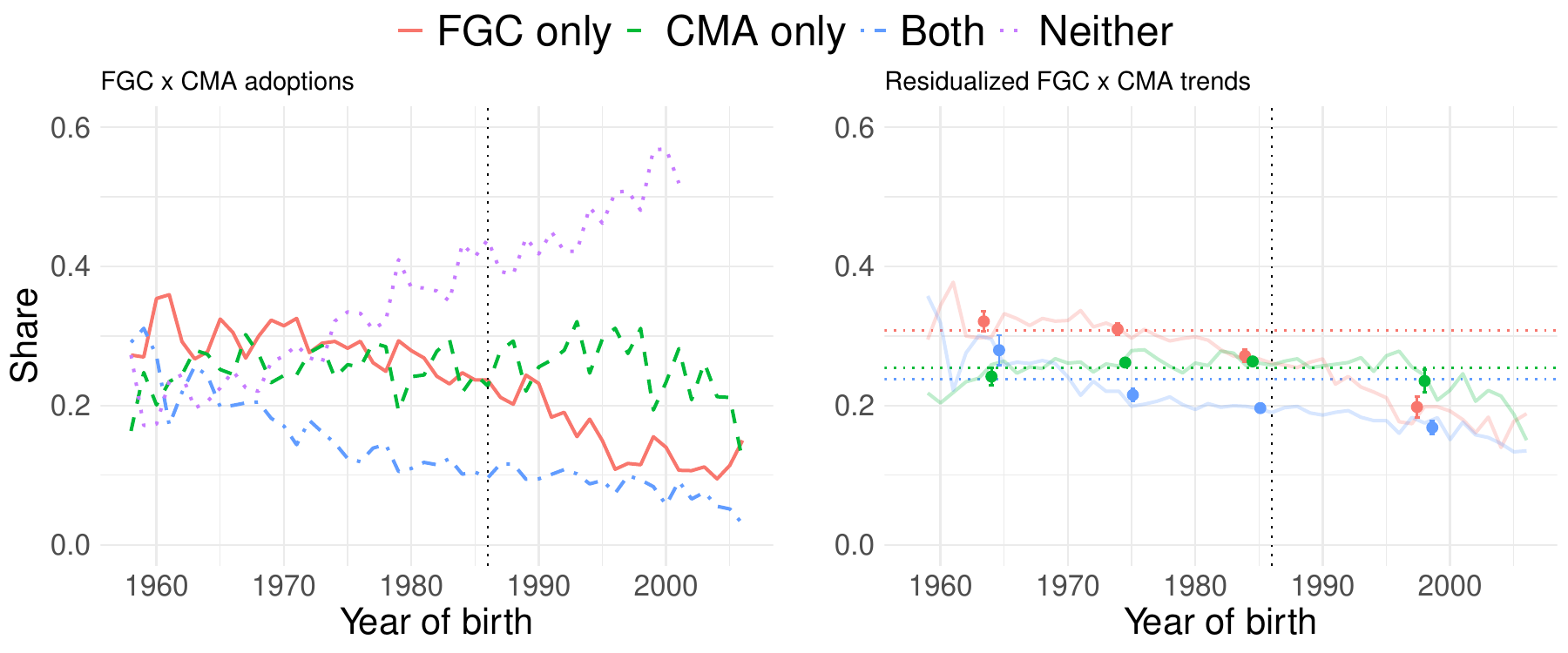}
   \includegraphics[width = 15cm]{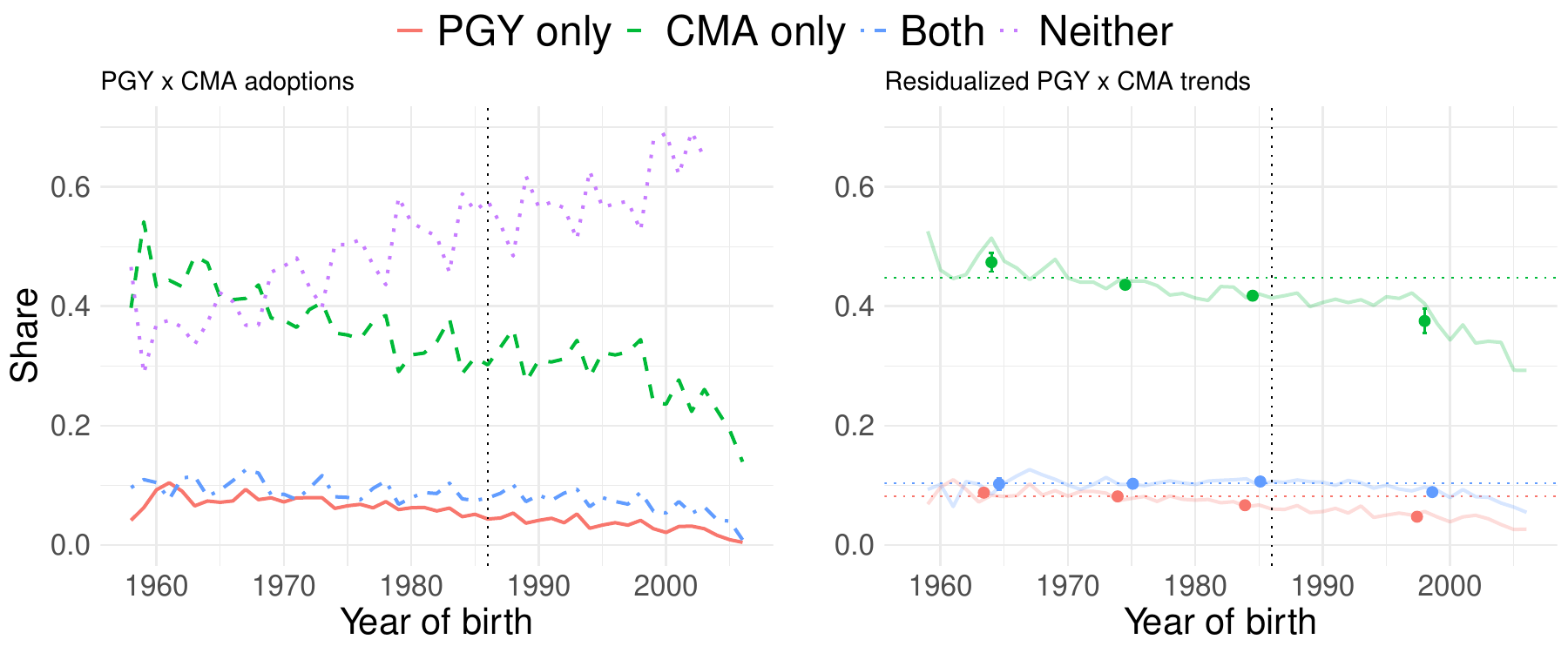}

     \begin{flushleft}
     \footnotesize \textit{Notes:} Left-top (resp. left-bottom) panel reports the share of surveyed individuals in Nigeria who had CMA only, FGC only or both, by birth cohorts (resp. CMA only, PGY only or both). Right-top (resp. right-bottom) panel reports the residualized share of of surveyed individuals in Nigeria who had CMA only, FGC only or both, by birth cohorts (resp. CMA only, PGY only or both), after residualizing by region and ethnicity fixed effects, education, religion, wealth and urban/rural indicators. The dashed lines report the residualized share and the bins report the average within four time windows (1958  - 1969, 1970 - 1979, 1980 - 1989, 1990 - 2001) with correspondig 95\% confidence intervals. 
    \end{flushleft}
    \label{fig:illustrative_Nigeria1}
\end{figure}

Figure~\ref{fig:illustrative_Nigeria1} shows the corresponding cohort trends in adoption shares for Nigeria, both in raw form and after residualizing for the observable composition and fixed effects for Nigeria. As before, the dotted vertical line corresponds to the first cohort affected by the child marriage ban. 
Unlike Sierra Leone, the residualized series exhibits pre-trends, suggesting that cohort dynamics are not well captured by a simple level shift at the national cutoff. These pre-trends may reflect persistence through lagged adoption (dependence on $p_{t-1}$) and dynamics over time. This motivates our main specification where we control for lagged adoption controls and, for Nigeria, also use drought-driven variation alone instead of the CMA ban (see Section \ref{sec:nigeria}).

 



\section{Empirical results}

In this section, we present our main results using the model in Equation \eqref{eqn:dynamic1}. Reference groups for social interactions are defined by at the region $\times$ ethnicity level, and we control for primary and secondary education, religion, urban residence, wealth, ethnicity and region fixed effects. 

\subsection{Female genital cutting and child marriage}
 
We start by studying the adoption of $(\text{FGC},\text{CMA})$ in Sierra Leone, our main specification. Our model treats adoption of each norm, as well as joint adoption, as a static choice. In the FGC--CMA setting, this choice is naturally interpreted as a parental decision: FGC typically occurs around age 12 and CMA around age 15 on average (see Supplemental Appendix Figure~\ref{SL:agedist}). 

 The estimated model parameters are shown in Table \ref{tab:main1_nodrought_fgc}. We report two variants of the model: (i) a restricted specification that estimates $(s_A,s_B)$ and imposes additivity of sanctions, $s_{AB}=s_A+s_B$ (columns 1-2); and (ii) an unrestricted specification that additionally estimates $s_{AB}$ (columns 3-4).\footnote{Note that, even though they are displayed in two separate columns, the parameters in columns 1-2 pertain to one estimation, and so do the parameters in columns 3-4.}

 \paragraph{Exclusion restriction: CMA ban} Our main exclusion restriction in Sierra Leone exploits the introduction of the child-marriage ban. In the model, this policy enters only the utility of child marriage (CMA), providing a shifter for CMA relative to the other norm.\footnote{Adding drought controls leaves the estimates unchanged (the drought coefficient is statistically insignificant and close to zero), potentially due to limited within-country variation in drought exposure in Sierra Leone (see Figure~\ref{SL:drought}).}  The child marriage ban enters negatively in the CMA equation in both specifications with a 0.07 standard deviation effect that is statistically significant. 

\paragraph{Social multipliers} In column 1, the estimated social multiplier for FGC ($s_A$) is large (2.46) and statistically significant at the 1 percent level. In contrast, the CMA social multiplier ($s_B$ in column 2) is much smaller (0.36), although also significant at the 1 percent level. The comparatively large multiplier for FGC is consistent with the view that FGC in Sierra Leone is a highly socially coordinated practice, tied to social inclusion and group membership through the Bondo secret society.  
 
 \paragraph{Test for linear social interactions} Allowing for non-additive sanctions (columns 3-4) does not change the conclusions: 
 we find $s_{AB} = 2.83$, while $s_A + s_B = 2.91$. Using a standard Wald test, we cannot reject the null hypothesis that $s_{AB} = s_A + s_B$ (p-value is $67\%$). 

\paragraph{Complementarities} The estimated parameter $\Gamma$ is positive and statistically significant in both specifications, with $\Gamma \approx 1$. 
Thus, by Theorem \ref{prop_gamma_fixed_point_extended}, the data suggests that CMA and FGC are complements. Such complementarity is easy to rationalize in the context of Sierra Leone because, as discussed in section \ref{sec:norms_background}, the fact that the initiation to Bondo occurs in adolescence creates a natural segway into (early) marriage. Another way to view this is that, if a family wants to marry off a daughter, she will first need to undergo Bondo initiation in order to be recognized as a woman.

 \begin{table}[!htbp]
\caption{Sierra Leone, main results: estimated complementarity and social multipliers (FGC$\times$CMA).}
\label{tab:main1_nodrought_fgc}
\centering
\resizebox{\ifdim\width>\linewidth\linewidth\else\width\fi}{!}{
\begin{tabular}{lcccc}
\toprule
 \multicolumn{1}{c}{\textbf{Sierra Leone}}  & \multicolumn{4}{c}{FGC $\times$ CMA} \\
\cmidrule(l{3pt}r{3pt}){2-5}
 & \multicolumn{2}{c}{ $s_A, s_B$ } & \multicolumn{2}{c}{$s_A, s_B, s_{AB}$} \\
\cmidrule(l{3pt}r{3pt}){2-3} \cmidrule(l{3pt}r{3pt}){4-5}
 & FGC  & CMA  &  FGC &  CMA \\
 & (1) & (2) & (3) & (4) \\
\midrule
CMA Ban & \multicolumn{2}{c}{$-0.07^{***}$} & \multicolumn{2}{c}{$-0.07^{***}$} \\
 & \multicolumn{2}{c}{$(0.02)$} & \multicolumn{2}{c}{$(0.02)$} \\
$\Gamma$ & \multicolumn{2}{c}{$1.05^{***}$} & \multicolumn{2}{c}{$1.08^{***}$} \\
 & \multicolumn{2}{c}{$(0.07)$} & \multicolumn{2}{c}{$(0.08)$} \\
$s$ & $2.46^{***}$ & $0.36^{***}$ & $2.47^{***}$ & $0.44^{*}$ \\
 & $(0.24)$ & $(0.13)$ & $(0.25)$ & $(0.24)$ \\
 $s_{AB}$ & \multicolumn{2}{c}{} & \multicolumn{2}{c}{$2.83^{***}$} \\
& \multicolumn{2}{c}{} & \multicolumn{2}{c}{$(0.28)$} \\
$\rho$ & \multicolumn{2}{c}{$-0.70^{***}$} & \multicolumn{2}{c}{$-0.70^{***}$} \\
 & \multicolumn{2}{c}{$(0.04)$} & \multicolumn{2}{c}{$(0.04)$} \\
 \hline
Primary Education & $-0.16^{***}$ & $-0.05^{**}$ & $-0.16^{***}$ & $-0.05^{**}$ \\
 & $(0.04)$ & $(0.02)$ & $(0.04)$ & $(0.02)$ \\
Secondary Education & $-0.31^{***}$ & $-0.54^{***}$ & $-0.31^{***}$ & $-0.54^{***}$ \\
 & $(0.04)$ & $(0.02)$ & $(0.04)$ & $(0.02)$ \\
Muslim & $0.23$ & $-0.03$ & $0.23$ & $-0.03$ \\
 & $(0.17)$ & $(0.10)$ & $(0.17)$ & $(0.10)$ \\
Christian & $-0.19$ & $-0.04$ & $-0.19$ & $-0.04$ \\
 & $(0.17)$ & $(0.11)$ & $(0.17)$ & $(0.11)$ \\
Urban & $-0.04$ & $-0.06^{***}$ & $-0.04$ & $-0.06^{***}$ \\
 & $(0.03)$ & $(0.02)$ & $(0.03)$ & $(0.02)$ \\
\midrule
Ethnicity FE & Yes & Yes & Yes & Yes \\
Region FE & Yes & Yes & Yes & Yes \\
Wealth FE & Yes & Yes & Yes & Yes \\
\midrule  p-value for $H_0: s_{AB} = s_A + s_B$ & \multicolumn{2}{c}{ } &  \multicolumn{2}{c}{ 0.67  } \\
\midrule
Sample size & \multicolumn{2}{c}{33161 } & \multicolumn{2}{c}{ 33161 } \\
\bottomrule
\end{tabular}
}
 
\vspace{0.4cm}

\begin{tablenotes}
    \footnotesize \item \textit{Note:} Maximum-likelihood estimates from the two-norm adoption model for FGC$\times$CMA. Columns labeled $s_A,s_B$ impose additive sanctions ($s_{AB}=s_A+s_B$); columns labeled $s_A,s_B,s_{AB}$ additionally estimate $s_{AB}$. The exclusion restriction is the Child Marriage (CMA) ban for cohorts born after 1990, which enters only the CMA utility but not the FGC utility. Reference groups for social interactions are defined by region$\times$ethnicity. All specifications control for education, religion, urban residence, and ethnicity, region, and wealth fixed effects. Standard errors in parentheses.
\end{tablenotes}
\end{table}

\paragraph{Idiosyncratic taste shocks.} Finally, we find that   
 the estimated $\rho$ is negative and significant. A negative $\rho$ indicates negative correlations in idiosyncratic taste shocks, which  may occur in the presence of excluded variables that affect the utility of each norm with different signs \citep{Gentzkow_AER_2007}.

\paragraph{Policy counterfactuals}  We consider two main policy counterfactuals. The first examines the effect of reducing the baseline utility of one or both norms, which can be interpreted, for example, as the impact of a ban or other measures that directly lower the private returns to adoption. The second considers an intervention that reduces the social spillover component $s_j$, which may capture policies such as information campaigns that weaken perceived sanctions, or weakening of social institutions traditionally responsible for norm enforcement. (see Section~\ref{subsec:policy_counterfactuals} for a formal definition).


Table \ref{tab:main1_counterfactual} collects policy counterfactual for interventions to baseline utilities. 
Several patterns stand out. First, policies that lower the baseline utility of FGC (i.e., reduce $\delta_A$) generate large and increasing declines in the overall prevalence of either norm, as measured by $\Delta Q_{\mathrm{tot}}\equiv \Delta(p_A+p_B+p_{AB})$. For example, a $-0.1$ shock to FGC-utility ($\delta_A$) implies $\Delta Q_{\mathrm{tot}}=-0.019$ at horizon $h=1$ and $-0.083$ at $h=10$, while a $-0.2$ shock yields $-0.041$ at $h=1$ and $-0.202$ at $h=10$, where each period corresponds to one year (so $h = 10$ is a ten year horizon). In contrast, policies that lower the baseline utility of CMA (i.e., reduce $\delta_B$) produce smaller reductions in overall adoption even at longer horizons (e.g., $\Delta Q_{\mathrm{tot}}=-0.029$ at $h=10$ for $\Delta\delta_B=-0.1$, and $-0.058$ for $\Delta\delta_B=-0.2$). This suggests that FGC-directed interventions are more effective than CMA-directed interventions in reducing the prevalence of either norm.

Second, the composition effects differ across interventions. The reduction in utility of CMA ($\delta_B$) mainly reduces joint adoption ($\Delta p_{AB}<0$) while partially reallocating mass to FGC-only adoption ($\Delta p_A>0$), while the reduction of utility of FGC reduces both FGC-only adoption and joint adoption, with only a small offset increase in CMA-only adoption.

Finally, joint reductions in the baseline utilities of both FGC and CMA generate the largest overall declines, though the incremental gains relative to an FGC-only intervention vary significantly. Depending on the horizon and shock magnitude, joint interventions reduce total prevalence between 10 and 50 percent more than FGC-only interventions.

\begin{table}[!ht]
\caption{Sierra Leone (FGC$\times$CMA): baseline-utility counterfactuals.}
\label{tab:main1_counterfactual}
\centering

\resizebox{\textwidth}{!}{%
\begin{tabular}{lccccccccc}
\toprule
\multicolumn{1}{c}{A: FGC, B: CMA } & \multicolumn{3}{c}{$h=1$} & \multicolumn{3}{c}{$h=5$} & \multicolumn{3}{c}{$h=10$} \\
\cmidrule(l{3pt}r{3pt}){2-4} 
\cmidrule(l{3pt}r{3pt}){5-7} 
\cmidrule(l{3pt}r{3pt}){8-10}
 & $\Delta p_A$ & $\Delta p_B$ & $\Delta p_{AB}$ 
 & $\Delta p_A$ & $\Delta p_B$ & $\Delta p_{AB}$ 
 & $\Delta p_A$ & $\Delta p_B$ & $\Delta p_{AB}$\\
\midrule
\multicolumn{10}{l}{\textit{Changes in $\delta_A$ (constantA)}}\\
\addlinespace
$\Delta\delta_A=-0.100$ 
& -0.012 & 0.002 & -0.009 
& -0.036 & 0.005 & -0.027 
& -0.055 & 0.006 & -0.034\\
$\Delta\delta_A=-0.200$ 
& -0.026 & 0.005 & -0.020 
& -0.089 & 0.012 & -0.063 
& -0.133 & 0.016 & -0.085\\
$\Delta\delta_A=-0.500$ 
& -0.077 & 0.013 & -0.057 
& -0.360 & 0.039 & -0.210 
& -0.498 & 0.045 & -0.256\\

\midrule
\multicolumn{10}{l}{\textit{Changes in $\delta_B$ (constantB)}}\\
\addlinespace
$\Delta\delta_B=-0.100$ 
& 0.026 & -0.001 & -0.033 
& 0.022 & -0.001 & -0.041 
& 0.015 & -0.001 & -0.043\\
$\Delta\delta_B=-0.200$ 
& 0.050 & -0.003 & -0.063 
& 0.040 & -0.002 & -0.079 
& 0.026 & -0.002 & -0.082\\
$\Delta\delta_B=-0.500$ 
& 0.109 & -0.007 & -0.141 
& 0.071 & -0.006 & -0.168 
& 0.039 & -0.006 & -0.171\\

\bottomrule
\end{tabular}
}

\vspace{0.4cm}

\begin{tablenotes}
    \footnotesize \item \textit{Notes:} Model-implied changes in adoption shares at horizons $h=1,5,10$ following permanent decreases in baseline utilities $\delta_A$ (FGC), $\delta_B$ (CMA). Entries report changes in the share adopting only $A$ ($\Delta p_A$), only $B$ ($\Delta p_B$), and both norms ($\Delta p_{AB}$), relative to the baseline predicted path under the estimated model.
\end{tablenotes}

\end{table}

Figure~\ref{fig:projection1} plots the predicted dynamics under a mild and a stringent intervention ($\Delta\delta=-0.1$ and $\Delta\delta=-0.5$), targeting either FGC (left panels) or CMA (right panels). The figure shows that an FGC-directed intervention generates a large decline in both FGC-only adoption and joint adoption, accompanied by only a modest increase in CMA-only adoption. By contrast, a CMA-only intervention operates primarily through joint adopters: it reduces joint adoption, induces an initial reallocation toward FGC-only adoption, which then gradually returns toward the original steady state. Overall, the figure highlights why FGC interventions can be more effective in equilibrium.

\begin{figure}[!h]
    \centering
    \caption{Ten-year projections under baseline-utility interventions (FGC$\times$CMA) in Sierra Leone}
    \includegraphics[width = 15cm]{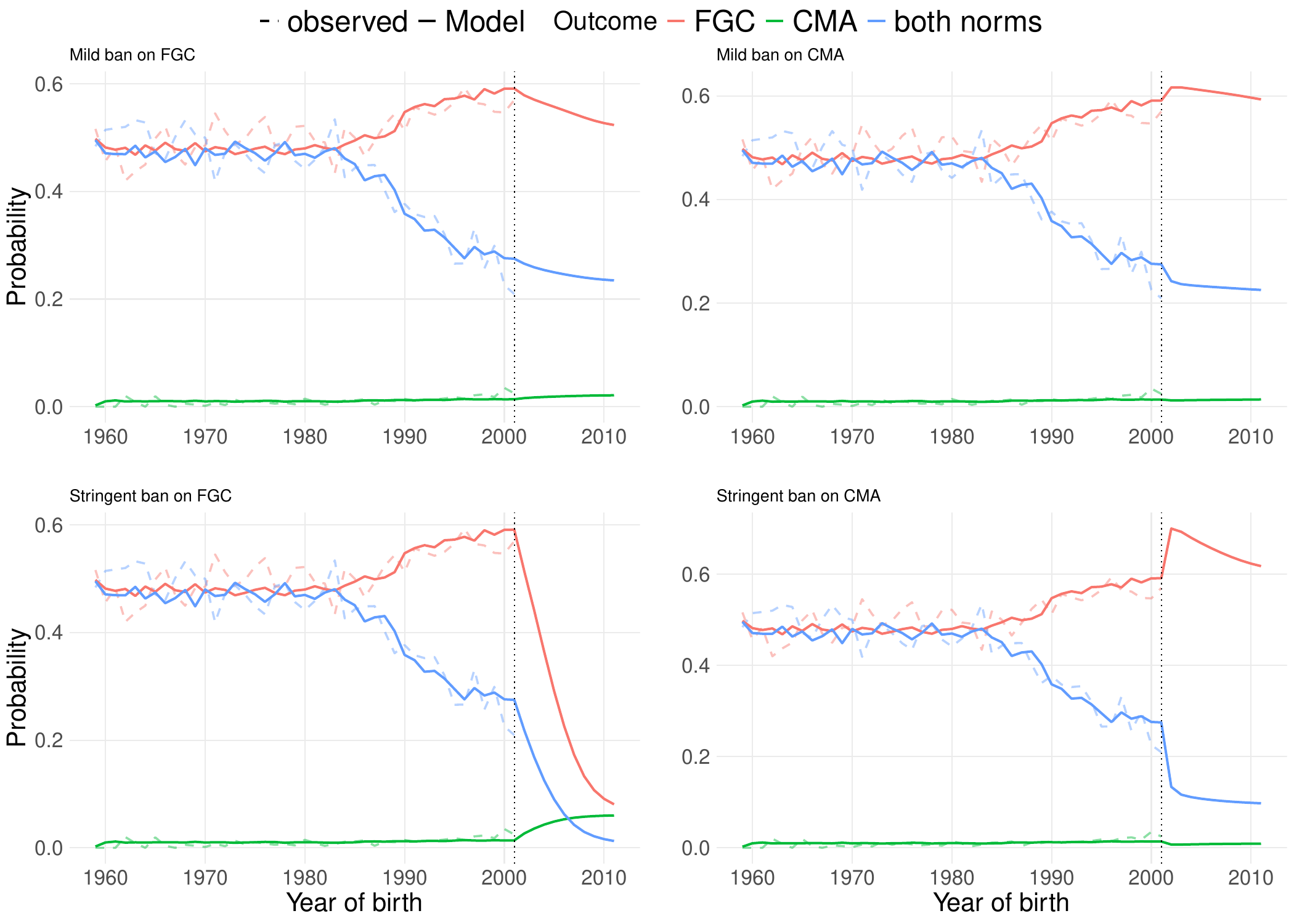}

\begin{flushleft}
    \footnotesize \textit{Notes:} Observed cohort adoption shares (dashed) and model-implied shares (solid) for adopting FGC only (red), CMA only (green), or both norms (blue). The vertical dotted line marks the start of the projection window (last observed cohort). Panels report counterfactual paths obtained by iterating the estimated model forward after applying one-time utility shocks: a mild FGC-only intervention (top left, $\Delta\delta_{\mathrm{FGC}}=-0.1$), a mild intervention on CMA (top right, $\Delta\delta_{\mathrm{CMA}}=-0.1$), a stringent FGC-only intervention (bottom left, $\Delta\delta_{\mathrm{FGC}}=-0.5$), and a stringent intervention on CMA (bottom right, $\Delta\delta_{\mathrm{CMA}}=-0.5$). Shocks are in standard-deviation units of the idiosyncratic utility shocks; covariates are held fixed at their last-observed distribution.
\end{flushleft}

    \label{fig:projection1}
\end{figure}

Table~\ref{tab:main2_counterfactual} considers counterfactuals for policies that aim to weaken social interactions. Holding all other parameters and initial conditions fixed, we scale down the FGC social multiplier to $s_A' \in \{0.9,0.7,0.5\}s_A$, and equivalently reducing $s_{AB}$ by $\{10,30,50\}\% s_A$. The effects are substantial even for modest reductions in $s_A$: a 10\% decrease lowers overall adoption $Q_{\mathrm{tot}}$ by $-0.04$ at $h=1$ and $-0.16$ by $h=10$, while halving $s_A$ reduces $Q_{\mathrm{tot}}$ by $-0.33$ at $h=1$ and $-0.62$ by $h=10$. These changes are driven primarily by sharp declines in FGC-only and joint adoption, with only a small offsetting increase in CMA-only adoption. Consistent with complementarities, weakening FGC spillovers also reduces CMA adoption through the joint-adoption margin.

\begin{table}[!htbp]
\caption{Sierra Leone (FGC$\times$CMA): social-interaction counterfactuals.}
\label{tab:main2_counterfactual}
\centering
\resizebox{\textwidth}{!}{%
\begin{tabular}{lccccccccc}\toprule\multicolumn{1}{c}{ } & \multicolumn{3}{c}{$h=1$} & \multicolumn{3}{c}{$h=5$} & \multicolumn{3}{c}{$h=10$} \\\cmidrule(l{3pt}r{3pt}){2-4} \cmidrule(l{3pt}r{3pt}){5-7} \cmidrule(l{3pt}r{3pt}){8-10} & $\Delta p_A$ & $\Delta p_B$ & $\Delta p_{AB}$ & $\Delta p_A$ & $\Delta p_B$ & $\Delta p_{AB}$ & $\Delta p_A$ & $\Delta p_B$ & $\Delta p_{AB}$\\
$s_A\leftarrow 0.90\,s_A$ & -0.027 & 0.005 & -0.022 & -0.078 & 0.012 & -0.060 & -0.103 & 0.014 & -0.074\\$s_A\leftarrow 0.70\,s_A$ & -0.106 & 0.018 & -0.079 & -0.316 & 0.037 & -0.200 & -0.362 & 0.039 & -0.218\\$s_A\leftarrow 0.50\,s_A$ & -0.215 & 0.033 & -0.145 & -0.424 & 0.043 & -0.242 & -0.422 & 0.043 & -0.240\\\bottomrule\end{tabular} }
\vspace{0.4cm}

\begin{tablenotes}
    \footnotesize \item \textit{Note:} Model-implied changes in adoption shares at horizons $h=1,5,10$ when weakening social interactions for norm $A$ (FGC) by setting $s_A' \in \{0.9,0.7,0.5\}s_A$, holding all other parameters and initial conditions fixed. Entries report changes in $\Delta p_A$, $\Delta p_B$, and $\Delta p_{AB}$ relative to the baseline predicted path.
\end{tablenotes}
\end{table}

Relative to the baseline-utility counterfactuals in Table~\ref{tab:main1_counterfactual}, spillover interventions can be at least as powerful in equilibrium. For example, a mild reduction in baseline FGC utility lowers $Q_{\mathrm{tot}}$ by $-0.019$ at $h=1$ and $-0.083$ at $h=10$, whereas a 10\% reduction in $s_A$ lowers $Q_{\mathrm{tot}}$ by $-0.04$ at $h=1$ and $-0.16$ at $h=10$. More generally, these counterfactuals highlight two implications of the estimated model: (i) complementarities imply that targeting one norm can shift adoption of the other through joint adoption, and (ii) policies that reduce social spillovers can generate large changes in equilibrium adoption shares. These dynamics are also illustrated in Figure \ref{fig:projection2}, which reports the predicted paths over a ten-period horizon.

\begin{figure}[!ht]
    \centering
    \caption{Ten-year projections under reduced social spillovers for FGC (FGC$\times$CMA) in Sierra Leone}
    \includegraphics[width=15cm]{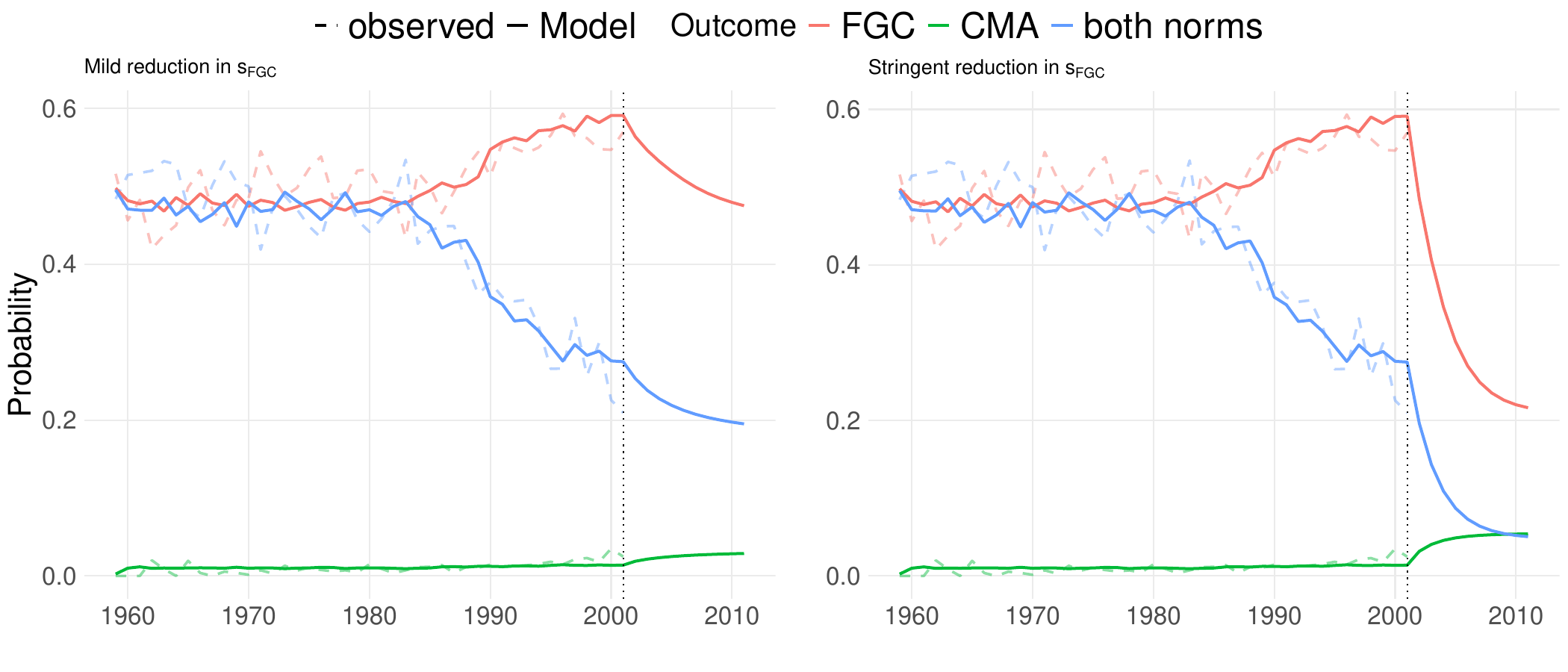}
\begin{flushleft}
    \footnotesize \textit{Notes:} Observed cohort adoption shares (dashed) and model-implied shares (solid) for adopting FGC only (red), CMA only (green), or both norms (blue). The vertical dotted line marks the start of the projection window (last observed cohort). Counterfactual paths are generated by iterating the estimated model forward after weakening the FGC social multiplier: mild reduction (left, $s_{\mathrm{FGC}}'\!=0.9\,s_{\mathrm{FGC}}$) and stringent reduction (right, $s_{\mathrm{FGC}}'\!=0.5\,s_{\mathrm{FGC}}$), holding all other parameters and initial conditions fixed and keeping covariates at their last-observed distribution.
\end{flushleft}
    \label{fig:projection2}
\end{figure}

\paragraph{FGC and CMA in Nigeria}  \label{sec:nigeria}

In Appendix \ref{appendix_section_NG_FGC_CMA}, we complement the  Sierra Leone analysis by performing the analogous exercise in Nigeria. A distinctive feature of Nigeria  is the pronounced asymmetry in timing: FGC is typically realized at younger ages than CMA. Specifically, average age of FGC is three years and average CMA age is fifteen, see Figure \ref{NG:age}, while the support of FGC timing decision overlaps with the support of CMA allowing us to identify complementarities in those settings where decisions are joint. This timing difference suggests that complementarities and idiosyncratic taste correlations between FGC and CMA should be smaller than in Sierra Leone, where the two decisions are closer in age and more likely to be jointly determined, providing a natural ``test'' for our model.\footnote{In Supplemental Appendix \ref{sec:fgc_robustness} we also report robustness checks where we drop individuals for which FGC was performed early in their life (before five years old).}  

In addition, in Nigeria we also use drought for the exclusion restriction with or without the CMA ban. We find significantly smaller complementary effects and weaker shock correlations compared to Sierra Leone ($\Gamma = -0.09, \rho = -0.1$).


\paragraph{Robustness  checks}

In Section \ref{sec:fgc_robustness}, we report several robustness checks for both Sierra Leone and Nigeria. For both countries, we re-estimate the model after excluding individuals exposed to FGC before age five and after treating respondents who report not knowing what FGC is as non-adopters. We also report specifications that use drought as an additional exclusion restriction, including in Sierra Leone. For Nigeria, we further consider specifications that control for the Sharia law passed in 1999 as an additional covariate. The results are qualitatively unchanged for all these exercises, with similar point estimates and statistical significance for all specifications.

\subsection{Child marriage and polygyny}

We next turn to the joint adoption of $(\text{PGY},\text{CMA})$ in Nigeria. 
 The estimated model parameters are reported in Table~\ref{tab:ng_main1_nodrought_pgy}. We again report two variants: (i) a restricted specification that estimates $(s_A,s_B)$ and imposes additivity of sanctions, $s_{AB}=s_A+s_B$ (columns 1-2); and (ii) an unrestricted specification that additionally estimates $s_{AB}$ (columns 3-4). 
As before, identification exploits the child-marriage ban, which in the model enters only the CMA utility and provides an exogenous shifter for CMA relative to PGY. The ban leads to significant negative effects on both specifications, with estimates around $-0.04$ (s.e.\ 0.01).\footnote{In Supplemental Appendix \ref{sec:pgy_robustness}, we show that the results remain robust when drought is used as an additional or only exclusion restriction.} 

\paragraph{Lack of technological complementarities.} Turning to technological complementarities, the estimate of $\Gamma$ is close to zero. Under additive sanctions, $\Gamma=-0.06$ (s.e.\ 0.03) and marginally significant, while under the unrestricted specification $\Gamma=0.05$ (s.e.\ 0.05) and not statistically different from zero. Taken together, these estimates provide evidence of lack of non-social (technological) complementarities between PGY and CMA in Nigeria. 
This is not an obvious result because, if one assumed that CMA and PGY were mutually reinforcing practices used to perpetuate a patriarchal culture, one would expect a positive $\Gamma$. More insights on the relationship between the two norms will emerge from considering the social return parameters, to which we turn next.

\paragraph{Social returns.} First, we note that the estimated social multipliers are large for both norms. Under the unrestricted specification, $s_A=2.45$ (s.e.\ 0.09) for PGY and $s_B=1.26$ (s.e.\ 0.07) for CMA. In contrast to the (FGC, CMA) bundle, the data provide strong evidence against additive social interactions for the PGY--CMA pair. Under the additive specification, $s_A+s_B=3.51$, while the unrestricted estimate is $s_{AB}=3.01$ (s.e.\ 0.11). The implied gap is statistically significant, and we reject $H_0:s_{AB}=s_A+s_B$ (p-value $=0$ from a Wald test). Thus, while joint adoption is socially reinforced (since $s_{AB}$ is large and positive), the joint social return is \emph{sub-additive} relative to the sum of the marginal conformity incentives. 

Recall from Theorem 1 in section \ref{sec:equil} that, when $\Gamma \approx 0$, $s_{AB}<s_A+s_B$ implies that the two norms are social (and overall) \textit{substitutes}. 
To understand this result, it may be useful to highlight the asymmetry in the ``status returns" accruing to the husband and to the wife in a polygynous union. For the husband, having an extra wife unambiguously signals wealth and -potentially- virility (especially if the new wife is young). For the bride, the signal is more mixed. Although entering a wealthy family can be status-enhancing for the bride and for her family, this effect depends on the rank the wife will have within the new household. The role of a junior wife is subordinate to that of the first wife. Therefore, when parents consider the options for marrying a young daughter (who is highly valued in the marriage market for her young age), they will be less likely to accept a position of second or third wife for her. As a result, substitution effect may arise in our context, as we construct polygynous unions based on the wife not being the first wife (see the variable description in section \ref{sec:data}).\footnote{The reason for restricting to non-first ranked wives in our definition is that we aim to capture decisions where parents know that the girl will enter a polygynous household at the moment of deciding, as opposed to capturing cases where additional wives are added later on and could not have been anticipated at the time of marriage.}

In Appendix \ref{appendix_section_SL_PGY_CMA}, we report the analysis for Sierra Leone, showing similar results (i.e., the two norms are substitutes) except that the magnitude of social returns is smaller than in Nigeria. This may reflect the fact that in  Sierra Leone,  most social pressure is centered around the ritual of FGC (``Bondo''), relative to other gender norms.

\begin{table}[!h]
\caption{Nigeria (PGY$\times$CMA): Estimated complementarity and social multipliers.}
\label{tab:ng_main1_nodrought_pgy}
\centering
\resizebox{\ifdim\width>\linewidth\linewidth\else\width\fi}{!}{
\begin{tabular}{lcccc}
\toprule
\multicolumn{1}{c}{\textbf{Nigeria}} & \multicolumn{4}{c}{ PGY $\times$ CMA } \\
\cmidrule(l{3pt}r{3pt}){2-5}
 & \multicolumn{2}{c}{$s_A,s_B$} & \multicolumn{2}{c}{$s_A,s_B,s_{AB}$} \\
\cmidrule(l{3pt}r{3pt}){2-3} \cmidrule(l{3pt}r{3pt}){4-5}
 & PGY  & CMA  & PGY & CMA \\
 & (1) & (2) & (3) & (4) \\
\midrule
CMA Ban &  & $-0.04^{***}$ &  & $-0.04^{***}$ \\
 &  & $(0.01)$ &  & $(0.01)$ \\
$\Gamma$ & \multicolumn{2}{c}{$-0.06^{*}$} & \multicolumn{2}{c}{$0.05$} \\
 & \multicolumn{2}{c}{$(0.03)$} & \multicolumn{2}{c}{$(0.05)$} \\
$s$ & $2.27^{***}$ & $1.24^{***}$ & $2.45^{***}$ & $1.26^{***}$ \\
 & $(0.11)$ & $(0.07)$ & $(0.09)$ & $(0.07)$ \\
$s_{AB}$ &  &  &  \multicolumn{2}{c}{$3.01^{***}$} \\
 &  &  &  \multicolumn{2}{c}{$(0.11)$} \\
 \hline
$\rho$ & \multicolumn{2}{c}{$0.11^{***}$} & \multicolumn{2}{c}{$0.07$} \\
 & \multicolumn{2}{c}{$(0.03)$} & \multicolumn{2}{c}{$(0.04)$} \\
Primary Education & $-0.13^{***}$ & $-0.12^{***}$ & $-0.13^{***}$ & $-0.12^{***}$ \\
 & $(0.02)$ & $(0.01)$ & $(0.01)$ & $(0.01)$ \\
Secondary Education & $-0.58^{***}$ & $-0.84^{***}$ & $-0.57^{***}$ & $-0.83^{***}$ \\
 & $(0.02)$ & $(0.01)$ & $(0.02)$ & $(0.01)$ \\
Muslim & $0.38^{***}$ & $0.26^{***}$ & $0.38^{***}$ & $0.26^{***}$ \\
 & $(0.02)$ & $(0.01)$ & $(0.02)$ & $(0.01)$ \\
Christian & $0.00$ & $0.00$ & $0.00$ & $0.00$ \\
 & $(0.02)$ & $(0.02)$ & $(0.02)$ & $(0.02)$ \\
Urban & $-0.16^{***}$ & $-0.14^{***}$ & $-0.16^{***}$ & $-0.14^{***}$ \\
 & $(0.01)$ & $(0.01)$ & $(0.01)$ & $(0.01)$ \\
\midrule
Ethnicity FE & Yes & Yes & Yes & Yes \\
Region FE & Yes & Yes & Yes & Yes \\
Wealth FE & Yes & Yes & Yes & Yes \\
\midrule  p-value for $H_0: s_{AB} = s_A + s_B$ & \multicolumn{2}{c}{} & \multicolumn{2}{c}{ $0.00^{***}$ } \\
\midrule
Sample size & \multicolumn{2}{c}{ 128785  } & \multicolumn{2}{c}{128785 } \\
\bottomrule
\end{tabular}
}
\vspace{0.4cm}

\begin{tablenotes}
    \footnotesize \item  \textit{Notes:} Maximum-likelihood estimates from the two-norm adoption model for PGY$\times$CMA. Columns labeled $s_A,s_B$ impose additive sanctions ($s_{AB}=s_A+s_B$); columns labeled $s_A,s_B,s_{AB}$ additionally estimate $s_{AB}$. The exclusion restriction is the Child Marriage (CMA) ban for cohorts born after 1990, which enters only the CMA utility but not the PGY utility. Reference groups for social interactions are defined by region$\times$ethnicity. All specifications control for education, religion, urban residence, and ethnicity, region, and wealth fixed effects. Standard errors in parentheses.
\end{tablenotes}
\end{table}

\paragraph{Policy counterfactuals} 
We replicate the two counterfactual exercises for Nigeria using the PGY$\times$CMA specification. Relative to the Sierra Leone (FGC$\times$CMA) bundle, the model-implied policy effects are smaller in Nigeria at comparable horizons. This difference is consistent with two features of the Nigerian estimates: (i) limited evidence of technological complementarities ($\Gamma$ is close to zero), and (ii) decreasing social returns to joint adoption ($s_{AB}<s_A+s_B$), which dampens the equilibrium spillovers.

Table~\ref{tab:ng_pgy_counterfactual_delta_h1h5h10} reports counterfactuals for permanent shocks to the baseline utilities. Baseline-utility interventions remain the primary driver of changes in overall adoption, but the implied equilibrium responses are modest in absolute magnitude. For instance, a $-0.1$ shock to the PGY baseline utility ($\Delta\delta_A=-0.1$) yields a small decline in total adoption of either norm, with $\Delta Q_{\mathrm{tot}}\equiv \Delta(p_A+p_B+p_{AB})\approx -0.007$ at $h=1$ and $-0.010$ at $h=10$. In contrast, a comparable shock to the CMA baseline utility ($\Delta\delta_B=-0.1$) produces larger declines in total adoption, with $\Delta Q_{\mathrm{tot}}\approx -0.022$ at $h=1$ and $-0.035$ at $h=10$ given the weaker social spillovers (so that more individuals react to the utility shock). Compositionally, the CMA-directed intervention operates primarily through reductions in CMA-only adoption ($\Delta p_B<0$) together with a mild decline in joint adoption ($\Delta p_{AB}<0$).

\begin{table}[!ht]
\caption{Nigeria (PGY$\times$CMA): baseline-utility counterfactuals.}
\label{tab:ng_pgy_counterfactual_delta_h1h5h10}
\centering

\resizebox{\textwidth}{!}{%
\begin{tabular}{lccccccccc}
\toprule
\multicolumn{1}{c}{ } & \multicolumn{3}{c}{$h=1$} & \multicolumn{3}{c}{$h=5$} & \multicolumn{3}{c}{$h=10$} \\
\cmidrule(l{3pt}r{3pt}){2-4} 
\cmidrule(l{3pt}r{3pt}){5-7} 
\cmidrule(l{3pt}r{3pt}){8-10}
 & $\Delta p_A$ & $\Delta p_B$ & $\Delta p_{AB}$ 
 & $\Delta p_A$ & $\Delta p_B$ & $\Delta p_{AB}$ 
 & $\Delta p_A$ & $\Delta p_B$ & $\Delta p_{AB}$\\
\midrule
\multicolumn{10}{l}{$\delta_A$ \textit{counterfactuals}}\\
\addlinespace
$\Delta\delta_A=-0.100$ 
& -0.007 & 0.005 & -0.005 
& -0.010 & 0.008 & -0.008 
& -0.010 & 0.008 & -0.008\\
$\Delta\delta_A=-0.200$ 
& -0.013 & 0.010 & -0.010 
& -0.018 & 0.014 & -0.014 
& -0.018 & 0.014 & -0.014\\
$\Delta\delta_A=-0.500$ 
& -0.026 & 0.022 & -0.022 
& -0.030 & 0.025 & -0.025 
& -0.030 & 0.025 & -0.025\\

\midrule
\multicolumn{10}{l}{$\delta_B$ \textit{counterfactuals}}\\
\addlinespace
$\Delta\delta_B=-0.100$ 
& 0.003 & -0.022 & -0.003 
& 0.004 & -0.034 & -0.004 
& 0.004 & -0.035 & -0.004\\
$\Delta\delta_B=-0.200$ 
& 0.005 & -0.043 & -0.005 
& 0.008 & -0.065 & -0.007 
& 0.008 & -0.066 & -0.007\\
$\Delta\delta_B=-0.500$ 
& 0.012 & -0.098 & -0.013 
& 0.018 & -0.136 & -0.017 
& 0.018 & -0.136 & -0.017\\

\bottomrule
\end{tabular}
}

\vspace{0.4cm}

\begin{tablenotes}
    \footnotesize 
    \item \textit{Notes:} Model-implied changes in adoption shares at horizons $h=1,5,10$ following permanent decreases in baseline utilities $\delta_A$ (PGY), $\delta_B$ (CMA). Entries report changes in the share adopting only $A$ ($\Delta p_A$), only $B$ ($\Delta p_B$), and both norms ($\Delta p_{AB}$), relative to the baseline predicted path under the estimated model. Total adoption is $Q_{\mathrm{tot}}=p_A+p_B+p_{AB}$, so $\Delta Q_{\mathrm{tot}}=\Delta p_A+\Delta p_B+\Delta p_{AB}$.
\end{tablenotes}
\end{table}

Table~\ref{tab:ng_pgy_counterfactual_social} reports counterfactuals that weaken social interactions by reducing $s_A$ and adjusting $s_{AB}$ accordingly. In Nigeria, the effects of these spillover interventions are economically small. For example, a 10\% reduction in $s_A$ implies $\Delta Q_{\mathrm{tot}}\approx -0.002$ at $h=1$ and $-0.002$ at $h=10$, while even halving $s_A$ yields only $\Delta Q_{\mathrm{tot}}\approx -0.008$ at $h=1$ and $-0.009$ at $h=10$. This contrasts with the earlier analysis for Sierra Leone, where comparable reductions in the FGC social multiplier generated large and growing declines in overall adoption.

\begin{table}[!htbp]
\caption{Nigeria (PGY$\times$CMA): social-interaction counterfactuals.}
\label{tab:ng_pgy_counterfactual_social}
\centering
\resizebox{\textwidth}{!}{%

\begin{tabular}{lccccccccc}
\toprule
\multicolumn{1}{c}{ } & \multicolumn{3}{c}{$h=1$} & \multicolumn{3}{c}{$h=5$} & \multicolumn{3}{c}{$h=10$} \\
\cmidrule(l{3pt}r{3pt}){2-4} \cmidrule(l{3pt}r{3pt}){5-7} \cmidrule(l{3pt}r{3pt}){8-10}
 & $\Delta p_A$ & $\Delta p_B$ & $\Delta p_{AB}$ & $\Delta p_A$ & $\Delta p_B$ & $\Delta p_{AB}$ & $\Delta p_A$ & $\Delta p_B$ & $\Delta p_{AB}$\\
\midrule
$s_A\leftarrow 0.90\,s_A$ & -0.002 & 0.002 & -0.002 & -0.002 & 0.002 & -0.002 & -0.002 & 0.002 & -0.002\\
$s_A\leftarrow 0.70\,s_A$ & -0.005 & 0.005 & -0.005 & -0.006 & 0.006 & -0.005 & -0.006 & 0.006 & -0.005\\
$s_A\leftarrow 0.50\,s_A$ & -0.008 & 0.008 & -0.008 & -0.009 & 0.008 & -0.008 & -0.009 & 0.008 & -0.008\\
\bottomrule
\end{tabular}
}

\vspace{0.4cm}

\begin{tablenotes}
    \footnotesize \item \textit{Notes:} Model-implied changes in adoption shares at horizons $h=1,5,10$ when weakening social interactions for norm $A$ (PGY) by scaling $s_A$ and simultaneously reducing $s_{AB}$ as indicated, holding all other parameters and initial conditions fixed. Total adoption is $Q_{\mathrm{tot}}=p_A+p_B+p_{AB}$, so $\Delta Q_{\mathrm{tot}}=\Delta p_A+\Delta p_B+\Delta p_{AB}$.
\end{tablenotes}
\end{table}

\section{Conclusions}

We propose to model social norm adherence  as a choice among \textit{bundles}, extending the framework of discrete choice with social interactions to allow for non-exclusive actions. We model two dimensions of complementarities and substitutabilities among norms: a `technological' and a `social' one. Technological complementarities occur when the intrinsic utility from joint adoption of two norms is super-additive, while social complementarities arise when the conformity returns to joint adoption exceed the sum of the social returns of conforming to each norm in isolation. We characterize the conditions for overall complementarity or substitutability in a way that nests \citet{Gentzkow_AER_2007} as a special case without social spillovers.

We bring the model to the data using repeated cross sections from the Demographic Health Surveys for Sierra Leone and Nigeria, covering cohorts born between 1960 and 2010. We consider three norms---child marriage (CMA), female genital cutting (FGC) and polygyny (PGY)---and focus on adoption bundles (FGC, CMA) and (PGY, CMA). For identification, we exploit child marriage bans imposed in the two countries and, for Nigeria, also variation in droughts, taking these as exogenous shifters of the intrinsic utility of child marriage. 

In both countries, we find positive and significant social multipliers for each of the three norms, confirming that social conformity is a significant determinant of norm adoption. For the bundle (FGC, CMA), we estimate a positive and significant technological complementarity---especially in Sierra Leone---and linear social returns, resulting in overall \textit{complementarity} among these norms. We interpret the strong complementarity found in Sierra Leone as arising from the tradition of ``Bondo," a collective initiation ritual that culminates in the cutting ceremony and marks adolescent girls' transition to adulthood and potentially marriage. On the other hand, for the (PGY, CMA) bundle, we find insignificant technological complementarities and sub-additive social returns in both countries, implying that the two norms are \textit{substitutes} in our framework. This result can be understood by considering the lower status associated with entering a polygynous union as a non-first wife, which can make child marriage less attractive.

For each of these norm bundles, we conduct policy counterfactual exercises that speak directly to the design of norm-change interventions. We consider two types of policies: (i) shifts to the intrinsic utility of each norm, as would occur with a legal ban or an information campaign targeting intrinsic preferences; and (ii) reductions in the social multipliers, as would result from targeting the institutional mechanisms through which conformity sanctions are imposed.  In the case of (FGC. CMA) in Sierra Leone, we find that targeting FGC generates large reductions in equilibrium adherence to \textit{both} norms, due to the complementarity and to the high social multiplier estimated for FGC. In the case of (PGY, CMA),  substitutability limits the cross-norm benefits from intervening on any of the two norms in isolation and may even generate unintended reallocation effects.

Our paper opens several directions for future research. Although we study pairs of norms, adherence decisions in practice may involve larger sets of norms. Extending the model to higher-dimensional bundles presents computational challenges, but may generate interesting dynamics. Also, extending the empirical analysis to a larger number of countries, and mapping the parameter estimates to sociological and anthropological evidence on the significance of apparently similar norms across different societies would be fascinating. Finally, while our empirical applications concern gender norms in Africa, our framework can be applied to different types of norms and---in a much broader sense---to any choice setting where individual utility includes intrinsic and social payoffs and where social spillovers are non-negligible.

\bibliography{ref}

@incollection{kline2020econometric,
  title={Econometric analysis of models with social interactions},
  author={Kline, Brendan and Tamer, Elie},
  booktitle={The Econometric Analysis of Network Data},
  pages={149--181},
  year={2020},
  publisher={Elsevier}
}

@article{guiteras2019demand,
  title={Demand estimation with strategic complementarities: Sanitation in Bangladesh},
  author={Guiteras, Raymond and Levinsohn, James and Mobarak, Ahmed Mushfiq},
  year={2019},
  publisher={CEPR Discussion Paper No. DP13498}
}

@article{jackson2007diffusion,
  title={Diffusion of behavior and equilibrium properties in network games},
  author={Jackson, Matthew O and Yariv, Leeat},
  journal={American Economic Review},
  volume={97},
  number={2},
  pages={92--98},
  year={2007},
  publisher={American Economic Association}
}

@article{bhattacharya2024demand,
  title={Demand and welfare analysis in discrete choice models with social interactions},
  author={Bhattacharya, Debopam and Dupas, Pascaline and Kanaya, Shin},
  journal={Review of Economic Studies},
  volume={91},
  number={2},
  pages={748--784},
  year={2024},
  publisher={Oxford University Press US}
}

@misc{athey1998empirical,
  title={An empirical framework for testing theories about complimentarity in organizational design},
  author={Athey, Susan and Stern, Scott},
  year={1998},
  publisher={National Bureau of Economic Research Cambridge, Mass., USA}
}

@article{berry2014structural,
  title={Structural models of complementary choices},
  author={Berry, Steve and Khwaja, Ahmed and Kumar, Vineet and Musalem, Andres and Wilbur, Kenneth C and Allenby, Greg and Anand, Bharat and Chintagunta, Pradeep and Hanemann, W Michael and Jeziorski, Przemek and others},
  journal={Marketing Letters},
  volume={25},
  number={3},
  pages={245--256},
  year={2014},
  publisher={Springer}
}

@book{EvansGariepy2015,
  author    = {Evans, Lawrence C. and Gariepy, Ronald F.},
  title     = {Measure Theory and Fine Properties of Functions},
  edition   = {Revised},
  publisher = {CRC Press},
  year      = {2015},
  note      = {See Theorem 3.10 (Coarea formula).}
}

@article{shell2011dynamics,
  title={Dynamics of change in the practice of female genital cutting in Senegambia: Testing predictions of social convention theory},
  author={Shell-Duncan, Bettina and Wander, Katherine and Hernlund, Ylva and Moreau, Amadou},
  journal={Social science \& medicine},
  volume={73},
  number={8},
  pages={1275--1283},
  year={2011},
  publisher={Elsevier}
}

@article{Gentzkow_AER_2007,
Author = {Gentzkow, Matthew},
Title = {Valuing New Goods in a Model with Complementarity: Online Newspapers},
Journal = {American Economic Review},
Volume = {97},
Number = {3},
Year = {2007},
Month = {June},
Pages = {713–744}
}

@article{Gulesci_etal_AER_2025,
Author = {Gulesci, Selim and Jindani, Sam and La Ferrara, Eliana and Smerdon, David and Sulaiman, Munshi and Young, Peyton},
Title = {A Stepping Stone Approach to Norm Transitions},
Journal = {American Economic Review},
Volume = {115},
Number = {7},
Year = {2025},
Month = {July},
Pages = {2237–66}}

@article{Brock_Durlauf_Restud_2001,
    author = {Brock, William A. and Durlauf, Steven N.},
    title = {Discrete Choice with Social Interactions},
    journal = {The Review of Economic Studies},
    volume = {68},
    number = {2},
    pages = {235-260},
    year = {2001},
    month = {04}
}

@article{fehr_2000,
  title={Cooperation and punishment in public goods experiments},
  author={Fehr, Ernst and G{\"a}chter, Simon},
  journal={American Economic Review},
  volume={90},
  number={4},
  pages={980--994},
  year={2000},
  publisher={American Economic Association}
}

@article{kandori_1992,
  title={Social norms and community enforcement},
  author={Kandori, Michihiro},
  journal={The Review of Economic Studies},
  volume={59},
  number={1},
  pages={63--80},
  year={1992},
  publisher={Wiley-Blackwell}
}

@article{alesina2002trusts,
  title={Who trusts others?},
  author={Alesina, Alberto and La Ferrara, Eliana},
  journal={Journal of Public Economics},
  volume={85},
  number={2},
  pages={207--234},
  year={2002},
  publisher={Elsevier}
}

@article{guiso2004role,
  title={The role of social capital in financial development},
  author={Guiso, Luigi and Sapienza, Paola and Zingales, Luigi},
  journal={American Economic Review},
  volume={94},
  number={3},
  pages={526--556},
  year={2004},
  publisher={American Economic Association}
}

@article{bisin2001economics,
  title={The economics of cultural transmission and the dynamics of preferences},
  author={Bisin, Alberto and Verdier, Thierry},
  journal={Journal of Economic Theory},
  volume={97},
  number={2},
  pages={298--319},
  year={2001},
  publisher={Elsevier}
}

@article{fernandez2009culture,
  title={Culture: An empirical investigation of beliefs, work, and fertility},
  author={Fern{\'a}ndez, Raquel and Fogli, Alessandra},
  journal={American Economic Journal: Macroeconomics},
  volume={1},
  number={1},
  pages={146--177},
  year={2009},
  publisher={American Economic Association}
}

@article{jakiela2016does,
  title={Does Africa need a rotten kin theorem? Experimental evidence from village economies},
  author={Jakiela, Pamela and Ozier, Owen},
  journal={The Review of Economic Studies},
  volume={83},
  number={1},
  pages={231--268},
  year={2016},
  publisher={Oxford University Press}
}

@article{fehr1999theory,
  title={A theory of fairness, competition, and cooperation},
  author={Fehr, Ernst and Schmidt, Klaus M},
  journal={The Quarterly Journal of Economics},
  volume={114},
  number={3},
  pages={817--868},
  year={1999},
  publisher={MIT press}
}

@article{field2008early,
  title={Early marriage, age of menarche, and female schooling attainment in Bangladesh},
  author={Field, Erica and Ambrus, Attila},
  journal={Journal of Political Economy},
  volume={116},
  number={5},
  pages={881--930},
  year={2008},
  publisher={The University of Chicago Press}
}

@article{corno2020age,
  title={Age of marriage, weather shocks, and the direction of marriage payments},
  author={Corno, Lucia and Hildebrandt, Nicole and Voena, Alessandra},
  journal={Econometrica},
  volume={88},
  number={3},
  pages={879--915},
  year={2020},
  publisher={Wiley Online Library}
}

@article{buchmann2023signal,
  title={A signal to end child marriage: Theory and experimental evidence from Bangladesh},
  author={Buchmann, Nina and Field, Erica and Glennerster, Rachel and Nazneen, Shahana and Wang, Xiao Yu},
  journal={American Economic Review},
  volume={113},
  number={10},
  pages={2645--2688},
  year={2023},
  publisher={American Economic Association 2014 Broadway, Suite 305, Nashville, TN 37203}
}

@article{jayachandran2021microentrepreneurship,
  title={Microentrepreneurship in developing countries},
  author={Jayachandran, Seema},
  journal={Handbook of Labor, Human resources and Population Economics},
  pages={1--31},
  year={2021},
  publisher={Springer}
}

@article{efferson2015fgc,
  title={Female genital cutting is not a social coordination norm},
  author={Efferson, Charles and Vogt, S and Elhadi, A and Ahmed, H and Fehr, Ernst},
  journal={Science},
  volume={349},
  number={6255},
  pages={1446--7},
  year={2015}
}

@article{tabellini_2008scope,
  title={The scope of cooperation: Values and incentives},
  author={Tabellini, Guido},
  journal={The Quarterly Journal of Economics},
  volume={123},
  number={3},
  pages={905--950},
  year={2008},
  publisher={MIT Press}
}

@article{townsend1994risk,
  title={Risk and insurance in village India},
  author={Townsend, Robert M},
  journal={Econometrica},
  pages={539--591},
  year={1994},
  publisher={JSTOR}
}

@article{kandel1992peer,
  title={Peer pressure and partnerships},
  author={Kandel, Eugene and Lazear, Edward P},
  journal={Journal of Political Economy},
  volume={100},
  number={4},
  pages={801--817},
  year={1992},
  publisher={The University of Chicago Press}
}

@article{ichino2000work,
  title={Work environment and individual background: Explaining regional shirking differentials in a large Italian firm},
  author={Ichino, Andrea and Maggi, Giovanni},
  journal={The Quarterly Journal of Economics},
  volume={115},
  number={3},
  pages={1057--1090},
  year={2000},
  publisher={MIT Press}
}

@article{almaas2020cutthroat,
  title={Cutthroat capitalism versus cuddly socialism: Are Americans more meritocratic and efficiency-seeking than Scandinavians?},
  author={Alm{\aa}s, Ingvild and Cappelen, Alexander W and Tungodden, Bertil},
  journal={Journal of Political Economy},
  volume={128},
  number={5},
  pages={1753--1788},
  year={2020},
  publisher={The University of Chicago Press Chicago, IL}
}

@article{cappelen2020fair,
  title={Fair and unfair income inequality},
  author={Cappelen, Alexander W and Falch, Ranveig and Tungodden, Bertil},
  journal={Handbook of Labor, Human Resources and Population Economics},
  pages={1--25},
  year={2020},
  publisher={Springer}
}

@article{henrich2001search,
  title={In search of homo economicus: behavioral experiments in 15 small-scale societies},
  author={Henrich, Joseph and Boyd, Robert and Bowles, Samuel and Camerer, Colin and Fehr, Ernst and Gintis, Herbert and McElreath, Richard},
  journal={American Economic Review},
  volume={91},
  number={2},
  pages={73--78},
  year={2001},
  publisher={American Economic Association}
}

@article{greif1993contract,
  title={Contract enforceability and economic institutions in early trade: The Maghribi traders' coalition},
  author={Greif, Avner},
  journal={The American Economic Review},
  pages={525--548},
  year={1993},
  publisher={JSTOR}
}

@article{goolsbee2004consumer,
  title={The consumer gains from direct broadcast satellites and the competition with cable TV},
  author={Goolsbee, Austan and Petrin, Amil},
  journal={Econometrica},
  volume={72},
  number={2},
  pages={351--381},
  year={2004},
  publisher={Wiley Online Library}
}

@article{bursztyn2018status,
  title={Status goods: experimental evidence from platinum credit cards},
  author={Bursztyn, Leonardo and Ferman, Bruno and Fiorin, Stefano and Kanz, Martin and Rao, Gautam},
  journal={The Quarterly Journal of Economics},
  volume={133},
  number={3},
  pages={1561--1595},
  year={2018},
  publisher={Oxford University Press}
}

@article{imas2024superiority,
  title={Superiority-seeking and the preference for exclusion},
  author={Imas, Alex and Madar{\'a}sz, Krist{\'o}f},
  journal={Review of Economic Studies},
  volume={91},
  number={4},
  pages={2347--2386},
  year={2024},
  publisher={Oxford University Press UK}
}

@techreport{Bargain_2026,
  author      = {Aminjonov, Ulugbek and Bargain, Olivier},
  title       = {Ancestral Norm Mixtures and Women’s Empowerment},
  year        = {2026},
  month       = {April},
  type        = {Unpublished},
  institution = {Bordeaux School of Economics}
}

@book{sandholm_2010,
	address = {Cambridge, MA},
	author = {Sandholm, William H},
	publisher = {MIT Press},
	title = {Population games and evolutionary dynamics},
	year = {2010}
}

@article{bursztyn_socialmedia2025,
  title={When Product Markets Become Collective Traps: The Case of Social Media},
  author={Bursztyn, Leonardo and Handel, Benjamin and Jiménez-Durán, Rafael and Roth, Chris},
  journal={American Economic Review},
  volume={115},
  number={12},
  pages={4105–-36},
  year={2025}
}

@incollection{LaFerrara&Yanagizawa_2026,
  author    = {La Ferrara, Eliana and Yanagizawa-Drott, David},
  title     = {Changing Culture and Norms in Developing Countries},
  booktitle = {The Handbook of Culture and Economic Behavior},
  editor    = {Enke, Benjamin and Giuliano, Paola and Nunn, Nathan and Wantchekon, Leonard},
  year      = {forthcoming},
  publisher = {Elsevier}
}

@incollection{Field_etal2026,
  author    = {Field, Erica and McKelway, Madeline and Voena, Alessandra},
  title     = {Gender Norms and Development},
  booktitle = {THandbook of Development Economics Vol. 6},
  editor    = {Dupas, Pascaline and Goldberg, Penny and Pande, Rohini},
  year      = {forthcoming},
  publisher = {Elsevier}
}

\appendix

\counterwithin{figure}{section}
\counterwithin{table}{section}

\section{Additional Empirical Results}

\subsection{FGC $\times$ CMA in Nigeria} \label{appendix_section_NG_FGC_CMA}

\paragraph{Additional exclusion restrictions: CMA ban and drought} A key advantage of the Nigerian setting is that exposure to drought provides an additional shifter to CMA adoption. Unlike in Sierra Leone, where limited spatial variation makes drought largely uninformative, droughts in Nigeria  plausibly operate through local economic conditions and the marriage market, affecting incentives for early marriage. Our use of drought as part of the exclusion restriction assumes that these shocks have no effect on the baseline utility of FGC. We therefore report estimates under two identification strategies: exploiting state-level variation in the timing of CMA regulation and using drought variation alone.

Table~\ref{tab:ng_main_merged_threegroups} reports the estimated model parameters for Nigeria under three identification strategies: (i) using only drought variation (excluding the CMA-ban shifter),  (ii) using the CMA-ban shifter alone, and (iii) using both the CMA-ban and drought shifters. Across all three columns we report the unrestricted specification that allows for non-additive sanctions by estimating $(s_A,s_B,s_{AB})$. Results across all specifications are consistent both in significance and point estimates. 
 The CMA-ban shifter enters only the CMA utility and is estimated to be negative and precisely estimated when included: the ban coefficient is $-0.08$ with drought controls and $-0.09$ without drought, both statistically significant at the 1\% level, consistent with the results found in Sierra Leone. Drought exposure over the age of twelve-fifteen enters positively in the CMA equation when included with an effect of about $0.04$--$0.05$ (s.e.\ $0.01$), and is also highly statistically significant. A positive drought–CMA relationship captures settings where droughts are negative income shocks (e.g. due to reliance on agriculture as the main source of income) and where the incentive to cash in the bride price leads to anticipating the timing of a daughter's marriage \citep{corno2020age}. 

\paragraph{Complementarities} Relative to Sierra Leone, where $\Gamma\approx 1$ for FGC$\times$CMA, complementarities in Nigeria are positive but significantly smaller.\footnote{Similarly, the estimated \(\rho\) is modestly negative, ranging from about
\(-0.08\) to \(-0.12\). It is statistically significant at the five percent level
only in the drought-only specification.} 
This is consistent with the differences in institutional settings highlighted in Section \ref{sec:norms_background}: in Sierra Leone the institution of the Bondo society de facto `bundles' the practice of cutting and the preparation for marriage around adolescence, and cutting precedes marriage because Bondo is the ritual that marks the transition to adulthood. In Nigeria, on the other hand, the two decisions are temporally more distant and there is significant variation in the geographic prevalence of the practice. For instance, in Nigeria the ethnic groups with the highest prevalence of child marriage are the Hausa-Fulani (in the North), while the groups with the highest FGC rates are the Yoruba and the Igbo (in the South).\footnote{The shares of compliers with each norm, $p_A, p_B, p_AB$ are computed for each cohort at the ethnic group $\times$ region level, hence we are accounting for geographic and group-level differences in prevalence.} 

Given that, at least in Nigeria, ideals of purity are often associated with the practice of FGC, one may wonder why FGC and CMA are not substitutes, to the extent that child marriage also contributes to purity or chastity by reducing the risk of pre-marital sex. Several factors may explain this. First, the two norms act on different margins of the `purity' ideal: FGC is supposed to control \textit{desire} and reduce the risk of extra-marital sex (before and after the marriage), while CMA controls \textit{opportunity} by `transferring' the girl to the husband's household and effectively relieving the parents of the risk of honor's breach. Second, FGC and CMA serve other functions that overlap only partially. For example, the cultural identity component associated with FGC is not served by CMA, and the role of CMA in securing bride price for the family of origin, or helping forge alliances with other lineages, is not served by FGC.

\paragraph{Social multipliers} Finally, note that the estimated social multipliers are large for both norms in Nigeria: $s_A$ is approximately $2.05$ in all three specifications, while $s_B$ ranges from $0.92$ to $1.09$. Compared to Sierra Leone---where FGC exhibits a large multiplier while CMA’s multiplier is comparatively small---Nigeria exhibits stronger social feedback for CMA as well. Allowing for non-additive sanctions yields estimates of $s_{AB}$ that are close to $s_A+s_B$, suggesting also in this context close to linear social spillovers.

\begin{table}[h]
\centering
\caption{Nigeria (FGC$\times$CMA): estimates with CMA-ban and drought shifters.}
\label{tab:ng_main_merged_threegroups}
\resizebox{\ifdim\width>\linewidth\linewidth\else\width\fi}{!}{%
\begin{tabular}{lcccccc}
\toprule
 \multicolumn{1}{c}{\textbf{Nigeria}}  & \multicolumn{6}{c}{FGC $\times$ CMA} \\
\cmidrule(l{3pt}r{3pt}){2-7}
 & \multicolumn{2}{c}{Drought only (no CMA ban)} & \multicolumn{2}{c}{CMA ban (no drought)} & \multicolumn{2}{c}{CMA ban + drought} \\
\cmidrule(l{3pt}r{3pt}){2-3} \cmidrule(l{3pt}r{3pt}){4-5} \cmidrule(l{3pt}r{3pt}){6-7}
 & (1) & (2) & (3) & (4) & (5) & (6) \\
\midrule
CMA Ban &  &  &  & $-0.09^{***}$ &  & $-0.08^{***}$ \\
 &  &  &  & $(0.01)$ &  & $(0.01)$ \\
Drought &  & $0.05^{***}$ &  &  &  & $0.04^{***}$ \\
 &  & $(0.01)$ &  &  &  & $(0.01)$ \\
$\Gamma$ & \multicolumn{2}{c}{$0.19^{***}$} & \multicolumn{2}{c}{$0.16^{***}$} & \multicolumn{2}{c}{$0.15^{**}$} \\
 & \multicolumn{2}{c}{$(0.06)$} & \multicolumn{2}{c}{$(0.06)$} & \multicolumn{2}{c}{$(0.06)$} \\
$s$ & $2.05^{***}$ & $1.09^{***}$ & $2.05^{***}$ & $0.95^{***}$ & $2.05^{***}$ & $0.92^{***}$ \\
 & $(0.05)$ & $(0.09)$ & $(0.05)$ & $(0.10)$ & $(0.05)$ & $(0.10)$ \\
$s_{AB}$ &  \multicolumn{2}{c}{$3.05^{***}$}   & \multicolumn{2}{c}{$2.93^{***}$} &  \multicolumn{2}{c}{$2.90^{***}$} \\
 &  \multicolumn{2}{c}{$(0.11)$} &  \multicolumn{2}{c}{$(0.12)$} &  \multicolumn{2}{c}{$(0.12)$} \\
$\rho$ & \multicolumn{2}{c}{$-0.12^{**}$} & \multicolumn{2}{c}{$-0.10^{*}$} & \multicolumn{2}{c}{$-0.08$} \\
 & \multicolumn{2}{c}{$(0.06)$} & \multicolumn{2}{c}{$(0.06)$} & \multicolumn{2}{c}{$(0.06)$} \\
 \hline
Primary Education & $0.06^{***}$ & $-0.11^{***}$ & $0.06^{***}$ & $-0.11^{***}$ & $0.06^{***}$ & $-0.11^{***}$ \\
 & $(0.02)$ & $(0.02)$ & $(0.02)$ & $(0.02)$ & $(0.02)$ & $(0.02)$ \\
Secondary Education & $-0.09^{***}$ & $-0.81^{***}$ & $-0.10^{***}$ & $-0.79^{***}$ & $-0.10^{***}$ & $-0.79^{***}$ \\
 & $(0.03)$ & $(0.02)$ & $(0.03)$ & $(0.02)$ & $(0.03)$ & $(0.02)$ \\
Muslim & $0.23^{***}$ & $0.22^{***}$ & $0.24^{***}$ & $0.23^{***}$ & $0.24^{***}$ & $0.23^{***}$ \\
 & $(0.02)$ & $(0.02)$ & $(0.02)$ & $(0.02)$ & $(0.02)$ & $(0.02)$ \\
Christian & $0.10^{***}$ & $-0.03$ & $0.10^{***}$ & $-0.03$ & $0.10^{***}$ & $-0.03$ \\
 & $(0.02)$ & $(0.02)$ & $(0.02)$ & $(0.02)$ & $(0.02)$ & $(0.02)$ \\
Urban & $0.04^{***}$ & $-0.13^{***}$ & $0.03^{***}$ & $-0.13^{***}$ & $0.03^{**}$ & $-0.13^{***}$ \\
 & $(0.01)$ & $(0.01)$ & $(0.01)$ & $(0.01)$ & $(0.01)$ & $(0.01)$ \\
\midrule
Ethnicity FE & Yes & Yes & Yes & Yes & Yes & Yes \\
Region FE & Yes & Yes & Yes & Yes & Yes & Yes \\
Wealth FE & Yes & Yes & Yes & Yes & Yes & Yes \\
\midrule  p-value for $H_0: s_{AB} = s_A + s_B$ & \multicolumn{2}{c}{ 0.23 } & \multicolumn{2}{c}{  0.35} & \multicolumn{2}{c}{ 0.37 } \\
\midrule
Sample size & \multicolumn{2}{c}{ 71067 } & \multicolumn{2}{c}{ 71067 } & \multicolumn{2}{c}{ 71067 } \\
\bottomrule
\end{tabular}
}
\end{table}

\subsection{PGY $\times$ CMA in Sierra Leone} \label{appendix_section_SL_PGY_CMA}

Table~\ref{tab:sl_main1_nodrought_pgy} reports the estimated parameters for the joint adoption of polygyny (PGY) and child marriage (CMA) in Sierra Leone. The CMA-ban shifter enters negatively in the CMA equation, with an effect of about $-0.04$ standard deviations in both specifications. Turning to complementarities, once non-additive sanctions are allowed, $\Gamma$ becomes smaller in magnitude and statistically indistinguishable from zero ($\Gamma=-0.07$, s.e.\ $0.06$). 
Social interactions, however, are economically relevant: both PGY and CMA feature positive and statistically significant social multipliers (e.g., $s_A\approx 1.30$--$1.45$ for PGY and $s_B\approx 0.55$--$0.56$ for CMA), and we reject additive sanctions($p$-value $0.02$), with $s_{AB}=1.44$ (s.e.\ $0.29$) differing from $s_A+s_B$. 

As for the case of Nigeria, the fact that $\Gamma \approx 0$ and $s_{AB}<s_A+s_B$ indicates that CMA and PGY are social (and overall) \textit{substitutes}. The rationale offered for this result in Nigeria (due to the undesirability of entering a polygynous union as a second or third wife for a young girl) also applies to Sierra Leone. However, compared to the estimated coefficients for Nigeria  in Table~\ref{tab:ng_main1_nodrought_pgy}, the Sierra Leone estimates point to smaller social spillovers: in Nigeria we estimate substantially larger multipliers for both PGY and CMA (e.g., $s_A\approx 2.3$--$2.45$ and $s_B\approx 1.25$), while in Sierra Leone both are roughly half as large. A possible reason may be that in Sierra Leone societal pressure -whether as rewards or as sanctions- is highest for compliance with the Bondo practice (FGC); norms like CMA and PGY are less formally sanctioned by the traditional system.

\begin{table}[!htbp]
\caption{Sierra Leone (PGY$\times$CMA): Estimated parameters}
\label{tab:sl_main1_nodrought_pgy}
\centering
\resizebox{\ifdim\width>\linewidth\linewidth\else\width\fi}{!}{
\begin{tabular}{lcccc}
\toprule
 \multicolumn{1}{c}{\textbf{Sierra Leone}} & \multicolumn{4}{c}{PGY $\times$ CMA} \\
\cmidrule(l{3pt}r{3pt}){2-5}
 & \multicolumn{2}{c}{$s_A, s_B$} & \multicolumn{2}{c}{$s_A, s_B, s_{AB}$} \\
\cmidrule(l{3pt}r{3pt}){2-3} \cmidrule(l{3pt}r{3pt}){4-5}
 &  PGY & CMA  & PGY  & CMA  \\
 & (1) & (2) & (3) & (4) \\
\midrule
CMA Ban & \multicolumn{2}{c}{$-0.04^{*}$} & \multicolumn{2}{c}{$-0.04^{**}$} \\
 & \multicolumn{2}{c}{$(0.02)$} & \multicolumn{2}{c}{$(0.02)$} \\
$\Gamma$ & \multicolumn{2}{c}{$-0.17^{**}$} & \multicolumn{2}{c}{$-0.07$} \\
 & \multicolumn{2}{c}{$(0.07)$} & \multicolumn{2}{c}{$(0.06)$} \\
$s$ & $1.30^{***}$ & $0.55^{***}$ & $1.45^{***}$ & $0.56^{***}$ \\
 & $(0.21)$ & $(0.13)$ & $(0.22)$ & $(0.13)$ \\
$s_{AB}$& \multicolumn{2}{c}{} & \multicolumn{2}{c}{$1.44^{***}$} \\
 & \multicolumn{2}{c}{} & \multicolumn{2}{c}{$(0.29)$} \\
$\rho$ & \multicolumn{2}{c}{$0.24^{***}$} & \multicolumn{2}{c}{$0.19^{***}$} \\
 & \multicolumn{2}{c}{$(0.07)$} & \multicolumn{2}{c}{$(0.06)$} \\
 \hline
Primary Education & $-0.14^{***}$ & $-0.08^{***}$ & $-0.14^{***}$ & $-0.08^{***}$ \\
 & $(0.03)$ & $(0.02)$ & $(0.03)$ & $(0.02)$ \\
Secondary Education & $-0.57^{***}$ & $-0.70^{***}$ & $-0.56^{***}$ & $-0.70^{***}$ \\
 & $(0.03)$ & $(0.02)$ & $(0.03)$ & $(0.02)$ \\
Muslim & $0.01$ & $-0.03$ & $0.02$ & $-0.03$ \\
 & $(0.13)$ & $(0.11)$ & $(0.13)$ & $(0.11)$ \\
Christian & $-0.26^{*}$ & $-0.15$ & $-0.26^{*}$ & $-0.14$ \\
 & $(0.13)$ & $(0.11)$ & $(0.13)$ & $(0.11)$ \\
Urban & $-0.27^{***}$ & $-0.08^{***}$ & $-0.27^{***}$ & $-0.08^{***}$ \\
 & $(0.03)$ & $(0.02)$ & $(0.03)$ & $(0.02)$ \\
\midrule
Ethnicity FE & Yes & Yes & Yes & Yes \\
Region FE & Yes & Yes & Yes & Yes \\
Wealth FE & Yes & Yes & Yes & Yes \\
\midrule  p-value for $H_0: s_{AB} = s_A + s_B$ & \multicolumn{2}{c}{} &  \multicolumn{2}{c}{ $0.02^{**} $  } \\
\midrule
Sample size & \multicolumn{2}{c}{ 33591} & \multicolumn{2}{c}{ 33591} \\
\bottomrule
\end{tabular}}
\vspace{0.4cm}

\begin{tablenotes}
    \footnotesize \item \textit{Notes:} Maximum-likelihood estimates from the two-norm adoption model for PGY$\times$CMA. Columns labeled $s_A,s_B$ impose additive sanctions ($s_{AB}=s_A+s_B$); columns labeled $s_A,s_B,s_{AB}$ additionally estimate $s_{AB}$. The exclusion restriction is the Child Marriage (CMA) ban for cohorts born after 1990, which enters only the CMA utility but not the PGY utility. Reference groups for social interactions are defined by region$\times$ethnicity. All specifications control for education, religion, urban residence, and ethnicity, region, and wealth fixed effects. Standard errors in parentheses.
\end{tablenotes}
\end{table}

\newpage

\setcounter{section}{0}
\renewcommand{\thesection}{S\arabic{section}}

\section*{\centering Supplemental Appendix}

 \section{Proof of Theorem \ref{prop_gamma_fixed_point_extended}} \label{proof_gamma_fixed_point_extended}


\paragraph{Step 1: Notation} 
Recall the definition of $\tilde{p}(p,\delta) = \Big(\tilde p_A(p), \tilde p_B(p), \tilde p_{AB} (p)\Big)^\top$, which we write here as an explicit function of $\delta$ in addition to $p$. Recall also that  $p^*$ is implicitly defined by
$$
\begin{bmatrix} \tilde{p}_A(p^*) \\ 
\tilde{p}_B(p^*) \\ 
\tilde{p}_{AB}(p^*)
\end{bmatrix}
= p^*, 
$$
denoting the equilibrium adherence proportions of each bundle.  Let $\tilde{\Lambda}(p, \delta)$ denote the Jacobian of $\tilde{p}(p, \delta)$ with respect to $p$.

Write $p^*(\delta)$ as an explicit function of $\delta$, such that  the implicit function theorem implies (see Lemma \ref{lem:diff_line_integrals} for the characterization of the derivative)
\begin{align}
    \frac{\partial p_i^*(\delta)}{\partial \delta_j}   =  \left[ I_3 - \tilde{\Lambda}(p^*) \right]^{-1}   \frac{\partial \tilde{p}_i(p^*, \delta)}{\partial \delta_j}  (p^*, \delta) , \label{eq_implicit_fxn_2}
\end{align}
for any stable $p^*$ (Definition \ref{ass:stability2}). Note that the above expression is well-defined because $\det(I_3 - \tilde{\Lambda}(p^*))  > 0$ for stable $p^*$.

Denote $||\cdot||$ the $l_2$-norm. Write $u_A, u_B, u_{AB}$ as implicit functions of $\delta = (\delta_A, \delta_B), p = (p_A, p_B, p_{AB}), z= (z_A,z_B)$, $s = (s_A, s_B, s_{AB})$, and $\Gamma$, omitting their dependence on these variables to ease notation whenever it is clear from the context.

\paragraph{Step 2: Defining line integrals}
We now provide a characterization of the Jacobian $\tilde{\Lambda}$ using line integrals. For each $v \in \mathcal{V}$, define $f_v(z) \equiv u_v - \max_{v' \in \mathcal{V} \setminus \{v\} } u_{v'}$ as a explicit function of $z$ (alongside the model parameters), and let
\begin{align}
    L_1^A  & \equiv \{u_A =0, \, u_A \geq u_B, \, u_A \geq u_{AB} \}, \qquad L_1^{AB}  \equiv \{u_{AB} =0, \, u_{AB} \geq u_A, \, u_{AB} \geq u_{B} \} \nonumber \\
    L_2^A & \equiv  \{u_A \geq 0, \, u_A = u_B, \, u_A \geq u_{AB} \}, \qquad  L_2^{AB}  \equiv \{u_{AB} \geq 0, \, u_{AB} = u_A, \, u_{AB} \geq u_{B} \} \nonumber  \\
    L_3^A & \equiv  \{u_A \geq 0, \, u_A \geq u_B, \, u_A =  u_{AB} \}, \qquad L_3^{AB} \equiv \{u_{AB} \geq 0, \, u_{AB} \geq u_A, \, u_{AB} = u_{B} \},  \nonumber 
\end{align}
such that $f_A^{-1}(0) = L_1^A \cup L_2^A \cup L_3^A$ and $f_{AB}^{-1}(0) = L_1^{AB} \cup L_2^{AB} \cup L_3^{AB}$.  Define  $L_i^B$ similarly by exchanging $A$s and $B$s in the definition of $L_i^A$, such that $f_B^{-1}(0) = L_1^B \cup L_2^B \cup L_3^B$. 

Denote $\sigma_i^A$ the surface measure over the set $L_i^A$, $\sigma_i^B$ the surface measure over the set $L_i^B$, and $\sigma_i^{AB}$ the surface measure over $L_i^{AB}$, where the measure is taken over the randomness of $z$. The following lemma is a basic characterization of the line integral following directly from Assumption ~\ref{ass:z_density} and the coarea formula (e.g., Theorem 3.10 in \cite{EvansGariepy2015}). 

\begin{lemma}[Differentiation of adoption probabilities]\label{lem:diff_line_integrals}
Suppose Assumption~\ref{ass:z_density} holds. Fix $(p,\delta)$ and $j\in\{A,B,AB\}$.
Then, $\tilde p_k(p,\delta)$ is differentiable in $p_j$ and
$$  
\begin{aligned} 
\frac{\partial \tilde p_k(p,\delta)}{\partial p_j}
=
\sum_{i=1}^3 \int_{L_i^k}
\frac{\partial f_k/\partial p_j}{\|\nabla_z f_k\|}\,r(z)\,d\sigma_i^k, \qquad \forall k \in \{A, B, AB\}. 
\end{aligned} 
$$ 
\end{lemma}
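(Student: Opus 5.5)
The plan is to write the choice probabilities as integrals of $r$ over the choice regions $\{z: f_k(z)\ge 0\}$ and differentiate under the integral. The regions are polygonal sets in $\mathbb{R}^2$ bounded by finitely many line segments or half-lines, because each $f_k$ is a maximum of affine functions of $z$.

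Fix $(p,\delta)$ and $k\in\{A,B,AB\}$. Each utility $u_v$ is affine in $z$ with $z$-gradient independent of $(p,\delta)$:
\[
\nabla_z u_A=(1,0), \qquad \nabla_z u_B=(0,1), \qquad \nabla_z u_{AB}=(1,1), \qquad u_\emptyset=0.
\]
Hence each piece $L_i^k$ of $f_k^{-1}(0)$ is a subset of a line $\{u_k-u_{v'}=0\}$ along which $\|\nabla_z f_k\|$ is a positive constant ($1$ or $\sqrt2$). The derivative $\partial f_k/\partial p_j$ on that piece is also constant: it equals $\partial(u_k-u_{v'})/\partial p_j$, a combination of $s_A,s_B,s_{AB}$. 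So the region $\{f_k\ge 0\}$ translates rigidly piece by piece as $p_j$ moves.

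The main step is to write
\[
\tilde p_k(p+he_j)-\tilde p_k(p)=\int \bigl(\mathbbm{1}\{f_k(z;p+he_j)\ge0\}-\mathbbm{1}\{f_k(z;p)\ge0\}\bigr)r(z)\,dz .
\]
I would then apply the coarea formula (Theorem 3.10 in \cite{EvansGariepy2015}) to the Lipschitz function $f_k(\cdot;p)$. This expresses the integral over the thin shell $\{-c h\le f_k<0\}$ or $\{0\le f_k< ch\}$ as $\int_0^{ch}\int_{f_k^{-1}(t)} r/\|\nabla_z f_k\|\,d\mathcal H^1\,dt$. Dividing by $h$, continuity of $r$ and of the level sets in $t$ yield the boundary integral
\[
\int_{f_k^{-1}(0)}\frac{\partial f_k/\partial p_j}{\|\nabla_z f_k\|}\,r\,d\mathcal H^1 .
\]
Splitting $f_k^{-1}(0)$ into $L_1^k,L_2^k,L_3^k$ gives the stated sum. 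The intersections of the pieces are finitely many points and carry zero $\mathcal H^1$-measure, so the split creates no double counting.

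The main obstacle is justifying the limit, since the pieces may be unbounded half-lines. I would handle this with dominated convergence. The decay bound $r(z)\le C(1+\|z\|)^{-2-\eta}$ in Assumption~\ref{ass:z_density} makes $r$ integrable along any line, with a bound uniform over nearby parallel translates. Together with the finite number of pieces, uniformly bounded $|\partial f_k/\partial p_j|$, and $\|\nabla_z f_k\|\ge1$, this dominates the difference quotients. Continuity of $r$ then gives pointwise convergence on each piece. Near the corners, where a piece changes length with $h$, the contribution is $O(h^2)$ in area times $\sup r$, so it vanishes after division by $h$. Differentiability in $\delta$ follows by the same argument.
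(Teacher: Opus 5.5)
Your proposal is correct and follows essentially the same route as the paper. Both arguments write $\tilde p_k$ as the $r$-mass of $\{f_k\ge 0\}$, apply the coarea formula to the piecewise-affine $f_k$ to convert a thin-shell difference quotient into line integrals weighted by $\partial f_k/\partial p_j/\|\nabla_z f_k\|$, use the tail bound in Assumption~\ref{ass:z_density} to keep the integrals along the unbounded half-lines finite, and then sum over $L_1^k,L_2^k,L_3^k$. Your version is somewhat more careful than the paper's. The paper applies the coarea formula with a single $\varepsilon=h\,\partial f_A/\partial p_j$, which glosses over the fact that different boundary pieces shift at different rates. Your piece-by-piece dominated-convergence argument, together with the $O(h^2)$ bound at the corners, handles exactly that point.
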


\begin{proof}

We prove the statement for $k = A$; the case when $k = B, AB$ follows verbatim after appropriate relabeling.  Note that we can write
\[
\tilde p_A(p,\delta)=\int_{\mathbb R^2}
\mathbbm 1\{u_A\ge 0,\ u_A\ge u_B,\ u_A\ge u_{AB}\}\,r(z)\,dz.
\]
Because $r$ is continuous and the utilities are affine in $z$, the probability that $u_A$ equals at least two elements of $\{0,u_B,u_{AB}\}$ is zero. Hence, the intersection of at least  two of the three sets
$L_1^A,L_2^A,L_3^A$ has measure zero with respect to $\sigma_i^A$, and can be ignored from our calculations.

We now study the value of the integral within a given region $L_i^A$. Since $z$ enters utilities only through the linear terms $z_A$, $z_B$, and $z_A+z_B$, $f_A$ equals one of $u_A$, $u_A-u_B$, or $u_A-u_{AB}$ over each $L_i^A$. Hence $f_A(z)=a+b_A z_A+b_B z_B$ with $(b_A,b_B)\in\{(1,0),(1,-1),(0,-1)\}$, so $\nabla_z f_A=(b_A,b_B)$ is constant and nonzero on $L_i^A$. Therefore, the coarea formula yields\footnote{The tail condition in Assumption \ref{ass:z_density} guarantees that the density is
integrable along each affine boundary that defines a choice-region margin. Indeed,
for any affine line $L=\{z_0+tu:t\in\mathbb R\}$  with $\|u\|=1$,
$ 
\int_L r(z)\,d\sigma(z)
=
\int_{\mathbb R} r(z_0+tu)\,dt
\le
C\int_{\mathbb R}(1+\|z_0+tu\|)^{-2-\eta}\,dt
<\infty.
$ 
}
\[
\int_{\{0\le f_A(z)\le \varepsilon\}} r(z)\,dz
=\int_0^\varepsilon\left(\int_{\{f_A(z)=t\}}\frac{r(z)}{\|\nabla_z f_A\|}\,d\sigma(z)\right)\,dt.
\]
Applying this with $\varepsilon=h\,\partial f_A/\partial p_j$, dividing by $h$, and letting $h\to 0$ implies
\[
\frac{\partial}{\partial p_j}\int \mathbbm 1\{f_A(z)\ge 0\}\,r(z)\,dz
=
\int_{\{f_A(z)=0\}}\frac{\partial f_A/\partial p_j}{\|\nabla_z f_A\|}\,r(z)\,d\sigma(z),
\]
and summing over the three boundary pieces $L_1^A,L_2^A,L_3^A$ yields the line-integral expression.\footnote{By Assumption \ref{ass:z_density}, the corresponding probability at the intersection boundary is zero.} 
\end{proof}

\paragraph{Step 3: Evaluating the Jacobian using line integrals}  Using Lemma \ref{lem:diff_line_integrals},  each term in the first row of the Jacobian $\tilde{\Lambda}(p,\delta)$ can be written as
$$  
\begin{aligned}
   \tilde{\Lambda}_{A,j} \equiv \frac{\partial \tilde{p}_A(p , \delta)}{\partial p_j} & = \frac{\partial}{\partial p_j} \int_z \1_{ \{  f_A(z) \geq 0 \} } \, dR(z)  \\
   & = \frac{\partial}{ \partial p_j } \sum_{i \in \{B,AB, \emptyset\}} \int_z \1_{ \{f_A(z) \geq 0, \,\, \max_{v \in \{B, AB, \emptyset\} }  u_v = u_i\}}  \, d R(z) \\ & = \sum_{i=1}^3 \int_{L_i^A} \frac{\partial f_A / \partial p_j}{||\nabla f_A||} r(z) \, d\sigma_i^A \nonumber
\end{aligned}
$$ 
for all $j \in \{A,B,AB\}$. Note that the last equation follows from the definition of the line integral (Lemma \ref{lem:diff_line_integrals}) and $L_i^A$.

 \paragraph{Step 4: Evaluating line segments} Next, we evaluate the integrands over each line segment. On the segment $L_1^A$, 
$ 
f_A = u_A - 0 = \delta_A + s_A(p_A + p_{AB}) + z_A,\quad 
\nabla_z f_A = (1,0),\quad \|\nabla f_A\| = 1,
$ 
so that 
$ 
\frac{\partial f_A}{\partial p_A} = s_A,\quad
\frac{\partial f_A}{\partial p_B} = 0,\quad
\frac{\partial f_A}{\partial p_{AB}} = s_A.
$  
Similarly,  on the segment $L_2^A$, 
$ 
f_A = u_A - u_B
= (\delta_A - \delta_B) + s_A(p_A + p_{AB}) - s_B(p_B + p_{AB})
+ z_A - z_B,
$  
so that 
$  
\nabla_z f_A = (1,-1),\quad \|\nabla f_A\| = \sqrt{2},
$ and 
$ 
\frac{\partial f_A}{\partial p_A} = s_A,\quad
\frac{\partial f_A}{\partial p_B} = -s_B,\quad
\frac{\partial f_A}{\partial p_{AB}} = s_A - s_B.
$  
Finally, on the segment $L^A_3$, $ 
f_A = u_A - u_{AB}
= -\delta_B + (s_A - s_{AB})p_{AB} - s_B p_B - z_B - \Gamma,
$ 
so that 
$  
\nabla_z f_A = (0,-1),\quad \|\nabla f_A\| = 1,
$  
and
$ 
\frac{\partial f_A}{\partial p_A} = 0,\quad
\frac{\partial f_A}{\partial p_B} = -s_B,\quad
\frac{\partial f_A}{\partial p_{AB}} = s_A - s_{AB}.
$  

Collecting terms, it follows that
$$  
\begin{aligned} 
\footnotesize 
\frac{\frac{\partial f_A}{\partial p_A}}{\|\nabla f_A\|}=
\begin{cases}
s_A & \text{over } L^A_1,\\[0.3em]
\dfrac{s_A}{\sqrt{2}} & \text{over } L^A_2,\\[0.3em]
0 & \text{over } L^A_3,
\end{cases}
\qquad
\frac{\frac{\partial f_A}{\partial p_B}}{\|\nabla f_A\|}=
\begin{cases}
0 & \text{over } L^A_1,\\[0.3em]
-\dfrac{s_B}{\sqrt{2}} & \text{over } L^A_2,\\[0.3em]
- s_B & \text{over } L^A_3,
\end{cases}, \qquad 
\frac{\frac{\partial f_A}{\partial p_{AB}}}{\|\nabla f_A\|}=
\begin{cases}
s_A & \text{over } L^A_1,\\[0.3em]
\dfrac{s_A - s_B}{\sqrt{2}} & \text{over } L^A_2,\\[0.3em]
s_A - s_{AB} & \text{over } L^A_3.
\end{cases}
\end{aligned} 
$$ 
By the definition of line integrals, we then have
\[
\begin{aligned}
\tilde{\Lambda}_{A,A}
&= s_A \int_{L^A_1} r(z)\, d\sigma^A_1
+ \frac{s_A}{\sqrt{2}} \int_{L^A_2} r(z)\, d\sigma^A_2,\\[0.5em]
\tilde{\Lambda}_{A,B}
&= -\frac{s_B}{\sqrt{2}} \int_{L^A_2} r(z)\, d\sigma^A_2
- s_B \int_{L^A_3} r(z)\, d\sigma^A_3,\\[0.5em]
\tilde{\Lambda}_{A,AB}
&= s_A \int_{L^A_1} r(z)\, d\sigma^A_1
+ \frac{s_A - s_B}{\sqrt{2}} \int_{L^A_2} r(z)\, d\sigma^A_2
+ (s_A - s_{AB}) \int_{L^A_3} r(z)\, d\sigma^A_3.
\end{aligned}
\]
The calculations for the second row of \(\tilde{\Lambda}\) (derivatives of \(\tilde{p}_B\)) follow analogously by swapping the roles of \(A\) and \(B\), i.e., interchanging \((A,s_A)\) with \((B,s_B)\).

For the third row, we write
\[
\tilde{\Lambda}_{AB,j} \equiv
\frac{\partial \tilde{p}_{AB}(p,\delta)}{\partial p_j}
= \frac{\partial}{\partial p_j} \int 1\{f_{AB}(z)\ge 0\}\, dR(z)
= \sum_{i=1}^3 \int_{L^{AB}_i} 
\frac{\partial f_{AB}/\partial p_j}{\|\nabla f_{AB}\|}\,
r(z)\, d\sigma^{AB}_i.
\]
On the segment $L_1^{AB}$, we have $ 
f_{AB} = u_{AB} - 0
= \delta_A + \delta_B + s_A p_A + s_B p_B + s_{AB} p_{AB} + z_A + z_B + \Gamma,
$  
so that 
$  
\nabla_z f_{AB} = (1,1),\quad \|\nabla f_{AB}\| = \sqrt{2},
$  
and
$  
\frac{\partial f_{AB}}{\partial p_A} = s_A,\quad
\frac{\partial f_{AB}}{\partial p_B} = s_B,\quad
\frac{\partial f_{AB}}{\partial p_{AB}} = s_{AB}.
$  
On the segment $L^{AB}_2$, we have $  
f_{AB} = u_{AB} - u_A
= \delta_B + s_B p_B + (s_{AB} - s_A)p_{AB} + z_B + \Gamma,
$  
so that 
$  
\nabla_z f_{AB} = (0,1),\quad \|\nabla f_{AB}\| = 1,
$  
and 
$  
\frac{\partial f_{AB}}{\partial p_A} = 0,\quad
\frac{\partial f_{AB}}{\partial p_B} = s_B,\quad
\frac{\partial f_{AB}}{\partial p_{AB}} = s_{AB} - s_A.
$  Finally, on the segment $L^{AB}_3$,  we have 
$  
f_{AB} = u_{AB} - u_B
= \delta_A + s_A p_A + (s_{AB} - s_B)p_{AB} + z_A + \Gamma,
$  
hence
$  
\nabla_z f_{AB} = (1,0),\quad \|\nabla f_{AB}\| = 1,
$  
and
$  
\frac{\partial f_{AB}}{\partial p_A} = s_A,\quad
\frac{\partial f_{AB}}{\partial p_B} = 0,\quad
\frac{\partial f_{AB}}{\partial p_{AB}} = s_{AB} - s_B.
$  

Collecting terms, it follows that 
$$ 
\footnotesize 
\begin{aligned} 
\frac{\frac{\partial f_{AB}}{\partial p_A}}{\|\nabla f_{AB}\|}=
\begin{cases}
\dfrac{s_A}{\sqrt{2}} & \text{over } L^{AB}_1,\\[0.3em]
0 & \text{over } L^{AB}_2,\\[0.3em]
s_A & \text{over } L^{AB}_3,
\end{cases}, \quad 
\frac{\frac{\partial f_{AB}}{\partial p_B}}{\|\nabla f_{AB}\|}=
\begin{cases}
\dfrac{s_B}{\sqrt{2}} & \text{over } L^{AB}_1,\\[0.3em]
s_B & \text{over } L^{AB}_2,\\[0.3em]
0 & \text{over } L^{AB}_3,
\end{cases}, \quad 
\frac{\frac{\partial f_{AB}}{\partial p_{AB}}}{\|\nabla f_{AB}\|}=
\begin{cases}
\dfrac{s_{AB}}{\sqrt{2}} & \text{over } L^{AB}_1,\\[0.3em]
s_{AB} - s_A & \text{over } L^{AB}_2,\\[0.3em]
s_{AB} - s_B & \text{over } L^{AB}_3,
\end{cases}
\end{aligned} 
$$ 
and the definition of line integrals yield 
\[
\begin{aligned}
\tilde{\Lambda}_{AB,A}
&= \frac{s_A}{\sqrt{2}}\int_{L^{AB}_1} r(z)\, d\sigma^{AB}_1
+ s_A \int_{L^{AB}_3} r(z)\, d\sigma^{AB}_3,\\[0.5em]
\tilde{\Lambda}_{AB,B}
&= \frac{s_B}{\sqrt{2}}\int_{L^{AB}_1} r(z)\, d\sigma^{AB}_1
+ s_B \int_{L^{AB}_2} r(z)\, d\sigma^{AB}_2,\\[0.5em]
\tilde{\Lambda}_{AB,AB}
&= \frac{s_{AB}}{\sqrt{2}}\int_{L^{AB}_1} r(z)\, d\sigma^{AB}_1
+ (s_{AB} - s_A) \int_{L^{AB}_2} r(z)\, d\sigma^{AB}_2
+ (s_{AB} - s_B) \int_{L^{AB}_3} r(z)\, d\sigma^{AB}_3.
\end{aligned}
\]

\paragraph{Step 5: Explicit expression for the Jacobian} For $
i\in\{1,2,3\}$ and $v\in\{A,B,AB\}$, define
\[
T^v_i \equiv \int_{L^v_i} r(z)\, d\sigma^v_i.
\]
Collecting terms from Step 4, it follows  that 
\begin{equation} \label{eqn:J_tilde} 
\footnotesize 
\begin{aligned}
\tilde{\Lambda} =
\begin{pmatrix}
s_A T^A_1 + \dfrac{s_A}{\sqrt{2}}T^A_2  &
-\dfrac{s_B}{\sqrt{2}}T^A_2 - s_B T^A_3 &
s_A T^A_1 + \dfrac{s_A - s_B}{\sqrt{2}}T^A_2 + (s_A - s_{AB})T^A_3\\[0.6em]
-\dfrac{s_A}{\sqrt{2}}T^B_2 - s_A T^B_3 &
s_B T^B_1 + \dfrac{s_B}{\sqrt{2}}T^B_2 &
s_B T^B_1 + \dfrac{s_B - s_A}{\sqrt{2}}T^B_2 + (s_B - s_{AB})T^B_3\\[0.6em]
\dfrac{s_A}{\sqrt{2}}T^{AB}_1 + s_A T^{AB}_3 &
\dfrac{s_B}{\sqrt{2}}T^{AB}_1 + s_B T^{AB}_2 &
\dfrac{s_{AB}}{\sqrt{2}}T^{AB}_1 + (s_{AB}-s_A)T^{AB}_2 + (s_{AB}-s_B)T^{AB}_3
\end{pmatrix}.
\end{aligned} 
\end{equation} 

\paragraph{Step 6: Properties of the Jacobian} We now characterize properties of the Jacobian. 
Define $\tilde{\Gamma} \equiv \Gamma + p_{AB}(s_{AB} - s_A - s_B)$ as an implicit function of $p$. 
\begin{lemma} \label{lemma_useful_facts}
   The following expressions are true.
    \begin{enumerate}
        \item $T_2^A = T_2^B$, $T_3^A = T_2^{AB}$, and $T_3^B = T_3^{AB}$. 
        \item If $\tilde{\Gamma} > 0$, then $T_2^A = 0$.
        \item If $\tilde{\Gamma} < 0$, then $T_1^{AB} = 0$.
        \item If $\tilde{\Gamma} = 0, T_2^A = T_2^B = T_1^{AB} = 0$. 
    \end{enumerate}
\end{lemma}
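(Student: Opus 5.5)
The plan is to work entirely in the $z$-plane. There each set $L_i^v$ is a piece of a straight line, and the relevant geometry is the four-region partition of $\mathbb R^2$ induced by the argmax over $\{0,u_A,u_B,u_{AB}\}$. The key identity, read off from \eqref{eqn:AB_1}, is $u_{AB}=u_A+u_B+\tilde\Gamma$ with $\tilde\Gamma=\Gamma+p_{AB}(s_{AB}-s_A-s_B)$. It yields the relations
\[
u_{AB}-u_A=u_B+\tilde\Gamma,\qquad u_{AB}-u_B=u_A+\tilde\Gamma .
\]
Every statement in the lemma then reduces to comparing the defining (in)equalities of the various $L_i^v$ using these relations.

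For item 1, we show the sets coincide, so the line integrals of $r$ with respect to the same one-dimensional Hausdorff measure agree. $L_2^A=\{u_A\ge0,\ u_A=u_B,\ u_A\ge u_{AB}\}$ and $L_2^B=\{u_B\ge 0,\ u_B=u_A,\ u_B\ge u_{AB}\}$ are literally the same set, since $u_A=u_B$ on both. For $T_3^A=T_2^{AB}$, note that $L_3^A=\{u_A\ge0,\ u_A\ge u_B,\ u_A=u_{AB}\}$ and $L_2^{AB}=\{u_{AB}\ge 0,\ u_{AB}=u_A,\ u_{AB}\ge u_B\}$. Substituting $u_{AB}=u_A$ makes the constraints identical. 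The case $L_3^B=L_3^{AB}$ is symmetric.

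For items 2–4, we show the relevant segment is empty, or degenerates to a single point of zero $\sigma$-measure. If $\tilde\Gamma>0$, then on $L_2^A$ we have $u_A=u_B\ge0$. Hence $u_{AB}=2u_A+\tilde\Gamma>u_A$, contradicting $u_A\ge u_{AB}$, so $L_2^A=\emptyset$ and $T_2^A=0$. If $\tilde\Gamma<0$, then on $L_1^{AB}$ we have $u_{AB}=0$ with $u_{AB}\ge u_A$ and $u_{AB}\ge u_B$, so $u_A\le 0$ and $u_B\le 0$. This gives $u_A+u_B+\tilde\Gamma<0$, contradicting $u_{AB}=0$; so $L_1^{AB}=\emptyset$ and $T_1^{AB}=0$. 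If $\tilde\Gamma=0$, both arguments run with weak inequalities. On $L_2^A$, $u_{AB}=2u_A\le u_A$ forces $u_A=u_B=0$. On $L_1^{AB}$, $u_A+u_B=0$ with both $\le0$ forces $u_A=u_B=0$. Each case pins $(z_A,z_B)$ to the single point solving $u_A=u_B=0$ (the map $z\mapsto(u_A,u_B)$ is a translation). A point has zero one-dimensional surface measure, so $T_2^A=T_2^B=T_1^{AB}=0$, with $T_2^B=T_2^A$ by item 1.

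The only delicate step is the $\tilde\Gamma=0$ case, because the sets are nonempty there and the argument rests on measure zero rather than emptiness. We must make sure $\sigma_i^v$ is understood as one-dimensional Hausdorff (arc-length) measure on the line containing $L_i^v$, as in the coarea formula used in Lemma~\ref{lem:diff_line_integrals}. Under that reading, a singleton has measure zero, and the bounded density $r$ from Assumption~\ref{ass:z_density} gives a vanishing integral.
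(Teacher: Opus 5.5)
Your proposal is correct and follows essentially the same route as the paper's proof. For item 1 you use set equalities on the common line; for items 2--3 you show $L_2^A$ and $L_1^{AB}$ are empty via $u_{AB}=u_A+u_B+\tilde\Gamma$; and for item 4 you reduce to the single point $\{u_A=u_B=0\}$, which has zero arc-length measure. Your explicit weak-inequality derivation of that degenerate point when $\tilde\Gamma=0$ spells out a step the paper only asserts.
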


\begin{proof}

\medskip
\noindent\textit{Claim 1.}
We show set equalities, which immediately imply equality of the corresponding $T_i^v$. First, on the set $\{u_A = u_B\}$:
\[
L_2^A = \{u_A \ge 0,\ u_A = u_B,\ u_A \ge u_{AB}\}, \quad 
L_2^B = \{u_B \ge 0,\ u_B = u_A,\ u_B \ge u_{AB}\}.
\]
But on $\{u_A = u_B\}$, the inequalities $u_A \ge 0$ and $u_B \ge 0$ are equivalent, and so are
$u_A \ge u_{AB}$ and $u_B \ge u_{AB}$. Hence $L_2^A = L_2^B$ as sets, and therefore
$T_2^A = T_2^B$.

Second, on the set $\{u_A = u_{AB}\}$:
\[
L_3^A = \{u_A \ge 0,\ u_A \ge u_B,\ u_A = u_{AB}\}, \quad 
L_2^{AB} = \{u_{AB} \ge 0,\ u_{AB} = u_A,\ u_{AB} \ge u_B\}.
\]
On $\{u_A = u_{AB}\}$, the inequalities $u_A \ge 0$ and $u_{AB} \ge 0$ coincide, and so do
$u_A \ge u_B$ and $u_{AB} \ge u_B$. Thus $L_3^A = L_2^{AB}$, so $T_3^A = T_2^{AB}$.

Third, on the set $\{u_B = u_{AB}\}$:
\[
L_3^B = \{u_B \ge 0,\ u_B \ge u_A,\ u_B = u_{AB}\},
\quad 
L_3^{AB} = \{u_{AB} \ge 0,\ u_{AB} \ge u_A,\ u_{AB} = u_B\}.
\]
Again, on $\{u_B = u_{AB}\}$ the inequalities $u_B \ge 0$ and $u_{AB} \ge 0$ coincide, as do
$u_B \ge u_A$ and $u_{AB} \ge u_A$, so $L_3^B = L_3^{AB}$ and hence $T_3^B = T_3^{AB}$. This establishes Claim 1.

\medskip
\noindent\textit{Claims 2 and 3.} 
For Claim 2, if $u_A = u_B$ it follows that $u_{AB} = u_A + u_B + \tilde{\Gamma} = 2 u_A + \tilde{\Gamma}$. Clearly $u_A \ge u_{AB}$ and $u_A \ge 0$ is impossible when $\tilde{\Gamma} > 0$. Hence in this case $L_2^A = \emptyset \Rightarrow T_2^A = 0$.

For Claim 3, on $u_{AB} = 0$ with $u_{AB} \ge u_A, u_B$ we need that $u_A \le 0, u_B \le 0 \Rightarrow u_A + u_B \le 0$. If $\tilde{\Gamma} < 0$, since $u_{AB} = u_A + u_B + \tilde{\Gamma}, L_1^{AB} = \emptyset \Rightarrow T_1^{AB} = 0$.

\medskip \noindent\textit{Claim 4.}  When $\tilde{\Gamma} = 0$, $L_2^A = L_2^B = L_1^{AB} = \{u_A = u_B = 0\}$ are degenerate points, which implies, by Assumption \ref{ass:z_density}, Claim 4. 
\end{proof}

\paragraph{Step 7: Derivative characterization}
Note that $\frac{\partial \tilde{p}_i(p,\delta)}{\partial p_j} = s_j \frac{\partial \tilde{p}_i(p,\delta)}{\partial \delta_j}$ for $i \in \{A,B,AB\}$ and $j \in \{A,B\}$ because  $\delta_j$ enters the utilities $u_A, u_B$ as an affine shift, while $p_j$ does so scaled by $s_j$. Substituting Equation \eqref{eqn:J_tilde} into the implicit function Equation \eqref{eq_implicit_fxn_2} and applying Lemma \ref{lemma_useful_facts}, we obtain
\begin{equation} \label{eqn:dd}
\footnotesize 
\begin{aligned}
  \det(I_3 - \tilde{\Lambda}(p^*))  \frac{\partial Q_A(p^*)}{\partial \delta_B} 
   & =    \begin{cases}
       \frac{T_1^{AB}}{\sqrt{2}} +  T_3^A T_3^B(s_{AB} - s_A - s_B)   & \text{if } \tilde \Gamma > 0 \\
       (s_{AB} - s_A - s_B) T_3^A T_3^B & \text{if } \tilde \Gamma = 0 \\
          T_3^A T_3^B(s_{AB} - s_A - s_B)   + \frac{ T_2^B ((s_{AB}-s_A - s_B) (T_3^A+T_3^B)-1)}{\sqrt{2}}  & \text{if } \tilde \Gamma < 0,
   \end{cases}
\end{aligned}
\end{equation} 
by direct computation at the stable equilibrium $p^*$.\footnote{To see this let 
$ 
M(\delta) \equiv I_3 - \tilde{\Lambda}\bigl(p^*(\delta),\delta\bigr),
r_B(\delta) \equiv 
\left.\frac{\partial \tilde{p}(p,\delta)}{\partial \delta_B}\right|_{p=p^*(\delta)}. 
$ 
Since $p^*(\delta)$ is a stable equilibrium, by the implicit function theorem 
$ 
\frac{\partial p^*(\delta)}{\partial \delta_B}
=
M(\delta)^{-1} r_B(\delta).
$  Define $Q_A^*(\delta) = p_A^*(\delta) + p_{AB}^*(\delta)$, so if we let
$e \equiv (1,0,1)^\top$ we can write
$
\frac{\partial Q_A^*(\delta)}{\partial \delta_B}
=
e^\top \frac{\partial p^*(\delta)}{\partial \delta_B}
=
e^\top M(\delta)^{-1} r_B(\delta).
$ 
Now use the adjugate formula for the inverse of a $3\times 3$ matrix:
$ 
M(\delta)^{-1}
=
\frac{\operatorname{adj}\bigl(M(\delta)\bigr)}{\det M(\delta)}.
$  Therefore, $ 
\det\bigl(I_3 - \tilde{\Lambda}\bigr)\,
\frac{\partial Q_A^*(\delta)}{\partial \delta_B}
=
e^\top \operatorname{adj}\bigl(M(\delta)\bigr) r_B(\delta).
$  
Next, because 
$ 
\frac{\partial \tilde{p}(p,\delta)}{\partial p_j}
=
s_j \frac{\partial \tilde{p}(p,\delta)}{\partial \delta_j},
$ 
since $\delta_j$ enters $u_A,u_B$ as an affine shift, while $p_j$ enters
scaled by $s_j$, it follows 
$ 
r_B(\delta)
=
\left.\frac{\partial \tilde{p}(p,\delta)}{\partial \delta_B}\right|_{p=p^*(\delta)}
=
\frac{1}{s_B}\,\tilde{\Lambda}_{\cdot,B}\bigl(p^*(\delta),\delta\bigr),
$  
where $\tilde{\Lambda}_{\cdot,B}$ denotes the $B$-column of $\tilde{\Lambda}$.  Plugging
this into the previous display gives
$ 
\det\bigl(I_3 - \tilde{\Lambda}\bigr)\,
\frac{\partial Q_A^*(\delta)}{\partial \delta_B}
=
\frac{1}{s_B}\,e^\top \operatorname{adj}\bigl(I_3 - \tilde{\Lambda}\bigr)\,
\tilde{\Lambda}_{\cdot,B}.
$ 
By letting  the $1\times 3$ row vector
$ 
V_B^\top(\delta) \equiv
\frac{1}{s_B}\,e^\top \operatorname{adj}\bigl(I_3 - \tilde{\Lambda}(p^*(\delta),\delta)\bigr),
$ 
$
\det\bigl(I_3 - \tilde{\Lambda}\bigr)\,
\frac{\partial Q_A^*(\delta)}{\partial \delta_B}
=
V_B^\top(\delta)\,\tilde{\Lambda}_{\cdot,B}.
$  Computing $V_B$ directly leads to the final expression.
}

\paragraph{Step 8: Strict positivity of $T_3^A,T_3^B$ for non-degenerate distributions} In the following lemma we establish strict positivity of $T_3^A, T_3^B$. 

\begin{lemma}[Geometry of the adoption margins]\label{lemma:T-signs}
Let Assumption \ref{ass:z_density} hold. Then 
\begin{itemize}
\item[1.] For all $\tilde\Gamma\in\mathbb{R}$, we have $T_3^A > 0$ and $T_3^B > 0$.
\item[2.] If in addition $s_{AB} = s_A + s_B$ and $\Gamma > 0$, then $T_1^{AB} > 0$.
\item[3.] If in addition $s_{AB} = s_A + s_B$ and $\Gamma < 0$, then $T_2^B > 0$. 
\end{itemize} 
\end{lemma}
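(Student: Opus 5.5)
The plan is to prove each claim by computing the relevant boundary piece explicitly as a subset of an affine line in $z$-space. Each piece turns out to be a half-line or a nondegenerate segment. Strict positivity of $r$ from Assumption~\ref{ass:z_density} then gives a strictly positive integral, and the tail bound (see the footnote in Lemma~\ref{lem:diff_line_integrals}) gives finiteness.

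Throughout, write $a\equiv\delta_A+s_AQ_A(p)$ and $b\equiv\delta_B+s_BQ_B(p)$, so that $u_A=a+z_A$ and $u_B=b+z_B$. By \eqref{eqn:AB_1}, $u_{AB}=u_A+u_B+\tilde\Gamma$ with $\tilde\Gamma=\Gamma+p_{AB}(s_{AB}-s_A-s_B)$ as in Step 6. The map $z\mapsto(u_A,u_B)$ is a translation, so lengths of sets are preserved under it, and I will describe each $L_i^v$ in the coordinates $(u_A,u_B)$.

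\emph{Claim 1.} On $L_3^A$ we have $u_A=u_{AB}$, which is equivalent to $u_B=-\tilde\Gamma$. This is the horizontal line $z_B=-\tilde\Gamma-b$. On this line the remaining constraints $u_A\ge 0$ and $u_A\ge u_B=-\tilde\Gamma$ reduce to $z_A\ge \max\{0,-\tilde\Gamma\}-a$. Hence $L_3^A$ is a closed half-line, for every real $\tilde\Gamma$. Since $r$ is continuous and strictly positive, the integral of $r$ over any half-line is strictly positive, so $T_3^A>0$. The argument for $T_3^B$ is the same with $A$ and $B$ swapped: the line is $z_A=-\tilde\Gamma-a$ and the half-line is $z_B\ge\max\{0,-\tilde\Gamma\}-b$.

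\emph{Claims 2 and 3.} Under linear sanctions $s_{AB}=s_A+s_B$, we have $\tilde\Gamma=\Gamma$ identically in $p$.

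For Claim 2, $L_1^{AB}$ is the set where $u_A+u_B=-\Gamma$ with $u_A\le 0$ and $u_B\le 0$; the last two conditions come from $u_{AB}\ge u_A$ and $u_{AB}\ge u_B$. If $\Gamma>0$, this is the segment of the line $z_A+z_B=-\Gamma-a-b$ parametrized by $u_A\in[-\Gamma,0]$, which has length $\sqrt2\,\Gamma>0$.

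For Claim 3, $L_2^B$ is the set where $u_A=u_B\ge0$ and $u_B\ge u_{AB}=2u_B+\Gamma$, that is, $0\le u_B\le-\Gamma$. If $\Gamma<0$, this is a segment of the line $z_A-z_B=b-a$ of length $\sqrt2\,|\Gamma|>0$.

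In both cases the density is continuous and strictly positive on a set of positive length, so the corresponding integral is strictly positive.

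No step is a serious obstacle. The one point needing care is the bookkeeping of which inequalities are active on each piece, which was set up in Lemma~\ref{lemma_useful_facts}. The other is noting that in Claims 2--3 the linearity assumption is exactly what makes $\tilde\Gamma$ take the sign of $\Gamma$. Without it, $\tilde\Gamma$ depends on $p^*$ and could have the opposite sign.
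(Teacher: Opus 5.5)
Your proof is correct and follows the paper's argument essentially step for step. You identify $L_3^A$ (and symmetrically $L_3^B$) as the half-line $\{z_B=-\tilde\Gamma-b,\ z_A\ge\max\{0,-\tilde\Gamma\}-a\}$ for every $\tilde\Gamma$, and under $s_{AB}=s_A+s_B$ you reduce $L_1^{AB}$ and $L_2^B$ to nondegenerate segments of the lines $u_A+u_B=-\Gamma$ and $u_A=u_B$, where strict positivity of $r$ gives $T>0$ exactly as in the paper (your explicit lengths $\sqrt2\,|\Gamma|$ are a small addition the paper leaves implicit).
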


\begin{proof} \textit{Claim 1.}  Let $\mu_A = \delta_A + s_A (p_A + p_{AB})$ and $\mu_B = \delta_B + s_B (p_B + p_{AB})$. 
Under Assumption \ref{ass:z_density},  $T_i^v$ is the integral of this strictly positive density over the one-dimensional set $L_i^v$ with respect to its induced Lebesgue measure. In particular,
$ 
T_i^v > 0
$
if and only $L_i^v$ if has positive Lebesgue measure, and it equals zero otherwise.  

By definition,
$ 
L_3^A = \{u_A \ge 0,\ u_A \ge u_B,\ u_A = u_{AB}\}.
$ 
On the set $\{u_A = u_{AB}\}$, we have
$ 
\mu_A + z_A
=
\mu_A + \mu_B + z_A + z_B + \tilde\Gamma
\quad \Rightarrow \quad
\mu_B + z_B + \tilde\Gamma = 0
\quad \Rightarrow \quad
z_B = -\mu_B - \tilde\Gamma.
$ 
The remaining inequalities become:
(i) $u_A \ge 0$, which implies 
$ 
\mu_A + z_A \ge 0
\quad\Longleftrightarrow\quad
z_A \ge -\mu_A
$; (ii) $u_A \ge u_B$, which implies 
$ 
\mu_A + z_A \ge \mu_B + z_B
=
\mu_B - \mu_B - \tilde\Gamma
=
-\tilde\Gamma
\quad\Longleftrightarrow\quad
z_A \ge -\tilde\Gamma - \mu_A.
$ 
Hence, 
\[
L_3^A
=
\left\{z_B = -\mu_B - \tilde\Gamma,\ 
z_A \ge \max(-\mu_A,\,-\mu_A - \tilde\Gamma)\right\},
\]
which is a half-line in $\mathbb{R}^2$ of the form $\{(z_A,z_B): z_B=c,\ z_A\ge c_0\}$ for some finite constants $c,c_0$ (depending on $\mu_A,\mu_B,\tilde\Gamma$, but not on $z$). This half-line has infinite one-dimensional Lebesgue measure, and since $r(z) > 0$ everywhere and the line integral is finite by Assumption \ref{ass:z_density}, $L_3^A$ has positive one-dimensional measure under $r(z)$, and thus
$ 
T_3^A = \int_{L_3^A} r(z)\,d\sigma_3^A > 0
$ 
for all $\tilde\Gamma\in\mathbb{R}$.
An analogous argument applies to $L_3^B$, such that $T_3^B > 0$.

\medskip 
\noindent \textit{Claim 2.} Recall that $L^{AB}_1=\{u_{AB}=0,\ u_{AB}\ge u_A,\ u_{AB}\ge u_B\}$.  Using $u_{AB}=u_A+u_B+\Gamma$ (since $s_{AB} = s_A + s_B$), the constraint $u_{AB}=0$ becomes $u_A+u_B=-\Gamma$.
The inequalities $0=u_{AB}\ge u_A$ and $0=u_{AB}\ge u_B$ are equivalent to $u_A\le 0$ and $u_B\le 0$.
Hence, 
\[
L^{AB}_1=\{u_A+u_B=-\Gamma,\ u_A\le 0,\ u_B\le 0\}.
\]
If $\Gamma>0$, this set is a nondegenerate line segment: for any $t\in[-\Gamma,0]$,
taking $u_A=t$ and $u_B=-\Gamma-t$ satisfies $u_A\le 0$, $u_B\le 0$, and $u_A+u_B=-\Gamma$.
Since $(u_A,u_B)$ is an affine translate of $(z_A,z_B)$ (see \eqref{eqn:utility_basics}),
$L^{AB}_1$ is also a nondegenerate line segment in $z$-space, hence has strictly positive
measure $\sigma^{AB}_1(L^{AB}_1)>0$. This implies that $T_1^{AB} > 0$ completing the proof of Claim 2. 
\medskip

\noindent\textit{Claim 3.}  Recall that $L^B_2=\{u_B\ge 0,\ u_B=u_A,\ u_B\ge u_{AB}\}$.  On the set $\{u_A=u_B\}$ and using $u_{AB}=u_A+u_B+\Gamma$ (since $s_{AB} = s_A + s_B$), we have $u_{AB}=2u_B+\Gamma$.
Thus, $u_B\ge u_{AB}$ is equivalent to $u_B\ge 2u_B+\Gamma$, i.e.\ $u_B\le -\Gamma$, and it follows that  
\[
L^B_2=\{u_A=u_B,\ 0\le u_B\le -\Gamma\}.
\]
If $\Gamma<0$, the interval $[0,-\Gamma]$ has positive length, so $L^B_2$ is a nondegenerate line segment
(and again an affine translate of a segment in $z$-space). By the same positivity argument as above, $T_2^B > 0$, as desired.  
\end{proof}

\paragraph{Step 9: Proof of the claim for $\Gamma \neq 0$} Whenever $\Gamma \neq 0$, 
since $T_i^v \geq 0$ for all $i \in \{1,2,3\}$ and $v \in \{A,B,AB\}$, we see that if $\Gamma > 0$ and $s_{AB} \ge s_A + s_B$ (and thus $\tilde \Gamma > 0$ for any value of equilibrium shares $p^*$), then  $\sgn(\frac{\partial Q_A^*}{\partial \delta_B} ) > 0$ by Lemma \ref{lemma:T-signs}.  The cases for $\Gamma < 0$, $s_{AB} \le s_A + s_B$ proceeds similarly.  Using the fact that $p^*$ is a stable equilibrium, we have $\mathrm{det}(I_3 - \tilde{\Lambda}) > 0$, completing the proof for the case where $\Gamma \neq 0$.

\paragraph{Step 10: Proof of the claim for $\Gamma = 0$ and $s_{AB} = s_A + s_B$} Follows immediately from the second case in Equation \eqref{eqn:dd}. 

\paragraph{Step 11: Proof  of the claim for $\Gamma = 0$ and $s_{AB} \neq s_A + s_B$} If $p_{AB}^*(\delta) = 0$, the result then follows immediately from the second case in Equation \eqref{eqn:dd}. Else, $p_{AB}^*(\delta) \neq 0$, and the result follows directly from the first or third condition in Equation  \eqref{eqn:dd}, combined with Lemma \ref{lemma:T-signs}.

\section{Additional details on empirical specification}
\label{sec:stacked_se}

\subsection{Intuition behind identification}

A useful way to see the source of identification is to compare the deterministic components of the three utilities in \eqref{eqn:dynamic1}. Conditional on $X_{it}$ and the lagged group state $\big(p_{A(t-1)}^{g(i)},p_{B(t-1)}^{g(i)},p_{AB(t-1)}^{g(i)}\big)$, we have
\[
u_{ABit}-u_{Ait}-u_{Bit}
=
\Gamma+\bigl(s_{AB}-s_A-s_B\bigr)p_{AB(t-1)}^{g(i)}.
\]
In other words, $\Gamma$ enters as an intercept in the surplus from joint adoption relative to the sum of the two single-norm utilities, while $s_{AB}-s_A-s_B$ enters as the slope of that surplus with respect to the lagged probability of joint adoption.

Consider identifying these two components separately. Suppose first that $s_{AB}=s_A=s_B=0$, so that we are in the model of \cite{Gentzkow_AER_2007}. For simplicity, assume there are no additional covariates other than a binary norm-$B$ shifter $X_t^B\in\{0,1\}$ that switches on at time $t$ and remains equal to one thereafter; this is only for expositional convenience, and the same logic applies to more general paths of $X_\tau^B$. Let $\Delta Y_\tau$ denote the difference between the path with the switch and the baseline path without the switch. In this static case,
\[
u_{B\tau}-u_{\emptyset \tau}=\beta X_\tau^B+\tilde z_{B\tau},
\qquad
u_{AB,\tau}-u_{A,\tau}=\beta X_\tau^B+\tilde z_{B\tau}+\Gamma.
\]
Hence, on the switch date,
\[
\Delta(u_{B,t}-u_{\emptyset,t})=\beta,
\qquad
\Delta(u_{AB,t}-u_{A,t})=\beta.
\]
The shifter therefore moves the margins $B$ -vs. $\emptyset$ and $AB$ -vs. $A$. In the data, this appears as a common variation in the corresponding choice probabilities. Since the shifter does not enter the utility of $A$, the likelihood attributes this common shift to $\beta$, while the remaining intercept gap  is attributed to $\Gamma$.

Now suppose that social spillovers are non-zero, and consider the identification of $s_{AB}-s_A-s_B$. In this case, because each $s_j$ multiplies by past adoption shares, we can treat $p_{t-1}^g$ as a simple control variable and estimate $s_A$ (resp. $s_B$) through variation in past norm adoption $p_{A,t-1} + p_{AB,t-1}$ (resp. $p_{B,t-1}+ p_{AB,t-1}$) affecting $p_{A,t}, p_{AB,t}$ (resp. $p_{B,t}, p_{AB,t}$). Similarly, $s_{AB}$ is estimated through the relationship of $u_{AB,t}$ with $p_{AB,t-1}, p_{A,t-1}, p_{B,t-1}$. Using this intuition, we can therefore jointly identify $\Gamma$ (which does not vary over time) with the social components by propagating the system over time in the likelihood function. 

Note that an important assumption is that the unobserved components $\tilde{z}_{t}$ are not correlated with omitted institutional features that also directly affect the current utility. If that occurs, then $s_A$ or $s_B$ are not identified separately, and likewise $\Gamma$. Our maintained identifying assumption is therefore that, conditional on the included covariates, fixed effects, and excluded shifters, there are no omitted factors that enter current utilities and are correlated with lagged shares. 


\subsection{Estimation and Inference}

Estimation and inference proceeds as described in Algorithm \ref{alg:estimation}. Note that a key object entering the right-hand side of the choice utilities is the vector of lagged group shares of each norm/adoption state. Because these shares are themselves estimated, their sampling variation propagates into the sampling distribution of the structural estimator. This subsection formalizes estimation and standard errors taking into account survey sampling uncertainty.  We observe repeated cross-sections over finite periods $T$ with $N \equiv T n$.

\begin{algorithm}[!ht]
\caption{Estimation and inference}
\label{alg:estimation}
\begin{algorithmic}[1] 
\State \textbf{Input:} Micro-data $\{(v_{it}, X_i, g(i))\}_{t=1}^T$ (panel or repeated cross sections), with $v_{it}$ denoting the adopted norm, $X_i$ denoting the covariates (including the intercept) and $g(i)$ denoting the reference group generating social interactions for individual $i$; tuning parameter $\varepsilon>0$ for smoothing; grid size $G$ (set to $G=200$) and bounds $[-5,5]^2$; index sets $(\mathcal{I}_A,\mathcal{I}_B)$ for the validity restrictions in \eqref{eqn:validity}; set of starting values for $\theta$. Group-level shares $p_{A(t-1)}^{g(i)}, p_{B(t-1)}^{g(i)}, p_{AB(t-1)}^{g(i)}$ from the data. 
\State \textbf{Impose exclusion restrictions:} Restrict $\theta$ so that $\beta_A^{(\mathcal{I}_A)} = 0$ and $\beta_B^{(\mathcal{I}_B)} = 0$.
\State  \textbf{Construct quadrature grid:} Build a $G \times G$ tensor-product grid $\{(\tilde{z}_A^g,\tilde{z}_B^h)\}$ on $[-5,5]^2$ with associated weights $\{w_{gh}\}$ and bivariate normal density $\phi_\rho(\tilde{z}_A^g,\tilde{z}_B^h)$.

            \State \textbf{Compute model-implied choices:} Compute utilities as functions of $\theta$ and $z$:
            \[
            \begin{aligned}
            u_{Ait}(\theta; \tilde{z}) &= X_{it}^\top \beta_A + s_A (p_{A(t-1)}^{g(i)} + p_{AB(t-1)}^{g(i)}) + \tilde{z}_A^g, \\
            u_{Bit}(\theta; \tilde{z}) &= X_{it}^\top \beta_B + s_B (p_{B(t-1)}^{g(i)} + p_{AB(t-1)}^{g(i)}) + \tilde{z}_B^h, \\
            u_{ABit}(\theta; \tilde{z}) &= u_{Ait}(\theta) + u_{Bit}(\theta) + \Gamma + (s_{AB} - s_A - s_B)p_{AB(t-1)}^{g(i)}, \\
            u_{\emptyset it}(\theta; \tilde{z}) &= 0.
            \end{aligned}
            \]
            and the smoothed individual choice probabilities  
            $$
            \pi_{it}(v;\theta,\tilde{z}) =  \sigma_\varepsilon\left(u_{vit}(\theta;\tilde{z}) - \max_{v' \neq v} u_{v'it}(\theta;\tilde{z}) \right)
            $$
           where $ 
            \sigma_\varepsilon(x) = \big(1 + e^{-x/\varepsilon}\big)^{-1}.
            $ For $\varepsilon \rightarrow 0$, these correspond to the indicator functions. 
            
        \State \textbf{Compute individual probabilities:} Approximate the model choice probabilities by numerical integration:
        \[
        \hat{p}_{it}^{g(i)}(v;\theta) := \sum_{g,h} 
        \pi_{it}(v;\theta,\tilde{z})\,\phi_\rho(\tilde{z}_A^g,\tilde{z}_B^h)\,w_{gh},
        \quad v \in \{\emptyset,A,B,AB\}.
        \]
          
    \State \textbf{Compute log-likelihood:} Compute the log-likelihood
    \[
      \ell(\theta) = \sum_{i,t} \sum_{v \in \{\emptyset,A,B,AB\}}
      \mathbf{1}\{v_{it}=v\}\,\log(\hat{p}_{it}^{g(i)}(v;\theta)).
    \]

\State \textbf{Optimization:} With multiple random starting values for $\theta$, maximize $\ell(\theta)$ subject to the validity restrictions, obtaining $\hat{\theta}$.
\State \textbf{Confidence intervals:} Construct confidence intervals as in Supplemental Appendix \ref{sec:stacked_se} using the plug-in estimate of the Fisher information with respect to both $\theta$ and the estimated share of adopters. 
\State \textbf{Policy counterfactuals:} For each policy scenario, estimate policy counterfactuals as in Definition \ref{defn:policy1}. 
\end{algorithmic}
\end{algorithm}

\paragraph{Extended parameter vector.}
Let $v_{it}\in\{\emptyset,A,B,AB\}$ denote the realized choice. For each group $r$ and
time $t$, define the (population) share vector
\[
q_{r,t}\;\equiv\;
\begin{pmatrix}
p^{r}_{A,t}\\ p^{r}_{B,t}\\ p^{r}_{AB,t}
\end{pmatrix}\in\mathbb{R}^3,
\qquad
p^{r}_{\emptyset,t}=1-\mathbf{1}^\top q_{r,t}.
\]
Let $\theta\in\Theta\subset\mathbb{R}^{d_\theta}$ collect the structural parameters, and
define $ 
q\equiv\{q_{r,t}\}_{r=1,\dots,R;\;t=1,\dots,T}.
$  

Let $\hat{p}_{it}^{g(i)}(v;\theta, q)$ denote the model-implied (smoothed, integrated) probability
that individual $(i,t)$ chooses $v$ as in Step 6 of Algorithm \ref{alg:estimation}. Define the individual log-likelihood contribution and
its $\theta$-score
\[
\ell_{it}(\theta,q)\;\equiv\;\log \hat{p}_{it}^{g(i)}(v;\theta, q),
\qquad
s_{it}(\theta,q)\;\equiv\;\partial_\theta \ell_{it}(\theta,q)\in\mathbb{R}^{d_\theta}.
\]

\paragraph{Share-moment conditions.}
Index group--time cells by $c=(r,t)$ and let $\mathcal{I}_c$ be the set of observations
in cell $c$, with $n_c\equiv|\mathcal{I}_c|$. Let
$y_{it} = (1\{v_{it} = A\}, 1\{v_{it} = B\}, 1\{v_{it} = AB\})$.
Define the sample share in cell $c$ as $\bar y_c \equiv n_c^{-1}\sum_{(i,t)\in\mathcal{I}_c} y_{it}$, and 
$ 
G_{N,c}(q)\;\equiv\;\frac{n_c}{N}\,(\bar y_c-q_c)
\;=\;\frac{1}{N}\sum_{(i,t)\in\mathcal{I}_c}(y_{it}-q_c), \quad 
G_N(q)\equiv\{G_{N,c}(q)\}_c.
$  

\paragraph{Stacked estimating equations.}
Let $\alpha = (\theta, q)$ and  define 
\[
S_N(\theta,q)\;\equiv\;\frac{1}{N}\sum_{i,t} s_{it}(\theta,q),
\qquad
\Psi_N(\alpha)\;\equiv\;
\begin{pmatrix}
S_N(\theta,q)\\
G_N(q)
\end{pmatrix}.
\]
The stacked estimator solves $\Psi_N(\hat{\theta}, \hat{q})=0$, which coincides with the estimator in Algorithm \ref{alg:estimation} (since $G_{N,c}(q)=0$ implies $q_c=\bar y_c$ for each cell $c$).

\paragraph{Asymptotic linear representation.}
Assume standard regularity conditions for smooth M-estimation and that
$n_c/N\to\kappa_c\in(0,1)$ for each cell $c$ (so $\hat q_c-q_{0,c}=O_p(N^{-1/2})$).
Let $\alpha_0=(\theta_0,q_0)$ denote the true parameter values. A first-order expansion of the stacked
equations gives
\[
0
=
\Psi_N(\alpha_0)
+
A\,(\hat\alpha-\alpha_0)
+
o_p(N^{-1/2}),
\qquad
A\equiv \partial_\alpha \Psi(\alpha_0)
=
\begin{pmatrix}
A_{\theta\theta} & A_{\theta q}\\
0 & A_{qq}
\end{pmatrix},
\]
where
\[
A_{\theta\theta}\equiv \partial_\theta S(\theta_0,q_0),\qquad
A_{\theta q}\equiv \partial_q S(\theta_0,q_0),\qquad
A_{qq}\equiv \partial_q G(q_0).
\]
Because $G_{N,c}(q)$ is linear in $q_c$, $A_{qq}$ is block-diagonal with blocks
$-(\frac{n_c}{N} I_3)$, hence $A_{qq}^{-1}$ exists and is also block-diagonal. Solving the block system yields the influence-function form
\begin{align*}
\sqrt{N}(\hat\theta-\theta_0)
&=
-A_{\theta\theta}^{-1}\frac{1}{\sqrt{N}}\sum_{i,t}
\underbrace{\Big(s_{it}(\theta_0,q_0)-A_{\theta q}A_{qq}^{-1}m_{it}(q_0)\Big)}_{\displaystyle \tilde s_{it}}
+o_p(1),
\end{align*}
where $m_{it}(q_0)$ is the per-observation contribution to the share moments: it is a
three-dimensional vector that places $(y_{it}-q_{c,0})$ into the coordinates of the
cell $c=(g(i),t)$ and zeros elsewhere.

\paragraph{Sandwich variance and clustering.}
Let $\hat A_{\theta\theta}$ and $\widehat{A_{\theta q}A_{qq}^{-1}}$ denote sample analogues
(e.g., obtained by automatic differentiation). Define the adjusted score
\[
\hat{\tilde s}_{it}\;\equiv\; s_{it}(\hat\theta,\hat q)-\widehat{A_{\theta q}A_{qq}^{-1}}\,m_{it}(\hat q).
\]
Then a heteroskedasticity-robust sandwich estimator is
\[
\widehat{\mathbb{V}}(\hat\theta)
=
\hat A_{\theta\theta}^{-1}
\left(\frac{1}{N}\sum_{i,t}\hat{\tilde s}_{it}\hat{\tilde s}_{it}^\top\right)
\hat A_{\theta\theta}^{-1\top}.
\]


\section{Additional tables and figures}

\subsection{Covariate composition}

\subsubsection{Sierra Leone}



In Sierra Leone, the sample contains 33{,}658 women, with cohort counts ranging from 3{,}528 in the oldest bin (1958--1970) to 11{,}991 in the 1981--1990 bin. Respondents span five regions with heterogeneous sizes and are grouped into three ethnicity groups with similar shares (33.6\% Mende, 31.4\% Temne, and 34.9\% grouped as ``Other''). Religion is stable across cohorts and predominantly Muslim (above $75\%$). Urban residence rises modestly over cohorts, from about 35\% in the oldest bins to 46.1\% among cohorts born between 1991--2001. Education and literacy exhibit pronounced cohort changes. Literacy increases in parallel, from 14.6\% among the oldest cohorts to 41.9\% among the youngest. These changes can be explained by several factors, including the post-civil war recovery period (the civil war ended in 2002 and therefore plausibly affecting post-1985 cohorts' education level).
Because education can be strongly correlated with behaviors related to both norms' adoption, the cohort trends in Table~\ref{tab:sl_summary1} underscore the importance of controlling for these characteristics in our model.

\begin{table}[H]
\centering
\small 
\caption{Sierra Leone: Summary statistics by birth-cohort ten years bins.} \label{tab:sl_summary1}

\scriptsize
\setlength{\tabcolsep}{4pt}
\renewcommand{\arraystretch}{0.75}
\setlength{\aboverulesep}{0pt}
\setlength{\belowrulesep}{1pt}
\setlength{\cmidrulesep}{0pt}

\centering
\begin{tabular}[t]{lllllll}
\toprule
\small
Panel & Statistic & 1958--1970 & 1971--1980 & 1981--1990 & 1991--2001 & Full sample\\
\midrule
 & Observations & 3,528 & 9,015 & 11,991 & 9,124 & 33,658\\
\cmidrule{1-7}
 & High education & 14.3\% & 13.9\% & 23.1\% & 55.6\% & 28.5\%\\
 & Literate & 14.6\% & 12.8\% & 20.5\% & 41.9\% & 23.6\%\\
 & Low education & 8.4\% & 10.3\% & 12.3\% & 13.5\% & 11.7\%\\
 & Muslim & 77.6\% & 77.1\% & 77.2\% & 75.6\% & 76.8\%\\
 & Christian & 21.8\% & 22.5\% & 22.3\% & 24.1\% & 22.8\%\\
{\raggedright\arraybackslash Covariates} & Urban & 35.7\% & 35.2\% & 39.7\% & 46.1\% & 39.8\%\\
\cmidrule{1-7}
 & East & 21.7\% & 22.0\% & 20.9\% & 18.2\% & 20.6\%\\
 & North & 34.3\% & 32.6\% & 31.0\% & 29.6\% & 31.4\%\\
 & South & 26.5\% & 24.9\% & 24.5\% & 21.8\% & 24.1\%\\
{\raggedright\arraybackslash Region} & West & 17.5\% & 20.4\% & 23.6\% & 30.3\% & 23.9\%\\
\cmidrule{1-7}
 & Mende & 36.6\% & 34.7\% & 34.3\% & 30.5\% & 33.6\%\\
 & Other & 33.0\% & 34.9\% & 34.8\% & 35.9\% & 34.9\%\\
{\raggedright\arraybackslash Ethnicity} & Temne & 30.4\% & 30.4\% & 30.9\% & 33.6\% & 31.4\%\\
\bottomrule
\end{tabular}
\end{table}

\subsubsection{Nigeria}


Nigeria’s covariate composition changes across birth cohorts (Table~\ref{tab:ng_summary1}). In particular,  educational attainment rises over time: the share with high education increases from 34.9\% among the 1958--1970 cohort to 60.3\% among the 1991--2001 cohort, while low education falls from 27.0\% to 9.8\%. Literacy also increases over time, from 38.2\% to 51.3\% for the 1981--1990 cohort, and remains high thereafter (48.2\% in 1991--2001).

\begin{table}[H]
\centering
\caption{Nigeria: summary statistics by birth-cohort ten years bins.} \label{tab:ng_summary1}
\centering

\scriptsize
\setlength{\tabcolsep}{4pt}
\renewcommand{\arraystretch}{0.75}
\setlength{\aboverulesep}{0pt}
\setlength{\belowrulesep}{1pt}
\setlength{\cmidrulesep}{0pt}

\begin{tabular}[t]{lllllll}
\toprule
Panel & Statistic & 1958--1970 & 1971--1980 & 1981--1990 & 1991--2001 & Full sample\\
\midrule
 & Observations & 8,814 & 18,785 & 26,321 & 17,285 & 71,205\\
\cmidrule{1-7}
 & High education & 34.9\% & 45.5\% & 56.7\% & 60.5\% & 52.0\%\\
 & Literate & 38.2\% & 43.6\% & 51.3\% & 48.3\% & 46.9\%\\
 & Low education & 27.0\% & 23.4\% & 14.7\% & 9.7\% & 17.3\%\\
 & Muslim & 36.5\% & 39.8\% & 43.0\% & 51.9\% & 43.5\%\\
 & Christian & 13.2\% & 11.8\% & 11.8\% & 10.9\% & 11.8\%\\
{\raggedright\arraybackslash Covariates} & Urban & 42.1\% & 46.9\% & 47.5\% & 47.3\% & 46.6\%\\
\cmidrule{1-7}
 & North Central & 10.9\% & 11.5\% & 12.0\% & 12.3\% & 11.8\%\\
 & North East & 12.1\% & 12.6\% & 13.8\% & 15.7\% & 13.7\%\\
 & South East & 19.0\% & 16.9\% & 16.1\% & 15.0\% & 16.4\%\\
 & South South & 18.2\% & 18.7\% & 19.1\% & 15.7\% & 18.0\%\\
 & South West & 21.9\% & 20.8\% & 17.8\% & 13.3\% & 18.0\%\\
{\raggedright\arraybackslash Region} & Other region & 17.9\% & 19.6\% & 21.2\% & 28.0\% & 22.0\%\\
\cmidrule{1-7}
 & Hausa & 20.8\% & 23.6\% & 26.6\% & 35.4\% & 27.2\%\\
 & Igbo & 22.3\% & 20.5\% & 19.9\% & 17.6\% & 19.8\%\\
 & Yoruba & 21.8\% & 20.3\% & 17.2\% & 12.8\% & 17.5\%\\
{\raggedright\arraybackslash Ethnicity} & Other & 35.1\% & 35.5\% & 36.3\% & 34.2\% & 35.4\%\\
\bottomrule
\end{tabular}
\end{table}
 
\begin{table}[H]
\centering
\caption{Child's Rights Act assent date by State in Nigeria}
\label{cra:list}

\begin{threeparttable}
\footnotesize
\setlength{\tabcolsep}{8pt}
\renewcommand{\arraystretch}{0.9}

\begin{tabular}{@{}lclclclc@{}}
\toprule
State & Year & State & Year & State & Year & State & Year \\
\midrule
FCT         & 2003 & Anambra   & 2004 & Delta    & 2008 & Imo      & 2004 \\
Abia        & 2006 & Bauchi    & 2023 & Ebonyi   & 2010 & Jigawa   & 2021 \\
Adamawa     & 2022 & Bayelsa   & 2016 & Edo      & 2007 & Kaduna   & 2018 \\
Akwa-Ibom   & 2008 & Benue     & 2008 & Ekiti    & 2006 & Kano     & 2023 \\
Katsina     & 2021 & Kwara     & 2005 & Ondo     & 2007 & Rivers   & 2009 \\
Kebbi       & 2022 & Lagos     & 2007 & Osun     & 2007 & Sokoto   & 2021 \\
Kogi        & 2007 & Nasarawa  & 2005 & Oyo      & 2006 & Taraba   & 2005 \\
Cross River & 2008 & Borno     & 2022 & Enugu    & 2016 & Gombe    & Not assented \\
Niger       & 2010 & Ogun      & 2006 & Plateau  & 2005 & Yobe     & 2022 \\
Zamfara     & 2022 &           &      &          &      &          &      \\
\bottomrule
\end{tabular}

\vspace{0.15cm}

\begin{tablenotes}
\scriptsize
\item \textit{Source:} The Rule of Law and Empowerment Initiative (Partners West Africa Nigeria). Child's Rights Act Tracker
\href{https://www.partnersnigeria.org/child-rights-act-tracker/}{https://www.partnersnigeria.org/child-rights-act-tracker/}.
\end{tablenotes}

\end{threeparttable}
\end{table}

\subsection{Descriptive analysis} \label{S_Appendix_descriptives}

 Table \ref{tab:desc_stats} reports aggregate adoption rates for FGC, CMA, PGY, and their joint prevalence across major socio-demographic sub-groups in Nigeria and Sierra Leone. Observe that Sierra Leone exhibits a uniformly high prevalence of FGC in all groups, often above 90 percent; this is consistent with the fact that FGC being almost universal in most of the country. By contrast, Nigeria displays much stronger heterogeneity between groups: CMA and PGY are substantially more prevalent among rural, Muslim and low-education populations, while FGC is relatively more prevalent in the South and among Yoruba and Igbo groups.  This generates an important asymmetry in Nigeria, where the groups with the highest prevalence of CMA are not the same as those with the highest prevalence of FGC, suggesting a more limited overlap between the two norms than in Sierra Leone. 

In both countries, the urban and more educated groups exhibit a markedly lower prevalence of each of the norms we consider. Moreover, PGY$\times$CMA adoption is concentrated in the same groups where child marriage is the most common, particularly in Northern Nigeria, while in Sierra Leone it remains positive but substantially smaller than adoption of FGC$\times$CMA throughout.

\begin{table}[ht!]
  \centering
  \caption{Descriptive statistics by country and sub-group}
  \label{tab:desc_stats}
  \scalebox{0.9}{
   \begin{tabular}{llcccccc}
    \toprule
    &  &   $N$ & FGC & CMA & PGY & FGC$\times$CMA & PGY$\times$CMA \\
    \midrule
    \multicolumn{8}{l}{\textbf{Panel A: Nigeria}} \\
    \midrule
    Residence & Rural & 79,901 & 32.7\% & 52.9\% & 17.1\% & 15.2\% & 11.3\% \\
     & Urban & 54,182 & 37.9\% & 25.7\% & 8.7\% & 8.9\% & 4.2\% \\
    \addlinespace[4pt]
    Religion & Muslim & 65,068 & 29.7\% & 62.3\% & 21.2\% & 16.5\% & 14.7\% \\
     & Non-Muslim & 69,015 & 39.4\% & 22.6\% & 6.6\% & 8.9\% & 2.6\% \\
    \addlinespace[4pt]
    Education & High education & 63,042 & 35.9\% & 15.8\% & 5.3\% & 5.6\% & 1.7\% \\
     & Low education & 71,041 & 34.2\% & 65.1\% & 21.1\% & 19.6\% & 14.4\% \\
    \addlinespace[4pt]
    Region & North East & 23,422 & 6.9\% & 61.2\% & 19.4\% & 4.8\% & 14.0\% \\
     & North West & 31,882 & 29.6\% & 71.9\% & 21.2\% & 23.0\% & 16.5\% \\
     & North Central & 24,569 & 26.1\% & 35.8\% & 13.5\% & 8.5\% & 6.5\% \\
     & South West & 19,018 & 56.4\% & 16.7\% & 10.3\% & 10.6\% & 2.9\% \\
     & South South & 18,733 & 32.6\% & 23.0\% & 5.1\% & 9.4\% & 2.1\% \\
     & South East & 16,459 & 52.6\% & 16.2\% & 4.5\% & 11.3\% & 1.4\% \\
    \addlinespace[4pt]
    Ethnicity & Hausa-Fulani & 41,468 & 26.5\% & 72.5\% & 22.3\% & 20.2\% & 17.1\% \\
     & Igbo & 20,001 & 50.1\% & 15.4\% & 4.0\% & 10.4\% & 1.2\% \\
     & Yoruba & 18,124 & 61.4\% & 15.6\% & 10.5\% & 11.5\% & 2.8\% \\
     & Others & 54,349 & 20.7\% & 37.0\% & 11.6\% & 7.4\% & 6.4\% \\
    \addlinespace[4pt]
    \multicolumn{8}{l}{\textbf{Panel B: Sierra Leone}} \\
    \midrule
    Residence & Rural & 20,641 & 96.6\% & 49.1\% & 17.9\% & 48.1\% & 9.5\% \\
     & Urban & 13,796 & 87.9\% & 33.0\% & 8.4\% & 32.1\% & 4.3\% \\
    \addlinespace[4pt]
    Religion & Muslim & 26,382 & 96.1\% & 45.2\% & 16.0\% & 44.5\% & 8.5\% \\
     & Non-Muslim & 8,055 & 83.4\% & 34.1\% & 8.0\% & 32.5\% & 3.8\% \\
    \addlinespace[4pt]
    Education & High education & 9,936 & 82.5\% & 21.1\% & 5.1\% & 20.0\% & 2.2\% \\
     & Low education & 24,501 & 97.3\% & 51.4\% & 17.8\% & 50.4\% & 9.6\% \\
    \addlinespace[4pt]
    Region & Eastern & 7,046 & 94.7\% & 47.1\% & 13.5\% & 45.7\% & 7.3\% \\
     & North Western & 2,166 & 97.2\% & 39.2\% & 17.4\% & 38.9\% & 6.4\% \\
     & Northern & 10,701 & 97.7\% & 48.7\% & 19.5\% & 48.3\% & 10.9\% \\
     & Southern & 8,482 & 92.6\% & 42.1\% & 13.1\% & 41.2\% & 6.9\% \\
     & Western & 6,042 & 82.2\% & 28.6\% & 5.4\% & 27.0\% & 2.6\% \\
    \addlinespace[4pt]
    Ethnicity & Mende & 11,755 & 92.0\% & 42.3\% & 12.6\% & 41.2\% & 6.7\% \\
     & Other & 11,942 & 91.5\% & 42.0\% & 14.2\% & 40.9\% & 7.7\% \\
     & Temne & 10,740 & 96.1\% & 43.6\% & 15.6\% & 43.2\% & 8.0\% \\
    \bottomrule
  \end{tabular}
  }

    \vspace{0.4cm}

 \begin{tablenotes}
    \footnotesize \item \textit{Notes:} FGC, CMA, and PGY indicate adoption of each norm (with or without the other). FGC$\times$CMA and PGY$\times$CMA denote joint adoption.
\end{tablenotes} 

  \label{tab:summarystats:all}
\end{table}

\subsection{Additional results and tables}

\begin{figure}[H]
    \centering
    \caption{Age distribution of CMA and FGC in Sierra Leone}
    \includegraphics[width=0.75\linewidth]{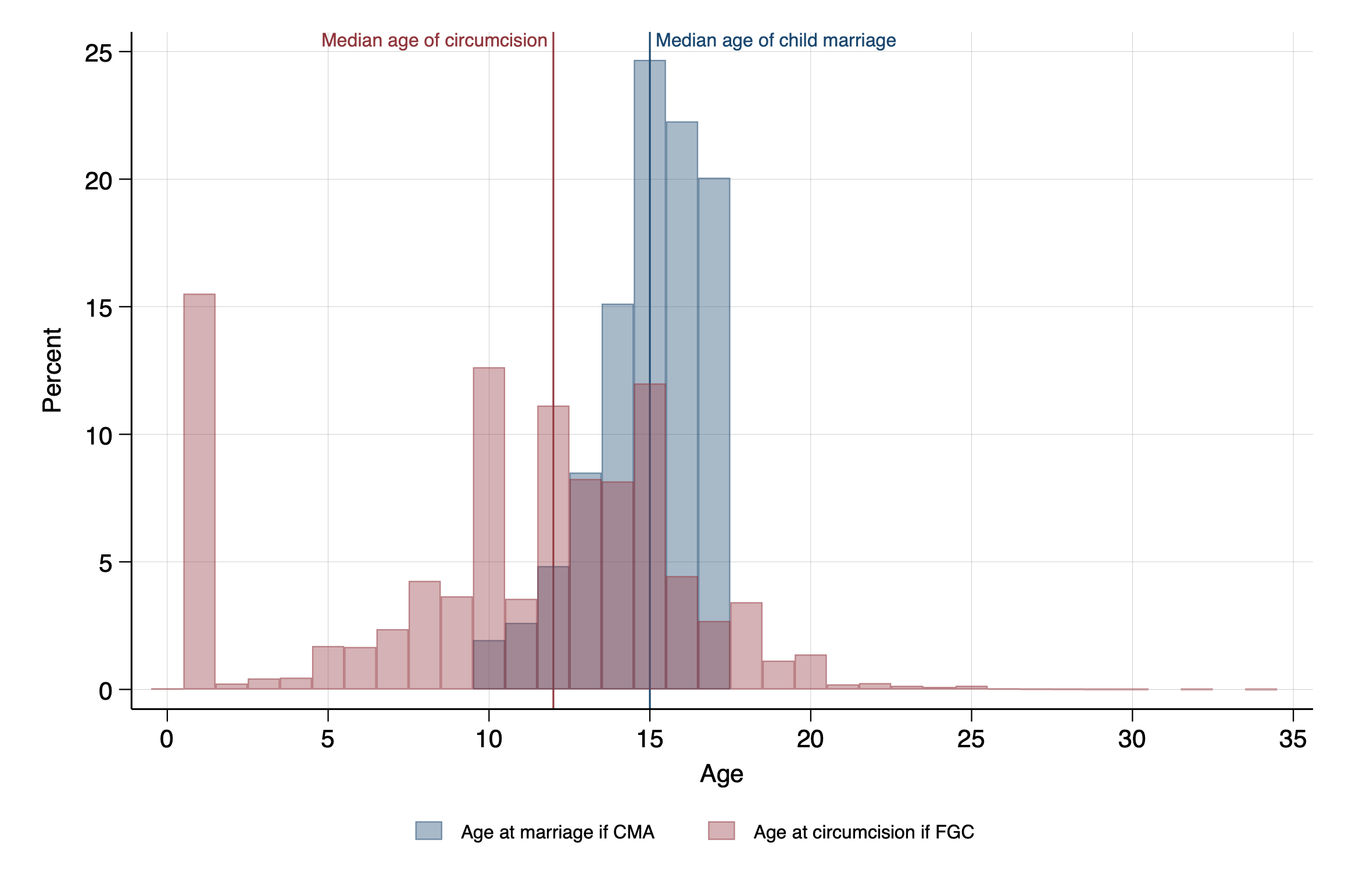}
    \begin{flushleft}
        \footnotesize \textit{Notes:} The figure reports for Sierra Leone the distribution of age at first marriage for brides who married as a child (blue) and age at circumcision for respondents who answer being circumcised (red).
    \end{flushleft}\label{SL:agedist}
\end{figure}

\begin{figure}[H]
    \centering
     \caption{Age distribution of CMA and FGC in Nigeria}
    \includegraphics[width=0.75\linewidth]{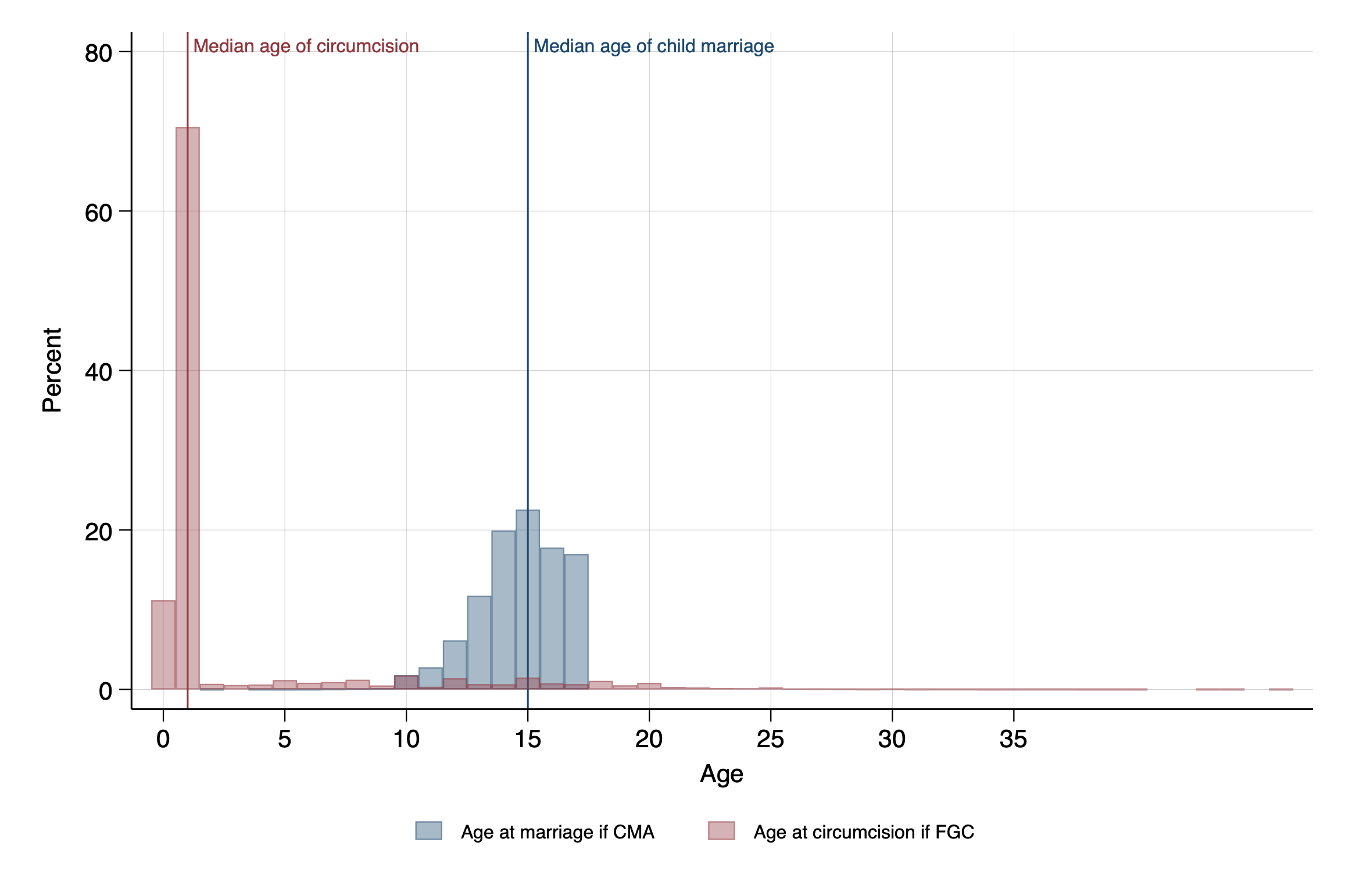}
    \begin{flushleft}
        \footnotesize \textit{Notes:} The figure reports for Sierra Leone the distribution of age at first marriage for brides who married as a child (blue) and age at circumcision for respondents who answer being circumcised (red).
    \end{flushleft}\label{NG:age}
\end{figure}

\begin{figure}[H]
    \centering
     \caption{Prevalence of drought in Sierra Leone by year}
    \includegraphics[width=0.6\linewidth]{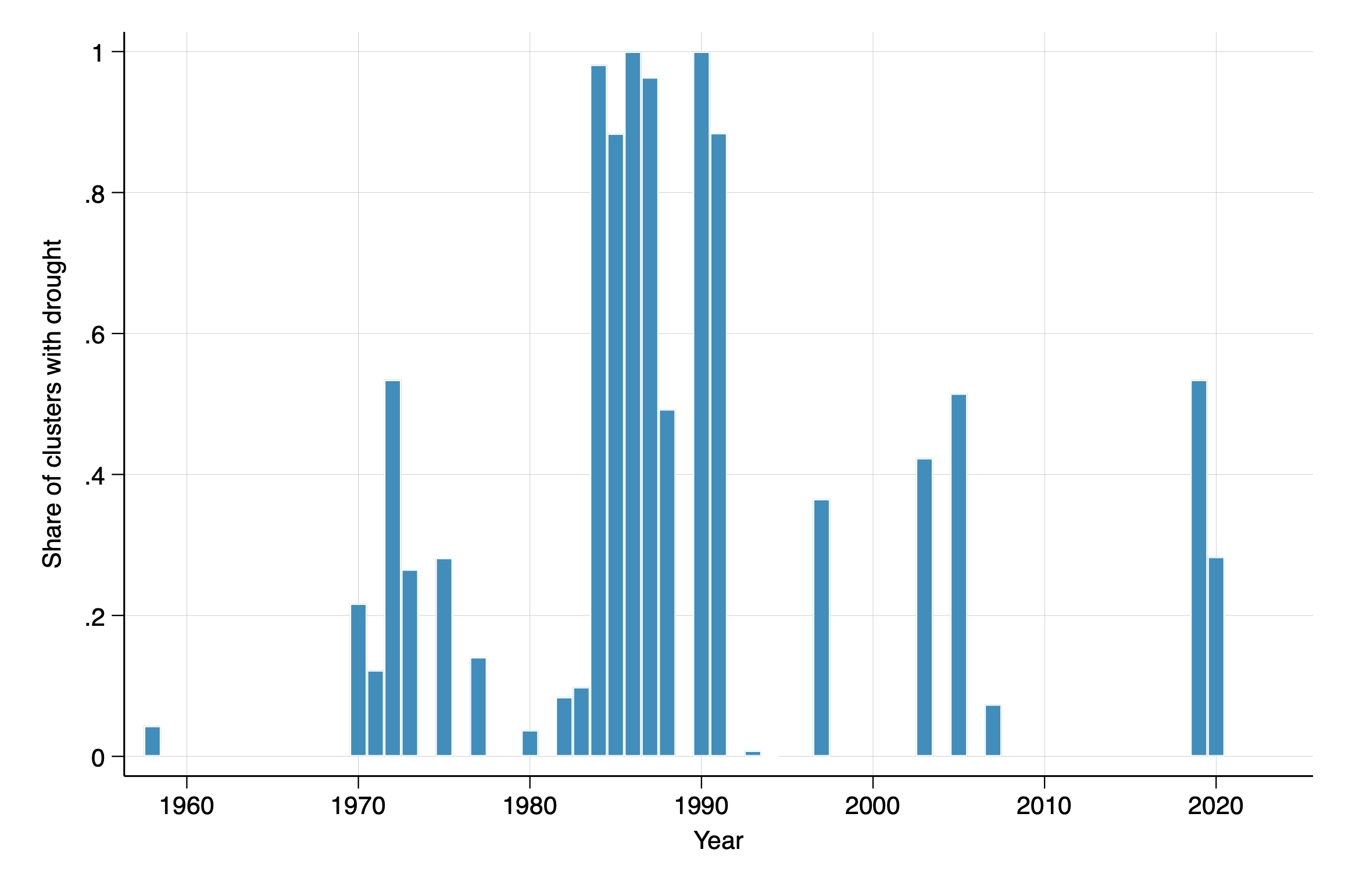}
    \begin{flushleft}
        \footnotesize \textit{Notes:} The figure reports the prevalence of drought in Sub-Saharan Africa and India, presented as the percentage of grid cells with drought in each calendar year in Sierra Leone. For all the analyses in this paper, for any grid cell, we define a drought as having rainfall lower than the 15th percentile of the long-run rainfall distribution.
    \end{flushleft}\label{SL:drought}
\end{figure}

\vspace{-10pt}
 
 \section{Robustness exercises}

\subsection{FGC Robustness} \label{sec:fgc_robustness}

\begin{table}[H]
\centering
\caption{Sierra Leone: dropping women cut before age 5}
\renewcommand{\arraystretch}{0.7}
\small
\begin{tabular}{lcccccccc}
\hline
& \multicolumn{2}{c}{Baseline} & \multicolumn{2}{c}{Robustness} & \multicolumn{2}{c}{Baseline} & \multicolumn{2}{c}{Robustness} \\
Variable & FGC & CMA & FGC & CMA & FGC & CMA & FGC & CMA \\
& (1) & (2) & (3) & (4) & (5) & (6) & (7) & (8) \\
\hline
CMA Ban &  & $-0.07^{***}$ &  & $-0.09^{***}$ &  & $-0.07^{***}$ &  & $-0.09^{***}$ \\
 &  & $(0.02)$ &  & $(0.02)$ &  & $(0.02)$ &  & $(0.02)$ \\
$\Gamma$ & \multicolumn{2}{c}{$1.05^{***}$} & \multicolumn{2}{c}{$1.12^{***}$} & \multicolumn{2}{c}{$1.08^{***}$} & \multicolumn{2}{c}{$1.12^{***}$} \\
 & \multicolumn{2}{c}{$(0.07)$} & \multicolumn{2}{c}{$(0.06)$} & \multicolumn{2}{c}{$(0.08)$} & \multicolumn{2}{c}{$(0.08)$} \\
$s$ & $2.46^{***}$ & $0.36^{***}$ & $1.97^{***}$ & $0.30^{**}$ & $2.47^{***}$ & $0.44^{*}$ & $2.01^{***}$ & $0.38$ \\
 & $(0.24)$ & $(0.13)$ & $(0.24)$ & $(0.13)$ & $(0.25)$ & $(0.24)$ & $(0.25)$ & $(0.24)$ \\
$s_{AB}$ & \multicolumn{2}{c}{} & \multicolumn{2}{c}{} & \multicolumn{2}{c}{$2.83^{***}$} & \multicolumn{2}{c}{$2.32^{***}$} \\
 & \multicolumn{2}{c}{} & \multicolumn{2}{c}{} & \multicolumn{2}{c}{$(0.28)$} & \multicolumn{2}{c}{$(0.29)$} \\
$\rho$ & \multicolumn{2}{c}{$-0.70^{***}$} & \multicolumn{2}{c}{$-0.75^{***}$} & \multicolumn{2}{c}{$-0.70^{***}$} & \multicolumn{2}{c}{$-0.73^{***}$} \\
 & \multicolumn{2}{c}{$(0.04)$} & \multicolumn{2}{c}{$(0.03)$} & \multicolumn{2}{c}{$(0.04)$} & \multicolumn{2}{c}{$(0.04)$} \\ \hline
 N  & \multicolumn{2}{c}{33161}  &  \multicolumn{2}{c}{ 26474} & \multicolumn{2}{c}{33161} & \multicolumn{2}{c}{ 26474 }  \\ 
\hline
\end{tabular}

\vspace{0.4cm}

\begin{minipage}{1\textwidth}
\footnotesize
\emph{Notes:} Columns 1 and 3 report the baseline Sierra Leone FGC--CMA estimates. Columns 2 and 4 report the corresponding robustness exercise that excludes women reported as cut before age 5. Columns 1--2 impose the specification without a separate joint-adoption spillover term $s_{AB}$, while Columns 3--4 allow for a distinct joint spillover parameter.  All specifications additionally control for indicators for primary education, secondary education, Muslim, urban residence, ethnicity (Mende, Temne, with Other as the reference level), region (South, East, West, with North as the reference level), and wealth quintile (2--5).  $^{***}p<0.01$, $^{**}p<0.05$, $^{*}p<0.1$.
\end{minipage}
 \label{table:FGC_SL_first}
\end{table}


\begin{table}[H]
\centering
\caption{Sierra Leone: keeping only women cut before age 5}
\renewcommand{\arraystretch}{0.7}
\small
\begin{tabular}{lcccccccc}
\hline
& \multicolumn{2}{c}{Baseline} & \multicolumn{2}{c}{Robustness} & \multicolumn{2}{c}{Baseline} & \multicolumn{2}{c}{Robustness} \\
Variable & FGC & CMA & FGC & CMA & FGC & CMA & FGC & CMA \\
& (1) & (2) & (3) & (4) & (5) & (6) & (7) & (8) \\
\hline
CMA Ban &  & $-0.07^{***}$ &  & $-0.09^{*}$ &  & $-0.07^{***}$ &  & $-0.10^{**}$ \\
 &  & $(0.02)$ &  & $(0.05)$ &  & $(0.02)$ &  & $(0.05)$ \\
$\Gamma$ & \multicolumn{2}{c}{$1.05^{***}$} & \multicolumn{2}{c}{$0.39$} & \multicolumn{2}{c}{$1.08^{***}$} & \multicolumn{2}{c}{$0.48^{**}$} \\
 & \multicolumn{2}{c}{$(0.07)$} & \multicolumn{2}{c}{$(0.32)$} & \multicolumn{2}{c}{$(0.08)$} & \multicolumn{2}{c}{$(0.21)$} \\
$s$ & $2.46^{***}$ & $0.36^{***}$ & $1.57^{***}$ & $-0.00$ & $2.47^{***}$ & $0.44^{*}$ & $1.61^{***}$ & $0.18$ \\
 & $(0.24)$ & $(0.13)$ & $(0.17)$ & $(0.18)$ & $(0.25)$ & $(0.24)$ & $(0.15)$ & $(0.20)$ \\
$s_{AB}$ & \multicolumn{2}{c}{} & \multicolumn{2}{c}{} & \multicolumn{2}{c}{$2.83^{***}$} & \multicolumn{2}{c}{$1.53^{***}$} \\
 & \multicolumn{2}{c}{} & \multicolumn{2}{c}{} & \multicolumn{2}{c}{$(0.28)$} & \multicolumn{2}{c}{$(0.24)$} \\
$\rho$ & \multicolumn{2}{c}{$-0.70^{***}$} & \multicolumn{2}{c}{$-0.17$} & \multicolumn{2}{c}{$-0.70^{***}$} & \multicolumn{2}{c}{$-0.18$} \\
 & \multicolumn{2}{c}{$(0.04)$} & \multicolumn{2}{c}{$(0.31)$} & \multicolumn{2}{c}{$(0.04)$} & \multicolumn{2}{c}{$(0.20)$} \\
 \hline
 N  & \multicolumn{2}{c}{33161}  &  \multicolumn{2}{c}{7240} & \multicolumn{2}{c}{33161} & \multicolumn{2}{c}{ 7240}  \\ 
\hline
\end{tabular}

\vspace{0.4cm}

\begin{minipage}{1\textwidth}
\footnotesize
\emph{Notes:} Columns 1 and 3 report the baseline Sierra Leone FGC--CMA estimates. Columns 2 and 4 report the corresponding robustness exercise that performs the estimation only on women cut before age 5. Columns 1--2 impose the specification without a separate joint-adoption spillover term $s_{AB}$, while Columns 3--4 allow for a distinct joint spillover parameter. All specifications additionally control for the same indicators as in Table \ref{table:FGC_SL_first}. Standard errors are reported in parentheses. $^{***}p<0.01$, $^{**}p<0.05$, $^{*}p<0.1$.

\end{minipage}
\end{table}

\begin{table}[H]
\centering
\small
\caption{Sierra Leone: coding ``don't know FGC" as 0 instead of NA} 
\renewcommand{\arraystretch}{0.7}
\begin{tabular}{lcccccccc}
\hline
& \multicolumn{2}{c}{Baseline} & \multicolumn{2}{c}{Robustness} & \multicolumn{2}{c}{Baseline} & \multicolumn{2}{c}{Robustness} \\
Variable & FGC & CMA & FGC & CMA & FGC & CMA & FGC & CMA \\
& (1) & (2) & (3) & (4) & (5) & (6) & (7) & (8) \\

\hline
CMA Ban &  & $-0.07^{***}$ &  & $-0.07^{***}$ &  & $-0.07^{***}$ &  & $-0.07^{***}$ \\
 &  & $(0.02)$ &  & $(0.02)$ &  & $(0.02)$ &  & $(0.02)$ \\
$\Gamma$ & \multicolumn{2}{c}{$1.05^{***}$} & \multicolumn{2}{c}{$1.04^{***}$} & \multicolumn{2}{c}{$1.08^{***}$} & \multicolumn{2}{c}{$1.05^{***}$} \\
 & \multicolumn{2}{c}{$(0.07)$} & \multicolumn{2}{c}{$(0.06)$} & \multicolumn{2}{c}{$(0.08)$} & \multicolumn{2}{c}{$(0.08)$} \\
$s$ & $2.46^{***}$ & $0.36^{***}$ & $2.37^{***}$ & $0.36^{***}$ & $2.47^{***}$ & $0.44^{*}$ & $2.39^{***}$ & $0.40^{*}$ \\
 & $(0.24)$ & $(0.13)$ & $(0.22)$ & $(0.12)$ & $(0.25)$ & $(0.24)$ & $(0.22)$ & $(0.22)$ \\
$s_{AB}$ & \multicolumn{2}{c}{} & \multicolumn{2}{c}{} & \multicolumn{2}{c}{$2.83^{***}$} & \multicolumn{2}{c}{$2.75^{***}$} \\
 & \multicolumn{2}{c}{} & \multicolumn{2}{c}{} & \multicolumn{2}{c}{$(0.28)$} & \multicolumn{2}{c}{$(0.26)$} \\
$\rho$ & \multicolumn{2}{c}{$-0.70^{***}$} & \multicolumn{2}{c}{$-0.70^{***}$} & \multicolumn{2}{c}{$-0.70^{***}$} & \multicolumn{2}{c}{$-0.69^{***}$} \\
 & \multicolumn{2}{c}{$(0.04)$} & \multicolumn{2}{c}{$(0.04)$} & \multicolumn{2}{c}{$(0.04)$} & \multicolumn{2}{c}{$(0.04)$} \\
  \hline
  N  & \multicolumn{2}{c}{33161}  &  \multicolumn{2}{c}{33416} & \multicolumn{2}{c}{33161} & \multicolumn{2}{c}{33416}  \\ 
\hline
\end{tabular}

\vspace{0.4cm}

\begin{minipage}{1\textwidth}
\footnotesize
\emph{Notes:} Columns 1 and 3 report the baseline Sierra Leone FGC--CMA estimates. Columns 2 and 4 report the corresponding robustness exercise that codes women who responded that they do not know what FGC is as not adhering to FGC, rather than NA. Columns 1--2 impose the specification without a separate joint-adoption spillover term $s_{AB}$, while Columns 3--4 allow for a distinct joint spillover parameter. All specifications additionally control for the same indicators as in Table \ref{table:FGC_SL_first}. Standard errors are reported in parentheses. $^{***}p<0.01$, $^{**}p<0.05$, $^{*}p<0.1$.

\end{minipage}
\end{table}

\begin{table}[H]
\centering
\renewcommand{\arraystretch}{0.7}
\small
\caption{Sierra Leone: Drought as additional exclusion restriction}
\begin{tabular}{lcccccccc}
\hline
& \multicolumn{2}{c}{Baseline} & \multicolumn{2}{c}{Robustness} & \multicolumn{2}{c}{Baseline} & \multicolumn{2}{c}{Robustness} \\
Variable & FGC & CMA & FGC & CMA & FGC & CMA & FGC & CMA \\
& (1) & (2) & (3) & (4) & (5) & (6) & (7) & (8) \\

\hline
CMA Ban &  & $-0.07^{***}$ &  & $-0.07^{***}$ &  & $-0.07^{***}$ &  & $-0.07^{***}$ \\
 &  & $(0.02)$ &  & $(0.02)$ &  & $(0.02)$ &  & $(0.02)$ \\
Drought &  &  &  & $0.01$ &  &  &  & $0.01$ \\
 &  &  &  & $(0.01)$ &  &  &  & $(0.01)$ \\
$\Gamma$ & \multicolumn{2}{c}{$1.05^{***}$} & \multicolumn{2}{c}{$1.05^{***}$} & \multicolumn{2}{c}{$1.08^{***}$} & \multicolumn{2}{c}{$1.08^{***}$} \\
 & \multicolumn{2}{c}{$(0.07)$} & \multicolumn{2}{c}{$(0.07)$} & \multicolumn{2}{c}{$(0.08)$} & \multicolumn{2}{c}{$(0.08)$} \\
$s$ & $2.46^{***}$ & $0.36^{***}$ & $2.45^{***}$ & $0.36^{***}$ & $2.47^{***}$ & $0.44^{*}$ & $2.47^{***}$ & $0.45^{*}$ \\
 & $(0.24)$ & $(0.13)$ & $(0.24)$ & $(0.13)$ & $(0.25)$ & $(0.24)$ & $(0.25)$ & $(0.24)$ \\
$s_{AB}$ & \multicolumn{2}{c}{} & \multicolumn{2}{c}{} & \multicolumn{2}{c}{$2.83^{***}$} & \multicolumn{2}{c}{$2.83^{***}$} \\
 & \multicolumn{2}{c}{} & \multicolumn{2}{c}{} & \multicolumn{2}{c}{$(0.28)$} & \multicolumn{2}{c}{$(0.29)$} \\
$\rho$ & \multicolumn{2}{c}{$-0.70^{***}$} & \multicolumn{2}{c}{$-0.70^{***}$} & \multicolumn{2}{c}{$-0.70^{***}$} & \multicolumn{2}{c}{$-0.70^{***}$} \\
 & \multicolumn{2}{c}{$(0.04)$} & \multicolumn{2}{c}{$(0.04)$} & \multicolumn{2}{c}{$(0.04)$} & \multicolumn{2}{c}{$(0.04)$} \\

\hline
 N  & \multicolumn{2}{c}{33161}  &  \multicolumn{2}{c}{33161} & \multicolumn{2}{c}{33161} & \multicolumn{2}{c}{33161}  \\ 
\hline
\end{tabular}

\vspace{0.4cm}

\begin{minipage}{1\textwidth}
\footnotesize
\emph{Notes:} Columns 1 and 3 report the baseline Sierra Leone FGC--CMA estimates. Columns 2 and 4 report the corresponding robustness exercise that includes drought as an exclusion restriction. Columns 1--2 impose the specification without a separate joint-adoption spillover term $s_{AB}$, while Columns 3--4 allow for a distinct joint spillover parameter. All specifications additionally control for the same indicators as in Table \ref{table:FGC_SL_first}. Standard errors are reported in parentheses. $^{***}p<0.01$, $^{**}p<0.05$, $^{*}p<0.1$.

\end{minipage}

\end{table}

\vspace{-10pt}

 \begin{table}[H]
\centering
\renewcommand{\arraystretch}{0.7}
\small
\caption{Nigeria: dropping women cut before age 5}
\begin{tabular}{lcccccccc}
\hline
& \multicolumn{2}{c}{Baseline} & \multicolumn{2}{c}{Robustness} & \multicolumn{2}{c}{Baseline} & \multicolumn{2}{c}{Robustness} \\
Variable & FGC & CMA & FGC & CMA & FGC & CMA & FGC & CMA \\
& (1) & (2) & (3) & (4) & (5) & (6) & (7) & (8) \\

\hline
CMA Ban &  & $-0.09^{***}$ &  & $-0.12^{***}$ &  & $-0.09^{***}$ &  & $-0.12^{***}$ \\
 &  & $(0.01)$ &  & $(0.02)$ &  & $(0.01)$ &  & $(0.02)$ \\
$\Gamma$ & \multicolumn{2}{c}{$0.14^{***}$} & \multicolumn{2}{c}{$0.00$} & \multicolumn{2}{c}{$0.16^{***}$} & \multicolumn{2}{c}{$-0.01$} \\
 & \multicolumn{2}{c}{$(0.05)$} & \multicolumn{2}{c}{$(0.04)$} & \multicolumn{2}{c}{$(0.06)$} & \multicolumn{2}{c}{$(0.04)$} \\
$s$ & $2.04^{***}$ & $0.94^{***}$ & $2.38^{***}$ & $0.81^{***}$ & $2.05^{***}$ & $0.95^{***}$ & $2.37^{***}$ & $0.81^{***}$ \\
 & $(0.05)$ & $(0.10)$ & $(0.15)$ & $(0.11)$ & $(0.05)$ & $(0.10)$ & $(0.14)$ & $(0.11)$ \\
$s_{AB}$ & \multicolumn{2}{c}{} & \multicolumn{2}{c}{} & \multicolumn{2}{c}{$2.93^{***}$} & \multicolumn{2}{c}{$3.26^{***}$} \\
 & \multicolumn{2}{c}{} & \multicolumn{2}{c}{} & \multicolumn{2}{c}{$(0.12)$} & \multicolumn{2}{c}{$(0.27)$} \\
$\rho$ & \multicolumn{2}{c}{$-0.08$} & \multicolumn{2}{c}{$0.12^{***}$} & \multicolumn{2}{c}{$-0.10^{*}$} & \multicolumn{2}{c}{$0.13^{***}$} \\
 & \multicolumn{2}{c}{$(0.06)$} & \multicolumn{2}{c}{$(0.04)$} & \multicolumn{2}{c}{$(0.06)$} & \multicolumn{2}{c}{$(0.04)$} \\\hline
 N &  \multicolumn{2}{c}{71067}   &  \multicolumn{2}{c}{49836}    & \multicolumn{2}{c}{71067}   &  \multicolumn{2}{c}{49836}  \\ 
  \hline 
\end{tabular}

\vspace{0.4cm}

\begin{minipage}{1\textwidth}
\footnotesize
\emph{Notes:} Columns 1 and 3 report the baseline Nigeria FGC--CMA estimates. Columns 2 and 4 report the corresponding robustness exercise that excludes women reported as cut before age 5. Columns 1--2 impose the specification without a separate joint-adoption spillover term $s_{AB}$, while Columns 3--4 allow for a distinct joint spillover parameter.  All specifications additionally control for indicators for primary education, secondary education, Muslim, urban residence, ethnicity (Hausa-Fulani, Igbo, Yoruba, with Other as the reference), region (North Central, North East, South East, South South, and South West, with North West as the reference), and wealth quintile (2--5). Standard errors are reported in parentheses. $^{***}p<0.01$, $^{**}p<0.05$, $^{*}p<0.1$.
\end{minipage}
\label{table:NG_FGC_first}
\end{table}

\begin{table}[H]
\centering
\caption{Nigeria: keeping only women cut before age 5}
\renewcommand{\arraystretch}{0.7}
\small
\begin{tabular}{lcccccccc}
\hline
& \multicolumn{2}{c}{Baseline} & \multicolumn{2}{c}{Robustness} & \multicolumn{2}{c}{Baseline} & \multicolumn{2}{c}{Robustness} \\
Variable & FGC & CMA & FGC & CMA & FGC & CMA & FGC & CMA \\
& (1) & (2) & (3) & (4) & (5) & (6) & (7) & (8) \\
\hline
CMA Ban &  & $-0.09^{***}$ &  & $-0.10^{***}$ &  & $-0.09^{***}$ &  & $-0.10^{***}$ \\
 &  & $(0.01)$ &  & $(0.01)$ &  & $(0.01)$ &  & $(0.01)$ \\
$\Gamma$ & \multicolumn{2}{c}{$0.14^{***}$} & \multicolumn{2}{c}{$0.25^{***}$} & \multicolumn{2}{c}{$0.16^{***}$} & \multicolumn{2}{c}{$0.18^{**}$} \\
 & \multicolumn{2}{c}{$(0.05)$} & \multicolumn{2}{c}{$(0.07)$} & \multicolumn{2}{c}{$(0.06)$} & \multicolumn{2}{c}{$(0.09)$} \\
$s$ & $2.04^{***}$ & $0.94^{***}$ & $2.03^{***}$ & $0.86^{***}$ & $2.05^{***}$ & $0.95^{***}$ & $2.03^{***}$ & $0.85^{***}$ \\
 & $(0.05)$ & $(0.10)$ & $(0.06)$ & $(0.10)$ & $(0.05)$ & $(0.10)$ & $(0.06)$ & $(0.10)$ \\
$s_{AB}$ & \multicolumn{2}{c}{} & \multicolumn{2}{c}{} & \multicolumn{2}{c}{$2.93^{***}$} & \multicolumn{2}{c}{$2.99^{***}$} \\
 & \multicolumn{2}{c}{} & \multicolumn{2}{c}{} & \multicolumn{2}{c}{$(0.12)$} & \multicolumn{2}{c}{$(0.14)$} \\
$\rho$ & \multicolumn{2}{c}{$-0.08$} & \multicolumn{2}{c}{$-0.23^{***}$} & \multicolumn{2}{c}{$-0.10^{*}$} & \multicolumn{2}{c}{$-0.18^{**}$} \\
 & \multicolumn{2}{c}{$(0.06)$} & \multicolumn{2}{c}{$(0.07)$} & \multicolumn{2}{c}{$(0.06)$} & \multicolumn{2}{c}{$(0.08)$} \\
 \hline
 N  & \multicolumn{2}{c}{71067}  &  \multicolumn{2}{c}{66198} & \multicolumn{2}{c}{71067} & \multicolumn{2}{c}{ 66198}  \\ 
\hline
\end{tabular}

\vspace{0.4cm}

\begin{minipage}{1\textwidth}
\footnotesize
\emph{Notes:} Columns 1 and 3 report the baseline Nigeria FGC--CMA estimates. Columns 2 and 4 report the corresponding robustness exercise that performs the estimation only on women cut before age 5. Columns 1--2 impose the specification without a separate joint-adoption spillover term $s_{AB}$, while Columns 3--4 allow for a distinct joint spillover parameter.   All specifications additionally control for the same indicators as in Table \ref{table:NG_FGC_first}. Standard errors are reported in parentheses. $^{***}p<0.01$, $^{**}p<0.05$, $^{*}p<0.1$.

\end{minipage}
\end{table}

\begin{table}[H]
\centering
\renewcommand{\arraystretch}{0.7}
\small 
\caption{Nigeria: coding ``don't know FGC" as 0 instead of NA}
\begin{tabular}{lcccccccc}
\hline
& \multicolumn{2}{c}{Baseline} & \multicolumn{2}{c}{Robustness} & \multicolumn{2}{c}{Baseline} & \multicolumn{2}{c}{Robustness} \\
Variable & FGC & CMA & FGC & CMA & FGC & CMA & FGC & CMA \\
& (1) & (2) & (3) & (4) & (5) & (6) & (7) & (8) \\

\hline
CMA Ban &  & $-0.09^{***}$ &  & $-0.07^{***}$ &  & $-0.09^{***}$ &  & $-0.06^{***}$ \\
 &  & $(0.01)$ &  & $(0.01)$ &  & $(0.01)$ &  & $(0.01)$ \\
$\Gamma$ & \multicolumn{2}{c}{$0.14^{***}$} & \multicolumn{2}{c}{$0.11^{***}$} & \multicolumn{2}{c}{$0.16^{***}$} & \multicolumn{2}{c}{$0.18^{***}$} \\
 & \multicolumn{2}{c}{$(0.05)$} & \multicolumn{2}{c}{$(0.04)$} & \multicolumn{2}{c}{$(0.06)$} & \multicolumn{2}{c}{$(0.05)$} \\
$s$ & $2.04^{***}$ & $0.94^{***}$ & $2.41^{***}$ & $1.12^{***}$ & $2.05^{***}$ & $0.95^{***}$ & $2.43^{***}$ & $1.14^{***}$ \\
 & $(0.05)$ & $(0.10)$ & $(0.05)$ & $(0.08)$ & $(0.05)$ & $(0.10)$ & $(0.05)$ & $(0.08)$ \\
$s_{AB}$ & \multicolumn{2}{c}{} & \multicolumn{2}{c}{} & \multicolumn{2}{c}{$2.93^{***}$} & \multicolumn{2}{c}{$3.37^{***}$} \\
 & \multicolumn{2}{c}{} & \multicolumn{2}{c}{} & \multicolumn{2}{c}{$(0.12)$} & \multicolumn{2}{c}{$(0.13)$} \\
$\rho$ & \multicolumn{2}{c}{$-0.08$} & \multicolumn{2}{c}{$-0.05$} & \multicolumn{2}{c}{$-0.10^{*}$} & \multicolumn{2}{c}{$-0.10^{**}$} \\
 & \multicolumn{2}{c}{$(0.06)$} & \multicolumn{2}{c}{$(0.04)$} & \multicolumn{2}{c}{$(0.06)$} & \multicolumn{2}{c}{$(0.05)$} \\
\hline
N &  \multicolumn{2}{c}{71067}  &  \multicolumn{2}{c}{113651} & \multicolumn{2}{c}{71067} & \multicolumn{2}{c}{113651} \\ 
\hline
\end{tabular}

\vspace{0.4cm}

\begin{minipage}{1\textwidth}
\footnotesize
\emph{Notes:} Columns 1 and 3 report the baseline Nigeria FGC--CMA estimates. Columns 2 and 4 report the corresponding robustness exercise that codes women who responded that they do not know what FGC is as not adhering to FGC, rather than NA. Columns 1--2 impose the specification without a separate joint-adoption spillover term $s_{AB}$, while Columns 3--4 allow for a distinct joint spillover parameter.    All specifications additionally control for the same indicators as in Table \ref{table:NG_FGC_first}. Standard errors are reported in parentheses. $^{***}p<0.01$, $^{**}p<0.05$, $^{*}p<0.1$.

\end{minipage}
\end{table}

\begin{table}[H]
\centering
\renewcommand{\arraystretch}{0.7}
\caption{Nigeria: Sharia law as additional covariate}
\small
\begin{tabular}{lcccccccc}
\hline
& \multicolumn{2}{c}{Baseline} & \multicolumn{2}{c}{Robustness} & \multicolumn{2}{c}{Baseline} & \multicolumn{2}{c}{Robustness} \\
Variable & FGC & CMA & FGC & CMA & FGC & CMA & FGC & CMA \\
& (1) & (2) & (3) & (4) & (5) & (6) & (7) & (8) \\

\hline
CMA Ban &  & $-0.09^{***}$ &  & $-0.12^{***}$ &  & $-0.09^{***}$ &  & $-0.12^{***}$ \\
 &  & $(0.01)$ &  & $(0.01)$ &  & $(0.01)$ &  & $(0.01)$ \\
Sharia Law &  &  & $-0.02$ & $0.09^{***}$ &  &  & $-0.02$ & $0.09^{***}$ \\
 &  &  & $(0.01)$ & $(0.02)$ &  &  & $(0.01)$ & $(0.02)$ \\
$\Gamma$ & \multicolumn{2}{c}{$0.14^{***}$} & \multicolumn{2}{c}{$0.12^{**}$} & \multicolumn{2}{c}{$0.16^{***}$} & \multicolumn{2}{c}{$0.14^{**}$} \\
 & \multicolumn{2}{c}{$(0.05)$} & \multicolumn{2}{c}{$(0.05)$} & \multicolumn{2}{c}{$(0.06)$} & \multicolumn{2}{c}{$(0.06)$} \\
$s$ & $2.04^{***}$ & $0.94^{***}$ & $2.04^{***}$ & $0.93^{***}$ & $2.05^{***}$ & $0.95^{***}$ & $2.05^{***}$ & $0.94^{***}$ \\
 & $(0.05)$ & $(0.10)$ & $(0.05)$ & $(0.10)$ & $(0.05)$ & $(0.10)$ & $(0.05)$ & $(0.10)$ \\
$s_{AB}$ & \multicolumn{2}{c}{} & \multicolumn{2}{c}{} & \multicolumn{2}{c}{$2.93^{***}$} & \multicolumn{2}{c}{$2.92^{***}$} \\
 & \multicolumn{2}{c}{} & \multicolumn{2}{c}{} & \multicolumn{2}{c}{$(0.12)$} & \multicolumn{2}{c}{$(0.12)$} \\
$\rho$ & \multicolumn{2}{c}{$-0.08$} & \multicolumn{2}{c}{$-0.06$} & \multicolumn{2}{c}{$-0.10^{*}$} & \multicolumn{2}{c}{$-0.08$} \\
 & \multicolumn{2}{c}{$(0.06)$} & \multicolumn{2}{c}{$(0.06)$} & \multicolumn{2}{c}{$(0.06)$} & \multicolumn{2}{c}{$(0.06)$} \\
 \hline
  N  & \multicolumn{2}{c}{71067}  &  \multicolumn{2}{c}{71067} & \multicolumn{2}{c}{71067} & \multicolumn{2}{c}{71067}  \\ 
\hline
\end{tabular}

\vspace{0.4cm}

\begin{minipage}{1\textwidth}
\footnotesize
\emph{Notes:} Columns 1 and 3 report the baseline Nigeria FGC--CMA estimates. Columns 2 and 4 report the corresponding robustness exercise that includes exposure to Sharia law legalization as additional covariates. Columns 1--2 impose the specification without a separate joint-adoption spillover term $s_{AB}$, while Columns 3--4 allow for a distinct joint spillover parameter.   All specifications additionally control for the same indicators as in Table \ref{table:NG_FGC_first}. Standard errors are reported in parentheses. $^{***}p<0.01$, $^{**}p<0.05$, $^{*}p<0.1$.

\end{minipage}

\end{table}

\begin{table}[H]
\centering
\renewcommand{\arraystretch}{0.7}
\caption{Nigeria: State-level CMA Ban}
\small
\begin{tabular}{lcccccccc}
\hline
& \multicolumn{2}{c}{Baseline} & \multicolumn{2}{c}{Robustness} & \multicolumn{2}{c}{Baseline} & \multicolumn{2}{c}{Robustness} \\
Variable & FGC & CMA & FGC & CMA & FGC & CMA & FGC & CMA \\
& (1) & (2) & (3) & (4) & (5) & (6) & (7) & (8) \\

\hline
CMA Ban &  & $-0.09^{***}$ &  & $-0.13^{***}$ &  & $-0.09^{***}$ &  & $-0.13^{***}$ \\
 &  & $(0.01)$ &  & $(0.02)$ &  & $(0.01)$ &  & $(0.02)$ \\
$\Gamma$ & \multicolumn{2}{c}{$0.14^{***}$} & \multicolumn{2}{c}{$0.13^{**}$} & \multicolumn{2}{c}{$0.16^{***}$} & \multicolumn{2}{c}{$0.15^{**}$} \\
 & \multicolumn{2}{c}{$(0.05)$} & \multicolumn{2}{c}{$(0.05)$} & \multicolumn{2}{c}{$(0.06)$} & \multicolumn{2}{c}{$(0.06)$} \\
$s$ & $2.04^{***}$ & $0.94^{***}$ & $2.04^{***}$ & $1.01^{***}$ & $2.05^{***}$ & $0.95^{***}$ & $2.05^{***}$ & $1.03^{***}$ \\
 & $(0.05)$ & $(0.10)$ & $(0.05)$ & $(0.09)$ & $(0.05)$ & $(0.10)$ & $(0.05)$ & $(0.09)$ \\
$s_{AB}$ & \multicolumn{2}{c}{} & \multicolumn{2}{c}{} & \multicolumn{2}{c}{$2.93^{***}$} & \multicolumn{2}{c}{$3.00^{***}$} \\
 & \multicolumn{2}{c}{} & \multicolumn{2}{c}{} & \multicolumn{2}{c}{$(0.12)$} & \multicolumn{2}{c}{$(0.12)$} \\
$\rho$ & \multicolumn{2}{c}{$-0.08$} & \multicolumn{2}{c}{$-0.07$} & \multicolumn{2}{c}{$-0.10^{*}$} & \multicolumn{2}{c}{$-0.09$} \\
 & \multicolumn{2}{c}{$(0.06)$} & \multicolumn{2}{c}{$(0.06)$} & \multicolumn{2}{c}{$(0.06)$} & \multicolumn{2}{c}{$(0.06)$} \\
 \hline
  N  & \multicolumn{2}{c}{71067}  &  \multicolumn{2}{c}{71067} & \multicolumn{2}{c}{71067} & \multicolumn{2}{c}{71067}  \\ 
\hline
\end{tabular}

\vspace{0.4cm}

\begin{minipage}{1\textwidth}
\footnotesize
\emph{Notes:} Columns 1 and 3 report the baseline Nigeria FGC--CMA estimates. Columns 2 and 4 report the corresponding robustness exercise that defines the CMA ban variable in terms of state-level ratifications of the child marriage ban. Columns 1--2 impose the specification without a separate joint-adoption spillover term $s_{AB}$, while Columns 3--4 allow for a distinct joint spillover parameter.   All specifications additionally control for the same indicators as in Table \ref{table:NG_FGC_first}. Standard errors are reported in parentheses. $^{***}p<0.01$, $^{**}p<0.05$, $^{*}p<0.1$.

\end{minipage}

\end{table}

\subsection{PGY Robustness} \label{sec:pgy_robustness}
 
 \begin{table}[H]
\centering
\renewcommand{\arraystretch}{0.7}
\small
\caption{Nigeria: Drought as additional exclusion restriction}
\begin{tabular}{lcccccccc}
\hline
& \multicolumn{2}{c}{Baseline} & \multicolumn{2}{c}{Robustness} & \multicolumn{2}{c}{Baseline} & \multicolumn{2}{c}{Robustness} \\
Variable & PGY & CMA & PGY & CMA & PGY & CMA & PGY & CMA \\
& (1) & (2) & (3) & (4) & (5) & (6) & (7) & (8) \\

\hline
CMA Ban &  & $-0.04^{***}$ &  & $-0.04^{***}$ &  & $-0.04^{***}$ &  & $-0.04^{***}$ \\
 &  & $(0.01)$ &  & $(0.01)$ &  & $(0.01)$ &  & $(0.01)$ \\
Drought &  &  &  & $0.00$ &  &  &  & $0.00$ \\
 &  &  &  & $(0.01)$ &  &  &  & $(0.01)$ \\
$\Gamma$ & \multicolumn{2}{c}{$-0.06^{*}$} & \multicolumn{2}{c}{$-0.06^{*}$} & \multicolumn{2}{c}{$0.05$} & \multicolumn{2}{c}{$0.05$} \\
 & \multicolumn{2}{c}{$(0.03)$} & \multicolumn{2}{c}{$(0.03)$} & \multicolumn{2}{c}{$(0.05)$} & \multicolumn{2}{c}{$(0.05)$} \\
$s$ & $2.27^{***}$ & $1.24^{***}$ & $2.27^{***}$ & $1.24^{***}$ & $2.45^{***}$ & $1.26^{***}$ & $2.45^{***}$ & $1.26^{***}$ \\
 & $(0.11)$ & $(0.07)$ & $(0.11)$ & $(0.07)$ & $(0.09)$ & $(0.07)$ & $(0.09)$ & $(0.07)$ \\
$s_{AB}$ & \multicolumn{2}{c}{} & \multicolumn{2}{c}{} & \multicolumn{2}{c}{$3.01^{***}$} & \multicolumn{2}{c}{$3.01^{***}$} \\
 & \multicolumn{2}{c}{} & \multicolumn{2}{c}{} & \multicolumn{2}{c}{$(0.11)$} & \multicolumn{2}{c}{$(0.11)$} \\
$\rho$ & \multicolumn{2}{c}{$0.11^{***}$} & \multicolumn{2}{c}{$0.11^{***}$} & \multicolumn{2}{c}{$0.07$} & \multicolumn{2}{c}{$0.07$} \\
 & \multicolumn{2}{c}{$(0.03)$} & \multicolumn{2}{c}{$(0.03)$} & \multicolumn{2}{c}{$(0.04)$} & \multicolumn{2}{c}{$(0.04)$} \\
\hline
 N  & \multicolumn{2}{c}{128785}  &  \multicolumn{2}{c}{128785} & \multicolumn{2}{c}{128785} & \multicolumn{2}{c}{128785}  \\ 
\hline
\end{tabular}

\vspace{0.4cm}

\begin{minipage}{1\textwidth}
\footnotesize
\emph{Notes:} Columns 1 and 3 report the baseline Nigeria PGY--CMA estimates. Columns 2 and 4 report the corresponding robustness exercise that includes dummies for drought exposure. Columns 1--2 impose the specification without a separate joint-adoption spillover term $s_{AB}$, while Columns 3--4 allow for a distinct joint spillover parameter.  All specifications additionally control for the same indicators as in Table \ref{table:NG_FGC_first}. Standard errors are reported in parentheses. $^{***}p<0.01$, $^{**}p<0.05$, $^{*}p<0.1$.

\end{minipage}

\end{table}

\vspace{-10pt}

\begin{table}[H]
\centering
\renewcommand{\arraystretch}{0.7}
\caption{Nigeria: Drought as only exclusion restriction}
\small
\begin{tabular}{lcccccccc}
\hline
& \multicolumn{2}{c}{Baseline} & \multicolumn{2}{c}{Robustness} & \multicolumn{2}{c}{Baseline} & \multicolumn{2}{c}{Robustness} \\
Variable & PGY & CMA & PGY & CMA & PGY & CMA & PGY & CMA \\
& (1) & (2) & (3) & (4) & (5) & (6) & (7) & (8) \\

\hline
CMA Ban &  & $-0.04^{***}$ &  &  &  & $-0.04^{***}$ &  &  \\
 &  & $(0.01)$ &  &  &  & $(0.01)$ &  &  \\
Drought &  &  &  & $0.01$ &  &  &  & $0.01$ \\
 &  &  &  & $(0.01)$ &  &  &  & $(0.01)$ \\
$\Gamma$ & \multicolumn{2}{c}{$-0.06^{*}$} & \multicolumn{2}{c}{$-0.06^{*}$} & \multicolumn{2}{c}{$0.05$} & \multicolumn{2}{c}{$0.05$} \\
 & \multicolumn{2}{c}{$(0.03)$} & \multicolumn{2}{c}{$(0.03)$} & \multicolumn{2}{c}{$(0.05)$} & \multicolumn{2}{c}{$(0.05)$} \\
$s$ & $2.27^{***}$ & $1.24^{***}$ & $2.27^{***}$ & $1.35^{***}$ & $2.45^{***}$ & $1.26^{***}$ & $2.46^{***}$ & $1.37^{***}$ \\
 & $(0.11)$ & $(0.07)$ & $(0.11)$ & $(0.06)$ & $(0.09)$ & $(0.07)$ & $(0.09)$ & $(0.06)$ \\
$s_{AB}$ & \multicolumn{2}{c}{} & \multicolumn{2}{c}{} & \multicolumn{2}{c}{$3.01^{***}$} & \multicolumn{2}{c}{$3.12^{***}$} \\
 & \multicolumn{2}{c}{} & \multicolumn{2}{c}{} & \multicolumn{2}{c}{$(0.11)$} & \multicolumn{2}{c}{$(0.10)$} \\
$\rho$ & \multicolumn{2}{c}{$0.11^{***}$} & \multicolumn{2}{c}{$0.11^{***}$} & \multicolumn{2}{c}{$0.07$} & \multicolumn{2}{c}{$0.07$} \\
 & \multicolumn{2}{c}{$(0.03)$} & \multicolumn{2}{c}{$(0.03)$} & \multicolumn{2}{c}{$(0.04)$} & \multicolumn{2}{c}{$(0.05)$} \\

\hline
 N  & \multicolumn{2}{c}{128785}  &  \multicolumn{2}{c}{128785} & \multicolumn{2}{c}{128785} & \multicolumn{2}{c}{128785}  \\ 
\hline
\end{tabular}

\vspace{0.4cm}

\begin{minipage}{1\textwidth}
\footnotesize
\emph{Notes:} Columns 1 and 3 report the baseline Nigeria PGY--CMA estimates. Columns 2 and 4 report the corresponding robustness exercise that includes dummies for drought exposure while removing the CMA ban dummy. Columns 1--2 impose the specification without a separate joint-adoption spillover term $s_{AB}$, while Columns 3--4 allow for a distinct joint spillover parameter.  All specifications additionally control for the same indicators as in Table \ref{table:NG_FGC_first}. Standard errors are reported in parentheses. $^{***}p<0.01$, $^{**}p<0.05$, $^{*}p<0.1$.

\end{minipage}

\end{table}

\begin{table}[H]
\centering
\renewcommand{\arraystretch}{0.7}
\small
\caption{Nigeria: State-level CMA Ban}
\begin{tabular}{lcccccccc}
\hline
& \multicolumn{2}{c}{Baseline} & \multicolumn{2}{c}{Robustness} & \multicolumn{2}{c}{Baseline} & \multicolumn{2}{c}{Robustness} \\
Variable & PGY & CMA & PGY & CMA & PGY & CMA & PGY & CMA \\
& (1) & (2) & (3) & (4) & (5) & (6) & (7) & (8) \\

\hline
CMA Ban &  & $-0.04^{***}$ &  & $-0.11^{***}$ &  & $-0.04^{***}$ &  & $-0.10^{***}$ \\
 &  & $(0.01)$ &  & $(0.01)$ &  & $(0.01)$ &  & $(0.01)$ \\
$\Gamma$ & \multicolumn{2}{c}{$-0.06^{*}$} & \multicolumn{2}{c}{$-0.05^{*}$} & \multicolumn{2}{c}{$0.05$} & \multicolumn{2}{c}{$0.07$} \\
 & \multicolumn{2}{c}{$(0.03)$} & \multicolumn{2}{c}{$(0.03)$} & \multicolumn{2}{c}{$(0.05)$} & \multicolumn{2}{c}{$(0.04)$} \\
$s$ & $2.27^{***}$ & $1.24^{***}$ & $2.27^{***}$ & $1.21^{***}$ & $2.45^{***}$ & $1.26^{***}$ & $2.44^{***}$ & $1.23^{***}$ \\
 & $(0.11)$ & $(0.07)$ & $(0.11)$ & $(0.06)$ & $(0.09)$ & $(0.07)$ & $(0.09)$ & $(0.06)$ \\
$s_{AB}$ & \multicolumn{2}{c}{} & \multicolumn{2}{c}{} & \multicolumn{2}{c}{$3.01^{***}$} & \multicolumn{2}{c}{$2.98^{***}$} \\
 & \multicolumn{2}{c}{} & \multicolumn{2}{c}{} & \multicolumn{2}{c}{$(0.11)$} & \multicolumn{2}{c}{$(0.10)$} \\
$\rho$ & \multicolumn{2}{c}{$0.11^{***}$} & \multicolumn{2}{c}{$0.09^{***}$} & \multicolumn{2}{c}{$0.07$} & \multicolumn{2}{c}{$0.05$} \\
 & \multicolumn{2}{c}{$(0.03)$} & \multicolumn{2}{c}{$(0.03)$} & \multicolumn{2}{c}{$(0.04)$} & \multicolumn{2}{c}{$(0.04)$} \\
\hline
 N &  \multicolumn{2}{c}{128785}   &  \multicolumn{2}{c}{128785}    & \multicolumn{2}{c}{128785}   &  \multicolumn{2}{c}{128785}  \\ 
  \hline 
\end{tabular}

\vspace{0.4cm}

\begin{minipage}{1\textwidth}
\footnotesize
\emph{Notes:} Columns 1 and 3 report the baseline Nigeria PGY--CMA estimates. Columns 2 and 4 report the corresponding robustness exercise that defines the CMA ban variable in terms of state-level ratifications of the child marriage ban. Columns 1--2 impose the specification without a separate joint-adoption spillover term $s_{AB}$, while Columns 3--4 allow for a distinct joint spillover parameter.  All specifications additionally control for the same indicators as in Table \ref{table:NG_FGC_first}. Standard errors are reported in parentheses. $^{***}p<0.01$, $^{**}p<0.05$, $^{*}p<0.1$.

\end{minipage}
\end{table}

\begin{table}[H]
\centering
\small
\caption{Sierra Leone: Drought as additional exclusion restriction}
\renewcommand{\arraystretch}{0.7}
\begin{tabular}{lcccccccc}
\hline
& \multicolumn{2}{c}{Baseline} & \multicolumn{2}{c}{Robustness} & \multicolumn{2}{c}{Baseline} & \multicolumn{2}{c}{Robustness} \\
Variable & PGY & CMA & PGY & CMA & PGY & CMA & PGY & CMA \\
& (1) & (2) & (3) & (4) & (5) & (6) & (7) & (8) \\

\hline
CMA Ban &  & $-0.04^{*}$ &  & $-0.04^{*}$ &  & $-0.04^{**}$ &  & $-0.04^{**}$ \\
 &  & $(0.02)$ &  & $(0.02)$ &  & $(0.02)$ &  & $(0.02)$ \\
Drought &  &  &  & $0.00$ &  &  &  & $0.00$ \\
 &  &  &  & $(0.01)$ &  &  &  & $(0.01)$ \\
$\Gamma$ & \multicolumn{2}{c}{$-0.17^{**}$} & \multicolumn{2}{c}{$-0.17^{**}$} & \multicolumn{2}{c}{$-0.07$} & \multicolumn{2}{c}{$-0.07$} \\
 & \multicolumn{2}{c}{$(0.07)$} & \multicolumn{2}{c}{$(0.06)$} & \multicolumn{2}{c}{$(0.06)$} & \multicolumn{2}{c}{$(0.06)$} \\
$s$ & $1.30^{***}$ & $0.55^{***}$ & $1.30^{***}$ & $0.55^{***}$ & $1.45^{***}$ & $0.56^{***}$ & $1.45^{***}$ & $0.56^{***}$ \\
 & $(0.21)$ & $(0.13)$ & $(0.21)$ & $(0.13)$ & $(0.22)$ & $(0.13)$ & $(0.22)$ & $(0.14)$ \\
$s_{AB}$ & \multicolumn{2}{c}{} & \multicolumn{2}{c}{} & \multicolumn{2}{c}{$1.44^{***}$} & \multicolumn{2}{c}{$1.44^{***}$} \\
 & \multicolumn{2}{c}{} & \multicolumn{2}{c}{} & \multicolumn{2}{c}{$(0.29)$} & \multicolumn{2}{c}{$(0.29)$} \\
$\rho$ & \multicolumn{2}{c}{$0.24^{***}$} & \multicolumn{2}{c}{$0.25^{***}$} & \multicolumn{2}{c}{$0.19^{***}$} & \multicolumn{2}{c}{$0.19^{***}$} \\
 & \multicolumn{2}{c}{$(0.07)$} & \multicolumn{2}{c}{$(0.07)$} & \multicolumn{2}{c}{$(0.06)$} & \multicolumn{2}{c}{$(0.06)$} \\

\hline
 N  & \multicolumn{2}{c}{33591}  &  \multicolumn{2}{c}{33591} & \multicolumn{2}{c}{33591} & \multicolumn{2}{c}{33591}  \\ 
 \hline
\end{tabular}

\vspace{0.4cm}

\begin{minipage}{1\textwidth}
\footnotesize
\emph{Notes:} Columns 1 and 3 report the baseline Sierra Leone Leone PGY--CMA estimates. Columns 2 and 4 report the corresponding robustness exercise that includes dummies for drought exposure. Columns 1--2 impose the specification without a separate joint-adoption spillover term $s_{AB}$, while Columns 3--4 allow for a distinct joint spillover parameter.   All specifications additionally control for the same indicators as in Table \ref{table:FGC_SL_first}. Standard errors are reported in parentheses. $^{***}p<0.01$, $^{**}p<0.05$, $^{*}p<0.1$.

\end{minipage}
\end{table}

\end{document}